\newif\ifarxiv
\newif\ifonecolumn
\newcolumntype{?}{!{\vrule width 1.1pt}}
\DeclareMathOperator*{\argmax}{arg\,max}
\newtheorem{proposition}[]{Proposition}
\theoremstyle{remark}
\definecolor{green}{rgb}{0.0, 0.5, 0.0} 
\newcolumntype{?}{!{\vrule width 1pt}}
\DeclareAcronym{snr}{
  short = SNR,
  long = signal-to-noise ratio,}
  \DeclareAcronym{pdf}{
	short = PDF,
	long = probability density function,}
  \DeclareAcronym{sar}{
  short = SAR,
  long = synthetic aperture radar,}
\DeclareAcronym{insar}{
  short = InSAR,
  long = interferometric SAR,}
\DeclareAcronym{gmti}{
	short = GMTI,
	long =ground moving target indication,}
\DeclareAcronym{ao}{
	short = {AO},
	long = {alternating optimization},
	long-plural-form = {alternating optimizations}
}
\DeclareAcronym{mimo}{
	short = {MIMO},
	long = {multiple-input multiple-output}
}
\DeclareAcronym{uav}{
        short = {UAV},
        long = {unmanned aerial vehicle},
        long-plural-form = {unmanned aerial vehicles}
}
\DeclareAcronym{fdma}{
	short = {FDMA},
	long = {frequency-division multiple-access},
}
\DeclareAcronym{1d}{
	short = {1D},
	long = {one-dimensional},
}
\DeclareAcronym{islr}{
	short = {ISLR},
	long = {integrated sidelobe ratio},
}
\DeclareAcronym{pslr}{
	short = {PSLR},
	long = {peak sidelobe ratio},
}
\DeclareAcronym{3d}{
        short = {3D},
        long = {three-dimensional},
}
\DeclareAcronym{pso}{
	short = {PSO},
	long = {particle swarm optimization},
}
\DeclareAcronym{2d}{
        short = {2D},
        long = {two-dimensional},
}
\DeclareAcronym{dem}{
        short = {DEM},
        long = {digital elevation model},
}
\DeclareAcronym{gs}{
        short = {GS},
        long = {ground station},
        long-plural-form = {ground stations}
}
\DeclareAcronym{los}{
        short = {LOS},
        long = {line-of-sight},
}
\DeclareAcronym{sca}{
        short = {SCA},
        long = {successive convex approximation},
}
\DeclareAcronym{nesz}{
        short = {NESZ},
        long = {noise equivalent sigma zero},
}
\DeclareAcronym{wrt}{
        short = {w.r.t.},
        long = {with respect to },
}
\DeclareAcronym{rhs}{
        short = {r.h.s},
        long = {right-hand side},
        long-plural-form ={right-hand sides},
}
\DeclareAcronym{lhs}{
        short = {l.h.s},
        long = {left-hand side },
}
\DeclareAcronym{bcd}{
	short = {BCD},
	long = {block coordinate descent},
}
\DeclareAcronym{hoa}{
        short = {HoA},
        long = {height of ambiguity},
}
\definecolor{maxvalue}{RGB}{181, 243, 167}
\definecolor{midvalue1}{RGB}{245, 254, 185}
\definecolor{midvalue2}{RGB}{254, 220, 185 }
\definecolor{midvalue3}{RGB}{254, 207, 185 }
\definecolor{minvalue}{RGB}{249, 181, 181 }
\begin{document}
\title{UAV Formation and Resource {Allocation} Optimization for Communication-{Assisted} 3D  InSAR Sensing\\
\thanks{This paper {is} presented in part at the IEEE International Conference on Communication, Denver, USA, Jun. 2024  \cite{conf2}. This work was supported in part by the Deutsche Forschungsgemeinschaft (DFG, German Research Foundation) GRK 2680 – Project-ID 437847244.}
}
\author{\IEEEauthorblockN{Mohamed-Amine~Lahmeri\IEEEauthorrefmark{1}, Víctor Mustieles-Pérez\IEEEauthorrefmark{1}\IEEEauthorrefmark{2}, Martin Vossiek\IEEEauthorrefmark{1}, Gerhard Krieger\IEEEauthorrefmark{1}\IEEEauthorrefmark{2}, and
Robert Schober\IEEEauthorrefmark{1}}\\
\IEEEauthorblockA{\IEEEauthorrefmark{1}Friedrich-Alexander-Universit\"at Erlangen-N\"urnberg {(FAU)}, Germany\\
\IEEEauthorrefmark{2}German Aerospace Center (DLR),  Microwaves and Radar Institute, Weßling, Germany\\
\vspace{-8mm}}}
\maketitle
\begin{abstract} 
	\sloppy
	In this paper, we investigate joint unmanned aerial vehicle (UAV) formation and resource allocation optimization for communication-assisted three-dimensional (3D) synthetic aperture radar (SAR) sensing. We consider a system consisting of two UAVs that perform bistatic interferometric SAR (InSAR) sensing {for generation of}  a digital elevation model (DEM) and transmit the radar raw data to a ground station (GS) in {real time}. To account for practical 3D sensing requirements, we use non-conventional sensing performance metrics, such as the interferometric coherence, i.e., the local cross-correlation between the two co-registered UAV SAR images, the   point-to-point InSAR relative height error, and the height of ambiguity, which together characterize the accuracy with which the InSAR system can  {determine} {the height of ground targets}. Our objective is to jointly optimize the UAV formation, speed, and communication power allocation {for maximization of} the InSAR coverage while satisfying energy, communication, and InSAR-specific sensing constraints. To solve the formulated non-smooth and non-convex optimization problem, we  divide {it} into three sub-problems and propose a novel alternating optimization (AO) framework that is based on classical, monotonic, and stochastic optimization techniques. The effectiveness of the proposed algorithm is validated through extensive simulations and compared to several benchmark schemes. {Furthermore, our simulation results} highlight the impact of the UAV-GS  communication link on the flying formation and sensing performance and  show that the DEM of a large area of interest can be mapped and offloaded to ground successfully, while the ground {topography} can be estimated with centimeter-scale precision. Lastly, we demonstrate that a low UAV velocity is preferable for InSAR applications as it leads to better sensing accuracy.
\end{abstract}

\section{Introduction}
The widespread use of \acp{uav} has revolutionized modern technology, impacting fields like remote sensing, communication, and disaster  monitoring  \cite{uav_survey}. Specifically, \acp{uav} excel in remote sensing applications using a variety of technologies, such as cameras, lidars, and radars. Among these technologies, radar sensing has attracted particular attention due to its capacity to operate day and night under all weather conditions. {Radar sensing applications can be classified into three distinct categories based on the dimensionality of the extracted information}. Firstly, \ac{1d} sensing includes applications such as ranging and detection and necessitates at least one radar antenna. Secondly, \ac{2d} sensing, also known as radar imaging, can be realized using, e.g., \ac{mimo} radar systems employing multiple antennas \cite{mimo} or by  moving a {single} radar antenna {or a \ac{mimo} array} along a trajectory to create a \ac{sar} {or \ac{mimo}-\ac{sar}} system, and can provide a {high-resolution} image of the illuminated area \cite{tutorial}. {Lastly, \ac{3d} sensing provides an additional dimension besides the {\ac{2d}} spatial location of a target, such as its altitude above ground. Here, an interesting remote sensing technique that can be used to generate accurate \ac{3d} \acp{dem} is across-track \ac{insar} \cite{tutorial}. In the following subsection, we discuss \ac{1d} and \ac{2d} \ac{uav}-based sensing applications, focusing on the existing trajectory and resource optimization frameworks. Then, in Subsection \ref{Sec:Introduction_B}, we address \ac{uav}-based \ac{3d} sensing applications, focusing on \ac{insar} applications, which are the main topic of interest in this paper.}\par
{\subsection{Existing  Optimization Frameworks for \ac{uav}-based \ac{1d} and \ac{2d} Sensing}}
\label{Sec:Introduction_A}

{For \ac{1d} sensing, the joint optimization of the trajectory and beamforming of \ac{uav}-based integrated sensing and communication systems has been studied to minimize the average power consumption \cite{khalili} and to maximize the achievable communication rate for both single-user \cite{beampattern2} and multi-user \cite{beampattern} scenarios. For \ac{2d} sensing, trajectory optimization for \ac{uav}-based \ac{sar} was investigated in \cite{response1,sun1,sun2,sun3,2D_sensing,2D_sensing_BS,amine1,amine2}. In \cite{sun1,sun2,sun3}, the backscattered signal was received  by  a \ac{uav}, while the target area was illuminated by a geosynchronous satellite. In \cite{response1,2D_sensing,amine1,amine2}, \acp{uav} were used as transmitters and receivers and, in \cite{2D_sensing_BS}, a \ac{uav} was used as receiver, while the target area was illuminated by a communication base station.} For these systems, seamless \ac{uav}-to-ground connectivity enables the timely  transfer of essential information to {the} ground for processing. In fact, the authors in \cite{amine1,amine2,sun3} assumed a \ac{los} link between the \ac{uav}-\ac{sar} system and a \ac{gs}, and, in \cite{amine1,amine2}, real-time offloading to the \ac{gs} was considered, while in \cite{sun3}, the authors investigated transmitting the total radar raw data to {the} ground. {In general, trajectory and resource allocation optimization for  \ac{uav}-based sensing  has been extensively studied in the literature. However, most works focus on \ac{1d} or \ac{2d} sensing applications. Although a  performance analysis for \ac{3d} sensing applications {was} provided in \cite{3D_sensing,snr_equation,tradeoff}, the joint {\ac{uav} formation} and resource allocation {optimization} for such applications has not been studied yet.}\par
{\subsection{\ac{uav}-based \ac{3d} Sensing via \ac{insar}\label{Sec:Introduction_B}}
Across-track \ac{insar} is an interesting \ac{3d} remote sensing technique that requires the use of more than one radar antenna. At least two radar antennas are moved to  illuminate a given target area from different angles, such that at least two \ac{sar} images can be formed.} All possible image pairs are first co-registered \cite{insar_introduction}, i.e., aligned such that each ground point corresponds to the same pixel in both {images}, and based on the phase difference, an interferogram is created to extract information about the topography of the target area  \cite{insar_introduction,coherence1}.
\ac{insar}-based \ac{3d} sensing is fundamentally different from \ac{1d} and \ac{2d} sensing. In particular, to assess the performance of \ac{1d} sensing, conventional metrics, such as \ac{snr} \cite{khalili}, beam pattern gain \cite{beampattern,beampattern2}, and probability of detection and false alarm \cite{cooperative,false_alarm_rate} can be {adopted}. For \ac{2d} sensing, image performance metrics can be  used, including \ac{sar} \ac{snr} \cite{amine1}, spatial resolution \cite{sun1}, and \ac{islr} and \ac{pslr} \cite{pslr}. However, these performance metrics are not sufficient for \ac{3d} sensing applications, such as generating an accurate \ac{dem} based on across-track interferometry. In fact, the key performance metric for estimating interferometric performance  is  coherence, which is a function of the correlation between {the} pair of  co-registered \ac{sar} images \cite{coherence1}. Additional relevant performance metrics are the \ac{hoa}, which is a proportionality constant between the interferometric phase and the terrain height and is thus related to the sensitivity of the radar to the ground topography \cite{coherence1}, and the relative height accuracy, which represents the accuracy with which the altitude of ground objects {(i.e., the ground topography)} can be  estimated.  {In this context, an interesting trade-off arises; while a large inter-\ac{uav} separation distance, i.e., a large interferometric baseline, leads to  a small \ac{hoa}, which in general improves the height accuracy of  \acp{dem}, it also leads to a degradation of the  image coherence \cite{tradeoff}. Furthermore, \ac{3d}  {\ac{insar}} sensing is different from cooperative sensing \cite{cooperative,collaborative_sensing_communication} {for} two main reasons. In particular, for \ac{3d} \ac{insar} (i) the coverage  is defined by the region where the ground footprints of the two radar antennas overlap and (ii) the sensing performance is highly { dependent on} the \ac{uav} formation. Moreover, compared to \ac{1d} and \ac{2d} sensing, \ac{3d} sensing is more challenging due to the use {of} multiple sensors and the need for {considering} the aforementioned dedicated performance metrics. Specifically, relying solely on beam pattern design, \ac{snr}, or resolution is not sufficient for \ac{3d} applications. Therefore, based on the above discussion, existing results for \ac{uav}-based \ac{1d} and \ac{2d}  sensing are not applicable to interferometric \ac{3d} sensing.}	\par
{\subsection{Main Contributions}}In this paper, we consider a system of two \acp{uav} that perform \ac{sar} sensing of a given area from different angles, such that the difference of the sensors' {phases} allows for estimating the height of ground objects based on \ac{insar} techniques. For this system, we maximize the sensing coverage  by jointly optimizing the {formation, velocity,} and communication transmit power of both \acp{uav} while satisfying practical energy, communication, and sensing constraints. Our main contributions can be summarized as follows:
\begin{itemize}
\item  {Different from existing works targeting \ac{1d} and \ac{2d} sensing applications, we consider \ac{insar}-specific sensing performance metrics, such as \ac{insar} coverage, coherence, \ac{hoa}, and relative height accuracy, that can capture the quality of \acp{dem}.}
\item {{A} joint \ac{uav} formation and resource allocation optimization problem is formulated to maximize the sensing coverage while satisfying practical energy, communication, and sensing constraints.}
\item A novel \ac{ao}-based algorithm, which incorporates classical, monotonic, and stochastic optimization { techniques}, is proposed to solve the  formulated challenging {non-smooth and non-convex} optimization problem.
\item { Our simulations show the superior performance of the proposed solution compared to {a} conventional \ac{ao} algorithm and other benchmark schemes and unveil the  impact of the {\ac{insar}-specific} sensing requirements on communication and vice versa.}
\end{itemize}
We note that this article extends a corresponding conference version \cite{conf2}. In {contrast to this paper, in} \cite{conf2}, the velocity of both the master and slave \acp{uav} was fixed and the radar look angle of the slave drone was {also} not optimized. Furthermore, the minimum sensing data rate was assumed to be fixed, i.e., independent of the imageable area. Lastly, the relative point-to-point height accuracy was not accounted for in \cite{conf2}.\par {The remainder of this paper is organized as follows. In Section II, we present the considered system. In Section III, the formation and resource allocation problem is formulated, and the proposed solution is presented in  Section IV. {In Section V,} the performance of the proposed {scheme} is assessed via simulations, and {finally,} conclusions are drawn in {Section VI}.}\par
{\em Notations}: 
In this paper, lower-case letters $x$ refer to scalar {variables}, while boldface lower-case letters $\mathbf{x}$ denote vectors.  $\{a, ..., b\}$ denotes the set of all integers between $a$ and $b$ and $\emptyset$ denotes an empty set. $|\cdot|$ denotes the absolute value operator. $\mathbb{R}^{N}$ represents the set of all $N$-dimensional vectors with real-valued entries. For a vector {$\mathbf{x}=(x_1,...,x_N)^T\in\mathbb{R}^{N}$}, $||\mathbf{x}||_2$ denotes the Euclidean norm, whereas  $\mathbf{x}^T$ stands for the  transpose of $\mathbf{x}$.  For real-valued multivariate functions $f(\mathbf{x})$ and $g(\mathbf{x})$, $\frac{\partial f}{\partial \mathbf{x}}(\mathbf{a})=\Big(\frac{\partial f}{\partial x_1}(\mathbf{a}),...,\frac{\partial f}{\partial x_N}(\mathbf{a})\Big)^T$ denotes the partial derivative of  $f$ \ac{wrt} $\mathbf{x}$ evaluated for an arbitrary vector $\mathbf{a}$ and $f(x) \circledast g(x)$ refers to  the convolution operation. For real numbers $a$ and $b$, $\max(a,b)$ and $\min(a,b)$ stand for the maximum and minimum of $a$ and $b$, respectively. For a scalar $x \in \mathbb{R}$, $[x]^+$ refers to $\max(0,x)$. The notation $X_n(\mathbf{x}_1,...,\mathbf{x}_i)$ highlights that $X$ depends on optimization variables $( \mathbf{x}_1,...,\mathbf{x}_i)$ and time slot $n$. {In Table \ref{tab:variables}, we present the most important variables used in the paper.} 
\begin{table}[]
	\centering
	\caption{List of variables.}
	\label{tab:variables}
	\setlength\tabcolsep{1.5pt} 
	\begin{tabular}{|c|c|}
		\hline
		\rowcolor[HTML]{EFEFEF} 
		 Variable &  Definition \\ \hline
		 $\mathbf{p}_t[n]$ &  Coordinates of the reference ground target \\ \hline
		 $\mathbf{v}$ &  \ac{uav} velocity vector in \ac{3d} space \\ \hline
		 $v_y[n]$ &  \ac{uav} velocity along $y$-axis in time slot $n$ \\ \hline
		 $\mathbf{q}_i[n]$ &  \ac{3d} position of UAV $U_i, i\in \{1,2\},$ in time slot $n$ \\ \hline
		 $\mathbf{q}_i$ &  \ac{2d} position of UAV $U_i, i\in \{1,2\},$ in $xz$-plane \\ \hline
		 $b$ and $b_{\bot}$ &  Interferometric baseline and its perpendicular component \\ \hline
		 $\theta_2$ &  Radar look angle of  slave \ac{uav} $U_2$ \\ \hline
		 $\alpha$ &  Angle between the baseline and the horizontal plane \\ \hline
		 $r_i$ &  Radar slant range of UAV $U_i, i\in \{1,2\},$ \ac{wrt} $\mathbf{p}_t$ \\ \hline
		 $S$ &  Common radar swath width between $U_1$ and $U_2$ \\ \hline
		 $C_N$ &  Total \ac{insar} coverage \\ \hline
		 $\mathrm{SNR}_{i,n}$ &  \ac{snr} achieved by UAV $U_i, i\in \{1,2\},$ in time slot $n$ \\ \hline
		 $\gamma_{\mathrm{SNR},n}$ &  \ac{snr} decorrelation in time slot $n$ \\ \hline
		 $\rm \gamma_{\rm Rg}$ &  Baseline decorrelation \\ \hline
		 $\mathrm{\gamma}_n$ &  Total \ac{insar} coherence in time slot $n$\\ \hline
		 $h_{\rm amb}$ &  Height of ambiguity \\ \hline
		 $\Delta \phi_{90\%}$ &  90\% percentile of the random phase error \\ \hline
		 $\Delta h_{90\%}$ &  Point-to-point relative height error \\ \hline
		 $P_{\mathrm{com},i}[n]$ &  Instantaneous communication power consumed by $U_i$ \\ \hline
		 $d_{i,n}$ &  Distance $U_i, i\in \{1,2\},$ to \ac{gs} in time slot $n$ \\ \hline
		 $R_{i,n}$ &  Instantaneous throughput from $U_i, i\in \{1,2\},$ to \ac{gs} \\ \hline
		 $R_{\mathrm{min},i}$ &  \ac{sar} sensing data rate achieved by $U_i, i\in \{1,2\}.$ \\ \hline
		 $P_{\mathrm{prop},n}$ &  \ac{uav} propulsion power in time slot $n$ \\ \hline
		 $E_{i,N}$ &  Total energy consumed by $U_i, i\in \{1,2\},$ in time slot $n$ \\ \hline
	\end{tabular}
\end{table}
\section{System Model} \label{Sec:SystemModel}
\begin{figure}
	\centering
	\ifonecolumn
    \includegraphics[width=3.5in]{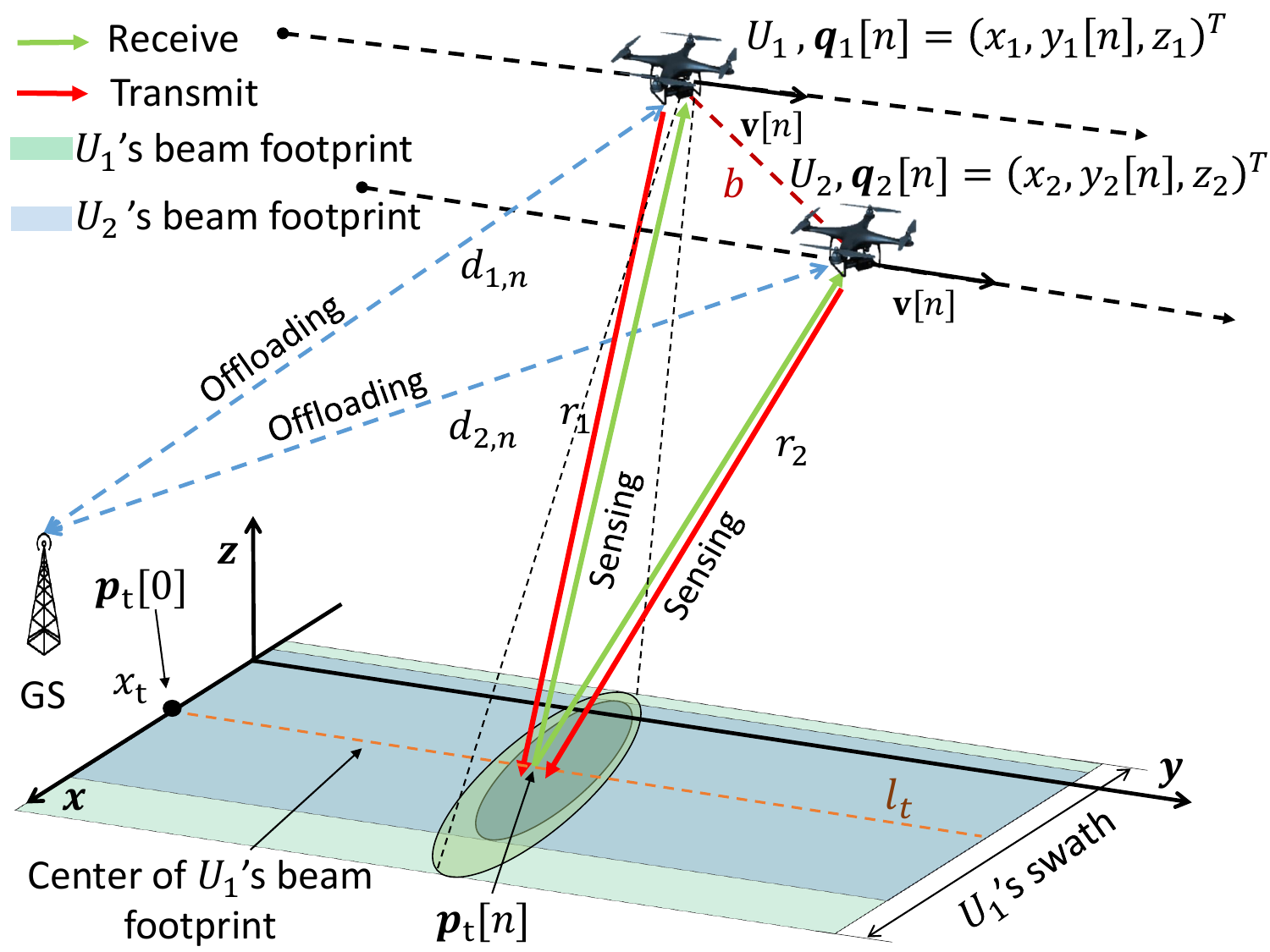}
    \else
    \includegraphics[width=0.9\columnwidth]{figures/SystemModel.pdf}
    \fi
    \caption{\ac{insar} sensing system with two \ac{uav} \ac{sar} sensors and a GS for real-time data offloading.}
    \label{fig:system-model}
\end{figure}
 {We consider two rotary-wing\footnote{The results presented in this paper can be  extended to fixed-wing \acp{uav} by considering a corresponding propulsion power model.} \acp{uav}, denoted by $U_1$ and $U_2$, that perform \ac{insar} sensing over a given area of interest. The two \ac{uav}-\ac{sar} systems move along separate linear trajectories above the target area, and each \ac{uav} captures a different \ac{sar} image of the same ground area. The two \ac{sar} images obtained are then combined using interferometry techniques, where the phase difference between both images is exploited to generate an accurate \ac{3d} \ac{dem}.}   We adopt a \ac{3d} coordinate system, where the $x$-axis represents the ground range direction, the $y$-axis represents the azimuth direction, and the $z$-axis defines the flying altitude of the drones above ground, see Figure  \ref{fig:system-model}. We discretize the total mission time $T$ into $N$ uniform time slots of duration $\delta_t$ such that  $T=N \cdot \delta_t$.
 Both drones follow a linear trajectory as the stripmap \ac{sar} imaging mode, typically used in \ac{insar} systems, is employed \cite{fixed_heading}. Therefore, the radar coverage along the $x$-axis, referred to as  swath, is centered \ac{wrt} a line {$l_{t}$} that is parallel to the $y$-axis and passes, {in time slot $n \in \{1, ..., N\}$}, through reference point $\mathbf{p}_t[n]=(x_t,y[n],0)^T \in \mathbb{R}^3$, see Figure \ref{fig:system-model}. Without loss of generality, $U_1$ is the farthest \ac{uav} from $ \mathbf{p}_t[n], \forall n,$ and is referred to as the master drone, whereas  $U_2$ is the slave drone.  Moreover, we {consider} across-track interferometry \cite{insar_introduction}, where both drones transmit and receive echoes and are located in the same  $xz$-plane, also referred to as the across-track plane \cite{insar_introduction}. They fly with the same velocity $\mathbf{v}_y = (v_y[1], ...,v_y[N])^T\in\mathbb{R}^{N}$, such that in time slot {$n$}, the instantaneous velocity vector is given by $\mathbf{v}[n]=(0,v_y[n],0)^T\in\mathbb{R}^3, \forall n$,  \cite{snr_equation}. The location of $U_i$ in time slot $n$,  $ i \in \{1,2\}$,  is denoted by $\mathbf{q}_i[n]=(x_i,y[n],z_i)^T$, where the $y$-axis position vector $\mathbf{y}=(y[1]=0,y[2], ..., y[N])^T\in\mathbb{R}^{N}$ is given by: 
 \begin{align}
 y[n+1]=y[n]+v_y[n]\delta_t, \forall n \in \{1,N-1\}.
 \end{align}
Hereinafter, for ease of notation, we use $\mathbf{q}_i=(x_i,z_i)^T \in \mathbb{R}^2, \forall i \in \{ 1,2\}$, to denote the position of $U_i$ in the across-track plane (i.e., $xz-$plane), {see Figure \ref{fig:system-model-2},} whereas $\mathbf{q}_i[n]$ refers to the full \ac{3d} position of \ac{uav}  $U_i$ in time slot $n$. Furthermore, the interferometric baseline, which refers to the distance between the two \ac{insar} sensors, is given by:
\begin{align}
   b(\mathbf{q}_1,\mathbf{q}_2) = ||\mathbf{q}_2 -\mathbf{q}_1||_2.
\end{align}
\begin{figure}
	\centering
	\ifonecolumn
	\includegraphics[width=4in]{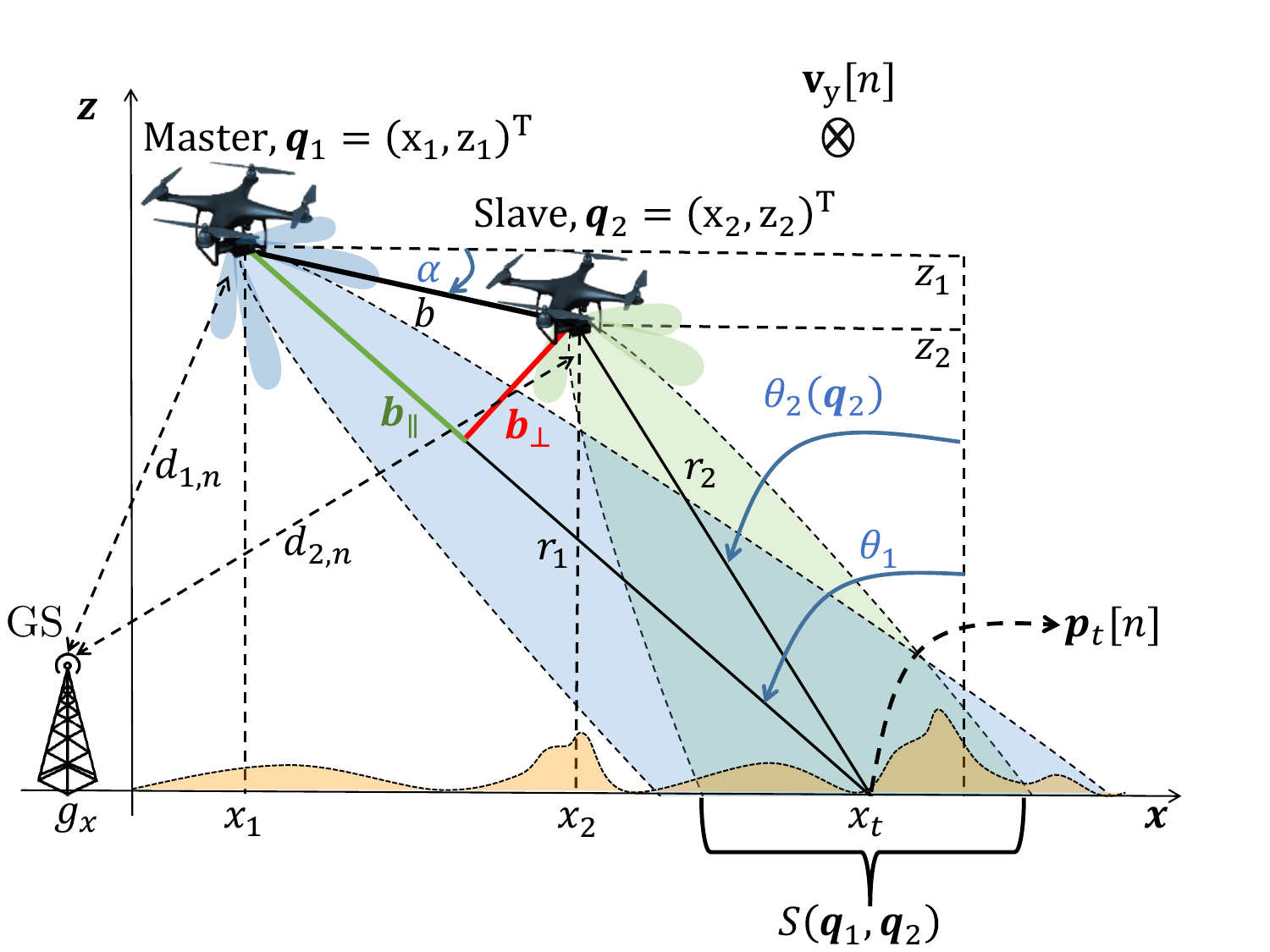}
	\else 
	\includegraphics[width=0.9\columnwidth]{figures/SystemModel2.pdf}
	\fi
	\caption{Illustration of the bistatic \ac{uav} formation, denoted by $\{\mathbf{q}_1,\mathbf{q}_2\}$, in the across-track plane (i.e., $xz-$plane).}
	\label{fig:system-model-2}
\end{figure}
 The perpendicular baseline, denoted by $b_{\bot}$, is the magnitude of the projection of the baseline vector perpendicular to $U_1$'s \ac{los} { to $\mathbf{p}_t[n]$} and is given by:
\begin{equation}
	b_{\bot}(\mathbf{q}_1,\mathbf{q}_2)=
	b(\mathbf{q}_1,\mathbf{q}_2)  \cos\Big(\theta_1- \alpha(\mathbf{q}_1,\mathbf{q}_2)\Big),
\end{equation} 
where $\theta_1$ is the fixed angle that $U_1$'s \ac{los} to $\mathbf{p}_t[n]$ has with the vertical, i.e., the look angle of the master \ac{uav}, and $\alpha(\mathbf{q}_1,\mathbf{q}_2)$ is the angle between the interferometric baseline and the horizontal plane, see Figure \ref{fig:system-model-2}. 
\subsection{Bistatic \ac{insar} Coverage}
Let $r_1$ and $r_2$ denote the slant range of the radars of $U_1$ and $U_2$ \ac{wrt} $ \mathbf{p}_t[n]$,  respectively. These slant ranges are {independent of time and} given by:
\begin{align}
	r_i(\mathbf{q}_i)= \sqrt{ (x_i -x_t)^2 +  z_i^2  }, \forall i \in \{1,2\}.
\end{align} To maximize the coverage, the master  \ac{uav} is positioned in the across-track plane such that its \ac{sar} antenna beam footprint on the ground is centered around {$\mathbf{p}_t[n]$}. Furthermore, without loss of generality, we assume  $r_2(\mathbf{q}_2)\leq    r_1(\mathbf{q}_1)$ and impose the following condition: 
\begin{align}
	x_1=x_t - z_1 \tan(\theta_1). \label{eq:master_drone_placement}
\end{align} 
As the look angle of the master platform is fixed to $\theta_1$, we propose to adjust the look angle of the slave \ac{uav}, denoted by $\theta_2(\mathbf{q}_2)$, such that its beam footprint is centered around  {$\mathbf{p}_t[n]$}, i.e., $\theta_2(\mathbf{q}_2)= \arctan\left(\frac{x_2-x_t}{z_2}\right)$. The \ac{insar} coverage is limited to the area where the beam footprints of $U_1$ and $U_2$ overlap, see Figure \ref{fig:system-model-2}. In this work, we focus on a single scan (i.e., single swath) for the two drones, where the usable swath can be obtained as follows:
\ifonecolumn 
\begin{align}\label{eq:swath_width}
  S(\mathbf{q}_1,\mathbf{q}_2 ) = \Big[ \min\left(S_{\rm far}(\mathbf{q}_1),S_{\rm far}(\mathbf{q}_2)\right)-    
     \max\left(S_{\rm near}(\mathbf{q}_1) ,S_{\rm near}(\mathbf{q}_2) \right)\Big]^+,
\end{align}
\else 
\begin{equation}\label{eq:swath_width}
	S(\mathbf{q}_1,\mathbf{q}_2 ) = \Big[ \min_{i \in \{1,2\}}\left\{S_{\rm far}(\mathbf{q}_i)\right\}-  \max_{i \in \{1,2\}}\left\{S_{\rm near}(\mathbf{q}_i)\right\}\Big]^+,
\end{equation}
\fi
where $S_{\rm far}(\mathbf{q}_i)=x_i+z_i \tan(\theta_i(\mathbf{q}_i) + \frac{\Theta_{\mathrm{3dB}}}{2})$, $S_{\rm near}(\mathbf{q}_i)=x_i+z_i \tan(\theta_i(\mathbf{q}_i) - \frac{\Theta_{\mathrm{3dB}}}{2})$, $\forall i \in \{1,2\}$, $\theta_1(\mathbf{q}_1)=\theta_1$, and  $\Theta_{\mathrm{3dB}}$ is the -3 dB beamwidth in elevation. The total area covered by the \ac{insar} radar is approximated\footnote{The approximation is due to the elliptical shape of the beam footprint on the ground and becomes negligible for large $N$. } as follows:
\begin{align} \label{eq:insar_coverage}
C_N(\mathbf{q}_1,\mathbf{q}_2,\mathbf{v}_y) =\sum^{N-1}_{n=1} S(\mathbf{q}_1,\mathbf{q}_2 ) v_y[n] \delta_t.
\end{align}
\subsection{\ac{insar} Performance Metrics}
{In practical \ac{insar} missions, such as TanDEM-X and SRTM \cite{coherence1}, the performance requirements for \acp{dem} are typically defined in terms of coverage and height accuracy. The height accuracy, i.e., the precision with which the ground topography can be estimated, depends on several intermediate performance metrics, specifically the coherence and the \ac{hoa}. In the following, we introduce these key \ac{3d} sensing performance metrics and highlight their impact on the overall \ac{insar} performance.}
\subsubsection{Total \Ac{insar} Coherence} 
{The interferometric coherence, denoted by $\gamma_n\in[0,1]$, represents the cross-correlation between the master and slave \ac{sar} images. In general, coherence values close to 1 result in improved sensing accuracy \cite{coherence1}. The total coherence can be decomposed into several decorrelation sources as follows:
\begin{equation}
	\gamma_n= \gamma_{\rm Rg} \gamma_{\mathrm{SNR},n} \gamma_{\rm other}, \forall n,
\end{equation}
where $\gamma_{\rm Rg}\in[0,1]$ denotes the baseline decorrelation, $\gamma_{\mathrm{SNR},n}\in[0,1]$ is the \ac{snr} decorrelation, and $\gamma_{\rm other}\in[0,1]$ represents the contribution from all other decorrelation sources, e.g., volumetric decorrelation. Here, we assume that surface scattering is dominant and $\gamma_{\rm other}$ is independent of the formation.  \cite{coherence1}, and that no prior knowledge about the imaged scene is available.}
\subsubsection{\ac{snr} Decorrelation} 
Low \acp{snr}  in \ac{sar} data acquisition result in a loss of the coherence between the master and slave \ac{sar} images during processing, {causing a degradation of the sensing accuracy.} In time slot $n$, the corresponding \ac{snr} decorrelation is given by  \cite{snr_equation}: 
 \begin{align} \label{eq:snr_decorrelation}
 &\gamma_{\mathrm{SNR},n}(\mathbf{q}_1,\mathbf{q}_2,\mathbf{v}_y)= \prod_{i \in \{1,2\}} \frac{1}{\sqrt{1+\mathrm{SNR}^{-1}_{i,n}(\mathbf{q}_i,\mathbf{v}_y)}}, \forall n, 
 \end{align}
 where $\mathrm{SNR}_{i,n}$ denotes the \ac{snr} achieved by $U_i$ in time slot $n$ and is given by \cite{snr_equation}:
\begin{equation}
\mathrm{SNR}_{i,n}(\mathbf{q}_i,\mathbf{v}_y)=\frac{\gamma_{r,i}}{v_y[n] r_i^3(\mathbf{q}_i)  \sin(\theta_i(\mathbf{q}_i)) }  , \forall n,
  \end{equation}
where $\gamma_{r,i}=\frac{\sigma_0 P_{t,i}\; G_t\; G_r \; \lambda^3\; c \;\tau_p \;\mathrm{PRF}}{4^4 \pi^3   k_b T_{\mathrm{sys}} \; B_{\mathrm{Rg}} \; F \; L_{\mathrm{atm}} \; L_{\mathrm{sys}} \; L_{\mathrm{az}}}, \forall i \in \{1,2\}$. Here, $\sigma_0$ is the normalized backscatter coefficient, $P_{t,i}$ is $U_i$'s radar transmit power, $G_t$ and $G_r$ {are} the transmit and receive  antenna gains, respectively, $\lambda$ is the wavelength, $c$ is the speed of light, $\tau_p$ is the pulse duration, $\mathrm{ PRF}$ is the pulse repetition frequency, $k_b$ is the Boltzmann constant, $T_{\mathrm{sys}}$ is the receiver  {noise} temperature, $B_{\mathrm{Rg}}$ is the bandwidth of the radar pulse, $F$ is the noise figure, and  $L_{\mathrm{atm}}$, $L_{\mathrm{ sys}}$, and $L_{\mathrm{ az}}$ represent the atmospheric, system, and azimuth losses, respectively. {Note that, in (\ref{eq:snr_decorrelation}), low master and slave \acp{snr} lead to small $\gamma_{\mathrm{SNR},n}$, and thus, degraded coherence $\gamma_n$.}
\subsubsection{Baseline Decorrelation} 
{The baseline decorrelation $\gamma_{\rm Rg}$ reflects the loss of coherence caused by the different incidence angles used for acquisition of the two \ac{sar} images. The baseline decorrelation is  given by \cite{victor}: }
\begin{equation} \gamma_{\mathrm{Rg}}(\mathbf{q}_2)=\frac{(2+B_{p}) \sin\left( \theta_2(\mathbf{q}_2) \right) - (2-B_{ p}) \sin(\theta_1)}{B_p \big(\sin(\theta_1)+\sin\left( \theta_2(\mathbf{q}_2) \right)\big)},\label{eq:baseline_decorrelation}
\end{equation}
where $B_{ p}=\frac{B_{\mathrm{Rg}}}{f_0}$ is the fractional bandwidth and $f_0$ is the radar center frequency. It can be shown  {that \ac{uav} formations with large inter-\ac{uav} distance, which corresponds to large perpendicular baselines $b_{\bot}$, cause high baseline decorrelation (i.e., $\gamma_{\rm Rg} \to 0$) and, therefore, degrade the overall coherence.}
\subsubsection{Height of Ambiguity (HoA)} The \ac{hoa} is defined as the height difference that results in a complete $2\pi$ cycle of the interferometric phase \cite{snr_equation}. It is therefore related to the sensitivity of the radar system to topographic height variations. Similar to the baseline decorrelation,  the \ac{hoa} depends on the \ac{uav} formation and is given by \cite{snr_equation}: 
\begin{align}\label{eq:height_of_ambiguity}
    h_{\mathrm{amb}}(\mathbf{q}_1,\mathbf{q}_2)=\frac{\lambda r_1(\mathbf{q}_1) \sin(\theta_1)}{b_{\perp}(\mathbf{q}_1,\mathbf{q}_2)}.
\end{align}
{Note that large perpendicular baselines $b_{\bot}$ lead to small \ac{hoa} values, which result in increased errors for interferometric phase unwrapping \cite{phase_unrwapping}. However, very small \ac{uav} baselines, i.e., large \acp{hoa}, are not a good choice either, since they lead to low sensing accuracy, as we explain in the following.}
\subsubsection{Sensing Height Accuracy}
A common statistics-based performance metric for assessing the accuracy of the height estimation in \ac{dem} is the point-to-point 90\% relative height  error given by \cite{snr_equation}: 
\begin{equation}
	\Delta h_{90\%}(\mathbf{q}_1,\mathbf{q}_2,\gamma_n) =h_{\mathrm{amb}}(\mathbf{q}_1,\mathbf{q}_2) \frac{\Delta\phi_{90\%}(\gamma_n) }{2\pi},
\end{equation}
where $\Delta\phi_{90\%}$ is defined as the  90\% percentile of the random phase error $\Delta \phi$ and obtained by evaluating the difference between two random variables, each of them describing the fluctuation of the interferometric phase within one interferogram as follows \cite{snr_equation}: 
\begin{equation} \label{eq:integral}
	\int_{-\Delta\phi_{90\%}(\gamma_n)}^{{\Delta\phi_{90\%}(\gamma_n)}} 	p_{\Delta \phi }(\Delta \phi,\gamma_n)  \circledast	p_{\Delta \phi }(\Delta \phi,\gamma_n)\cdot d\Delta \phi=0.9,
\end{equation}   {where $p_{\Delta \phi }$ is the  \ac{pdf}  of the phase error $\Delta \phi$ for a given interferometric coherence $\gamma$ and is given by}\footnote{In \cite{multilooking}, the \ac{pdf} is derived by assuming Gaussian scattering-matrix statistics. Yet, the proposed solution can be extended to other models, e.g., the K-distribution model \cite{K-distribution}. }  \cite{multilooking}:
\ifonecolumn
\begin{equation}
		\label{eq:pdf}
	p_{\Delta \phi }(\Delta \phi,\gamma_n)= \frac{\Gamma(n_L+0.5) (1 - |\gamma_n|^2)^{n_L}  |\gamma_n| \cos(\Delta \phi)}{ 2\sqrt{\pi} \Gamma(n_L) \left(1-|\gamma_n|^2\cos^2(\Delta \phi)\right)^{n_L+0.5} }+ \frac{(1-|\gamma_n|^2)^{n_L}}{2 \pi} F_{G}\left(n_L,1;0.5;|\gamma_n|^2 \cos^2(\Delta \phi)\right).
\end{equation}
\else 
\begin{align}
	\label{eq:pdf}
	p_{\Delta \phi }(\Delta \phi,\gamma_n)= \frac{\Gamma(n_L+0.5) (1 - |\gamma_n|^2)^{n_L}  |\gamma_n| \cos(\Delta \phi)}{ 2\sqrt{\pi} \Gamma(n_L) \left(1-|\gamma_n|^2\cos^2(\Delta \phi)\right)^{n_L+0.5} }+\notag \\ \frac{(1-|\gamma_n|^2)^{n_L}}{2 \pi} F_{G}\left(n_L,1;0.5;|\gamma_n|^2 \cos^2(\Delta \phi)\right).
\end{align}
\fi
Here, $\Gamma$ is the Gamma function, {$F_{G}$} is the Gauss hypergeometric function \cite{snr_equation}, and $n_L$ is the number of independent looks. In fact, multi-looking is a technique that averages adjacent pixels of the interferogram, {e.g.,} by filtering with a boxcar window, to improve phase estimation leading to a  trade-off between spatial resolution and height accuracy, see \cite{multilooking}. Note that  large \acp{hoa} result in large  point-to-point 90\% height errors $\Delta h_{90\%}$, and thus, in low relative height estimation accuracy. This explains why  optimizing the \ac{uav} baseline, {and} therefore, the \ac{hoa}, is  an intricate {problem}.
\subsection{Communication Performance}
We consider real-time offloading of the radar data to a \ac{gs}, where the master and slave \acp{uav} employ \ac{fdma}. {The  instantaneous   communication transmit power  consumed by \ac{uav} $U_i$ is given by $\mathbf{P}_{\mathrm{com},i}=(P_{\mathrm{com},i}[1],...,P_{\mathrm{com},i}[N])^T \in \mathbb{R}^N, i \in \{1,2\}$.}
We denote the location of the \ac{gs} by $\mathbf{g}= (g_x, g_y, g_z)^T \in \mathbb{R}^3$ and the distance from $U_i$ to the \ac{gs} by    $d_{i,n}(\mathbf{q}_i,\mathbf{v}_y) = ||\mathbf{q}_i[n]-\mathbf{g} ||_2, \forall i \in \{1,2\}, \forall n.$
We suppose that both \acp{uav} fly at   sufficiently high  altitudes to allow
obstacle-free communication with the \ac{gs} over a \ac{los} link. Thus, based on the free-space path loss model and \ac{fdma}, the
instantaneous throughput  from $U_i, \forall i\in \{1,2\},$ to the \ac{gs} is given by:
\begin{align}
 &R_{i,n}(\mathbf{q}_i,\mathbf{P}_{\mathrm{com},i},\mathbf{v}_y)= B_{c,i} \; \log_2\left(1+\frac{P_{\mathrm{com},i}[n] \;\beta_{c,i}}{d_{i,n}^2(\mathbf{q}_i,\mathbf{v}_y)}\right), \forall n,
\end{align}
where  $B_{c,i}$ is the fixed communication bandwidth allocated for $U_i$ and $\beta_{c,i}$ is its reference channel gain\footnote{The reference channel gain is the channel power gain at a reference distance of 1 m.} divided by the noise variance. The minimum {required} data rate, usually fixed in communication systems, is significantly impacted by the {formation} of the two UAVs and depends on the imageable area size. Based on \cite{data_rate}, the \ac{sar} sensing data rate  for $U_i, \forall i \in \{1,2\},$ is given by: 
\ifonecolumn
\begin{equation}
	R_{\mathrm{min},i}(\mathbf{q}_i)=  n_B  B_{\mathrm{Rg}}\mathrm{PRF}\left(\frac{z_i}{c} \left( \frac{1}{\cos \left(\theta_i(\mathbf{q}_i) + \frac{\Theta_{\mathrm{3dB}}}{2} \right)}  -\frac{1}{\cos \left(\theta_i(\mathbf{q}_i) - \frac{\Theta_{\mathrm{3dB}}}{2} \right)} \right) + \tau_p\right),
\end{equation}
\else
\begin{align}
	&R_{\mathrm{min},i}(\mathbf{q}_i)=  n_B  B_{\mathrm{Rg}}\mathrm{PRF}\Bigg(  \tau_p+\notag \\ & \frac{z_i}{c} \Bigg(\frac{1}{\cos \left(\theta_i(\mathbf{q}_i) + \frac{\Theta_{\mathrm{3dB}}}{2} \right)}  -\frac{1}{\cos \left(\theta_i(\mathbf{q}_i) - \frac{\Theta_{\mathrm{3dB}}}{2} \right)} \Bigg)\Bigg),
\end{align}
\fi
where $n_B$ is  the number of bits per complex sample. 
\subsection{Energy Consumption}
The propulsion power $P_{\mathrm{prop},n}$  consumed by the drone in time slot $n$ is given by \cite{propulsion}: 
\ifonecolumn
\begin{equation}\label{eq:propulsion_power}
	P_{\mathrm{prop},n}(\mathbf{v}_y)=  P_0 \left(1+\frac{3v_y^2[n]}{U^2_{\mathrm{tip}}}\right)+P_I\left( \sqrt{1+\frac{v^4_y[n]}{4v_0^4}}-\frac{v^2_y[n]}{2v_0^2}\right)^{\frac{1}{2}}  + \frac{1}{2}d_0 \rho s A_e v^3_y[n], \forall n.
\end{equation}
\else
\begin{align}\label{eq:propulsion_power}
	P_{\mathrm{prop},n}(\mathbf{v}_y)&=  P_0 \left(1+\frac{3v_y^2[n]}{U^2_{\mathrm{tip}}}\right)+ \frac{1}{2}d_0 \rho s A_e v^3_y[n]+ \notag \\ &P_I\left( \sqrt{1+\frac{v^4_y[n]}{4v_0^4}}-\frac{v^2_y[n]}{2v_0^2}\right)^{\frac{1}{2}}, \forall n.
\end{align}
\fi
Here, $P_0=\frac{\delta_u}{8}\rho s A_e \Omega^3 R^3$ and  $P_I=(1+k_u) \frac{W_u^{\frac{3}{2}}}{\sqrt{2\rho A_e}}$ are two constants, $v_0=\sqrt{\frac{W_u}{2 \rho A_e}}$ is the mean rotor induced velocity {while hovering}, $U_{\mathrm{tip}}$ is the tip speed of the rotor blade, $d_0$ is the fuselage drag ratio,  $\delta_u$ is the profile drag coefficient, $\rho$ is the air density, $s$ is the rotor solidity, $A_e$ is the rotor disc area, $\Omega$ is the blade angular velocity, $R$ is the rotor radius, $k_u$ is a  correction factor, and $W_u$ is the aircraft weight in Newton. The total energy  consumed by \ac{uav} $U_i, \forall i \in \{1,2\},$ is given by: 
\begin{equation}
	E_{i,N}(\mathbf{P}_{\mathrm{com},i}, \mathbf{v}_y)=\sum\limits_{n=1}^{N} \delta_t  \Big (P_{\mathrm{prop},n}( \mathbf{v}_y) +  P_{t,i}+ P_{\mathrm{com},i}[n] \Big).
\end{equation}

\section{Problem Formulation}
In this paper, we aim to maximize the  \ac{insar} coverage by jointly optimizing the \ac{uav} formation $\{\mathbf{q}_1,\mathbf{q}_2\}$, the instantaneous communication transmit powers $\{\mathbf{P}_{\mathrm{com,1}},\mathbf{P}_{\mathrm{com,2}} \}$, and the instantaneous \ac{uav} velocity $\mathbf{v}_y$,  while satisfying energy, communication, and interferometric quality-of-service constraints. To this end, we formulate  the following optimization problem: 
\begin{alignat*}{2} 
&(\mathrm{P.1}):\max_{\mathbf{q}_1,\mathbf{q}_2, \mathbf{P}_{\mathrm{com},1}, \mathbf{P}_{\mathrm{com},2},\mathbf{v}_y} \hspace{3mm}  C_N(\mathbf{q}_1,\mathbf{q}_2,\mathbf{v}_y)   & \qquad&  \\
\text{s.t.} \hspace{3mm} &  \mathrm{C1: } \; z_{\mathrm{min}} \leq z_i \leq z_{\mathrm{ max}}, \forall i \in \{ 1,2\},               &      &  \\ & \mathrm{C2}: \;  x_1=x_t - z_1 \tan(\theta_1),             &      &  \\& \mathrm{C3}: \;  r_2(\mathbf{q}_2) \leq r_1(\mathbf{q}_1),             &      &  \\& \mathrm{C4}: \;    x_2 \leq x_t,             &      &  \\&  \mathrm{C5}:    b(\mathbf{q}_1,\mathbf{q}_2)  \geq b_{\mathrm{min}},  &      &     
 \\
  &  \mathrm{C6}:  \gamma_{\mathrm{SNR},n}(\mathbf{q}_1,\mathbf{q}_2,\mathbf{v}_y)\geq \gamma_{\mathrm{SNR}}^{\mathrm{min}}, \forall n,        &      &     
 \\
 &  \mathrm{C7}: \gamma_{\mathrm{Rg}}(\mathbf{q}_2) \geq \gamma_{\mathrm{Rg}}^{\mathrm{min}},            &      &     
 \\
 &    \mathrm{C8}: \;  h_{\mathrm{amb}}(\mathbf{q}_1,\mathbf{q}_2)\geq  h_{\mathrm{amb}}^{\mathrm{min}}  ,           &      & \\
 &    \mathrm{C9}: \;    \Delta h_{90\%}^{\rm worst}(\mathbf{q}_1,\mathbf{q}_2) \leq \Delta h^{\mathrm{max}},           &      & \\
& \mathrm{C10}: 0 \leq P_{\mathrm{com},i}[n]  \leq P_{\mathrm{com}}^{\mathrm{max}}, \forall \; i \in \{1,2\}, \forall n,    & &\\
& \mathrm{C11}: R_{i,n}(\mathbf{q}_i,\mathbf{P}_{\mathrm{com},i},\mathbf{v}_y) \geq R_{\mathrm{min},i} (\mathbf{q}_i), \forall \; i \in \{1,2\}, \forall n,       & &  \\
& \mathrm{C12}: 	E_{i,N}(\mathbf{P}_{\mathrm{com},i}, \mathbf{v}_y) \leq E_{\mathrm{max},i}, \forall \; i \in \{1,2\},    & &  \\
& \mathrm{C13}: v_{\mathrm{min}} \leq v_y[n] \leq v_{\mathrm{max}}, \forall n,    & & \\
& \mathrm{C14}: \theta_{\mathrm{min}} \leq \theta_2(\mathbf{q}_2) \leq \theta_{\mathrm{max}},   & & \\
& \mathrm{C15}: \theta_2(\mathbf{q}_2)= \arctan\left(\frac{x_2-x_t}{z_2}\right).  & & 
\end{alignat*}
Constraint $\mathrm{ C1}$ specifies the maximum and minimum allowed flying altitude, denoted by $z_{\mathrm{max}}$ and $z_{\mathrm{min}}$,  respectively. Constraints $\mathrm{ C2}$ and $\mathrm{ C3}$ ensure maximum overlap between the beam footprint of the master drone and the area of interest. Constraint $\mathrm{ C4}$ is imposed because a side-looking \ac{sar} is assumed. Constraint $\mathrm{ C5}$ ensures a minimum separation distance {between the \acp{uav}}, denoted by $b_{\mathrm{min}}$. Constraints $\mathrm{ C6}$ and $\mathrm{ C7}$ ensure minimum required thresholds for the sensing $\mathrm{ SNR}$ and baseline decorrelation, denoted by $\gamma_{\mathrm{SNR}}^{\mathrm{min}}$ and $\gamma_{\mathrm{Rg}}^{\mathrm{min}}$,  respectively. Constraint $\mathrm{ C8}$ imposes a minimum \ac{hoa}, denoted by $h_{\mathrm{amb}}^{\mathrm{min}}$, that satisfies phase unwrapping requirements \cite{coherence1}.   Constraint $\mathrm{C9}$ ensures that the {worst-case} point-to-point 90\% relative height error of \ac{uav} formation $\{\mathbf{q}_1,\mathbf{q}_2\}$, defined as $\Delta h_{90\%}^{\rm worst}(\mathbf{q}_1,\mathbf{q}_2)=\Delta h_{90\%} (\mathbf{q}_1,\mathbf{q}_2,\gamma_{\rm SNR}^{\rm min}\gamma_{\rm Rg}^{\rm min}\gamma_{\rm other})$, does not exceed the maximum allowed height error, denoted by $\Delta h^{\rm max}$. The definition of $\Delta h_{90\%}^{\rm worst}$ is based on the fact that (i) an analytical expression for (\ref{eq:integral}) cannot be derived  and (ii) the  phase error variance decreases with the coherence \cite{multilooking}. Therefore, (\ref{eq:integral}) is solved numerically for the worst case scenario, as $\gamma_n\geq\gamma_{\rm SNR}^{\rm min}\gamma_{\rm Rg}^{\rm min}\gamma_{\rm other}, \forall n$.  Constraint $\mathrm{ C10}$ ensures that the communication transmit power is non-negative and does not exceed the maximum allowed level denoted by $P_{\mathrm{com}}^{\mathrm{max}}$.  Constraint $\mathrm{ C11}$ ensures that the achievable throughput of drone $U_i$ does not fall below the minimum {required} data rate $R_{\mathrm{min},i}(\mathbf{q}_i), \forall i \in \{1,2\}$, which depends on the imageable area, thus the \ac{uav} formation. Constraint $\mathrm{ C12}$ limits the  total energy consumed by \ac{uav} $U_i$ to its maximum battery capacity, denoted by $E_{\mathrm{max},i}, \forall i \in \{1,2\}$. Constraint  $\mathrm{C13}$ {specifies} the minimum and maximum allowed velocities of the \ac{uav}-\ac{sar} system, denoted by $v_{\rm min}$ and $v_{\rm max}$, respectively. { Constraint  $\mathrm{C14}$  {limits} the maximum and minimum slave look {angles to} $\theta_{\rm max}$ and $\theta_{\rm min}$, respectively. Constraint  $\mathrm{C15}$ {directs} the slave look angle towards $  \mathbf{p}_t[n], \forall n$. } { We note} that some {of the} constraints do not depend on time $n$ as some variables, such as $x_1$ and $z_2$, are optimized but are fixed across all time slots due to the prescribed linear \ac{insar} trajectory imposed by the stripmap mode \ac{sar}  operation \cite{fixed_heading}. \par 
Problem $\mathrm{(P.1)}$ is a non-smooth non-convex optimization problem for the following reasons. First, the use  of a variable slave look angle as well as $\max$ and $\min$ functions causes the objective function $C_N$ to be non-smooth and non-convex \ac{wrt} $\mathbf{q}_2$. Second, {the} coupling between different optimization variables and the presence of trigonometric functions in {the} expressions for the $\mathrm{SNR}$ decorrelation, baseline decorrelation, and \ac{hoa} (see (\ref{eq:snr_decorrelation}), (\ref{eq:baseline_decorrelation}), and (\ref{eq:height_of_ambiguity})), make  constraints $\mathrm{C6}$, $\mathrm{C7}$, $\mathrm{C8}$, and $\mathrm{C9}$  non-convex. Third, the use of variable minimum sensing data rates {causes} $\mathrm{C11}$ to be neither monotonic nor convex. Fourth, constraints $\mathrm{C3}$ and $\mathrm{C5}$ involve lower bounds on Euclidean distances{, and thus, they are} also non-convex. Lastly, the \ac{uav} propulsion power, given in (\ref{eq:propulsion_power}),  {causes} constraint $\mathrm{C12}$ {to be} non-convex. {In short, problem $\mathrm{(P.1)}$ involves interesting performance trade-offs, which do not allow for a trivial solution. For instance, achieving extensive coverage requires a high speed, which reduces the sensing \acp{snr}, leading to degraded coherence and, consequently, increased estimation errors. Furthermore, choosing a close flying formation for the \acp{uav} can improve the coverage (due to the increased overlap of the corresponding individual images), however, it also causes the baseline to be small, which leads to high \ac{hoa}, and consequently, large estimation errors.}  {In summary}, it is very difficult and challenging to find {a} globally optimal solution to problem $\mathrm{(P.1)}$. To overcome {this} challenge, in the next section, we divide problem $\mathrm{(P.1)}$ into several more tractable and easier-to-solve sub-problems and {tackle} them in an alternating manner.
\section{Solution of the Optimization Problem} 
\begin{figure}[]
	\centering
	\ifonecolumn
	\includegraphics[width=5in]{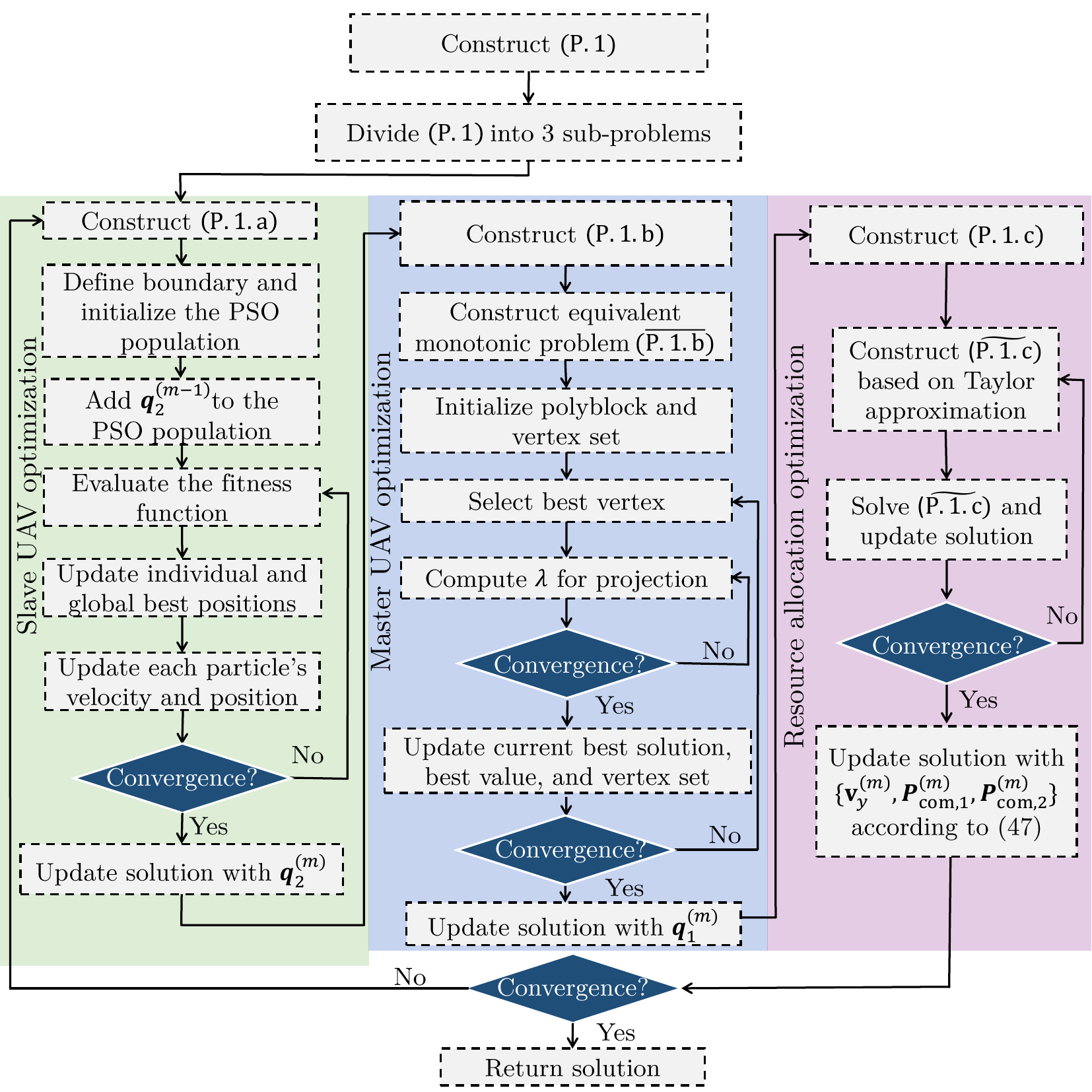}
	\else
	\includegraphics[width=1\columnwidth]{figures/Block_diagram.pdf}
	\fi
	\caption{Block diagram of the proposed solution to problem $\rm (P.1)$ based on \ac{ao}, \ac{sca}, polyblock outer approximation, and \ac{pso}.}
	\label{fig:block_diagram}
\end{figure}
To strike  a balance between performance and complexity,  we provide a low-complexity sub-optimal solution for the formulated problem based on \ac{ao}. To this end, { in each iteration $m$ of the proposed \ac{ao} algorithm,} problem $\mathrm{ (P.1)}$ is divided into three sub-problems, namely $\mathrm{(P.1.a)}$, $\mathrm{(P.1.b)}$, and $\mathrm{(P.1.c)}$, that are solved alternately{, as illustrated in Figure \ref{fig:block_diagram}}.  In problems $\mathrm{(P.1.a)}$ and  $\mathrm{(P.1.b)}$, we optimize the {positions $\mathbf{q}_2$  and $\mathbf{q}_1$ of the} slave and master \acp{uav} in the across-track plane, respectively. In sub-problem $\mathrm{(P.1.c)}$, we optimize the \ac{uav} resources, {namely} the instantaneous velocity $\mathbf{v}_y$  {(which affects} the propulsion power $\mathbf{P}_{\rm prop}${)} and the instantaneous communication transmit powers, $\mathbf{P}_{\rm com,1}$ and  $\mathbf{P}_{\rm com,2}$. Each sub-problem has different characteristics, therefore, in the following subsections, we provide the solution for each sub-problem using different optimization techniques. 
\subsection{Slave \ac{uav} Optimization}
In this subsection, we aim to optimize the across-track position of the slave \ac{uav}, $\mathbf{q}_2$,  for given resources $\{\mathbf{v}_y, \mathbf{P}_{\mathrm{ com,1}}, \mathbf{P}_{\mathrm{ com,2}}\}$ and {given} master \ac{uav} position $\mathbf{q}_1$. Therefore, problem $\mathrm{(P.1)}$ reduces to sub-problem $\mathrm{(P.1.a)}$ given by:
\begin{alignat*}{2} 
	&(\mathrm{P.1.a}):\max_{\mathbf{q}_2} \hspace{3mm}  C_N(\mathbf{q}_1,\mathbf{q}_2,\mathbf{v}_y)   & \qquad&  \\
	\text{s.t.} \hspace{3mm} &\mathrm{C3-C9,C14,C15}, &       &      \\& \mathrm{C1: } \; z_{\mathrm{min}} \leq z_2 \leq z_{\mathrm{ max}},            &      &  \\ & \mathrm{C11}: R_{2,n}(\mathbf{q}_2,\mathbf{P}_{\mathrm{com},2},\mathbf{v}_y) \geq R_{\mathrm{min},2} (\mathbf{q}_2), \forall n.      & & 
\end{alignat*}
Sub-problem $(\mathrm{P.1.a})$ is a non-smooth non-convex optimization problem due to its objective function and constraints $\mathrm{ C5}$, $\mathrm{C7-C9}$, and $\mathrm{C11}$. The {presence} of trigonometric functions in constraints $\mathrm{ C7}$, $\mathrm{ C8}$, and $\mathrm{ C9}$, as well as the non-smoothness of the objective function \ac{wrt} $\mathbf{q}_2$ make it very difficult to solve problem  $(\mathrm{P.1.a})$ using classical optimization methods that are generally based on  gradient {descent}. Nevertheless, we provide a low-complexity close-to-optimal  solution based on stochastic optimization \cite{stochastic_optimization}. { Motivated by its success in solving complex non-smooth optimization problems and its {fast} convergence \cite{PSO_overview}}, we use {particle swarm optimization (PSO)}, which is a population-based stochastic optimization {framework} that mimics the intelligent collective behavior of animals, such as birds \cite{pso_origin}. To solve sub-problem $\mathrm{(P.1.a)}$, a population of $D$ particles is employed to  explore the \ac{2d} across-track plane. In iteration $k$ of the \ac{pso} algorithm,  particle $d$, denoted by $\mathbf{p}_d^{(k)} \in \mathbb{R}^2, \forall d,$ has {two} dimensions corresponding to {the} $x$- and $z$-coordinates of the slave \ac{uav} as follows:
\begin{equation}\label{eq:pso_particle}
	\mathbf{p}_d^{(k)}=(x_{2,d}^{(k)},z_{2,d}^{(k)})^T, \forall d,
\end{equation}
where $x_{2,d}^{(k)}$ and $z_{2,d}^{(k)}$ correspond to the $x$- and {$z$-positions} of particle $d$ in iteration $k$, respectively. Furthermore,  the velocity vector {of particle $d$ in iteration $k$ }is denoted by $\mathbf{v}_{\mathrm{PSO}, d}^{(k)} \in \mathbb{R}^2$ and given by: 
\begin{equation}\label{eq:PSO_velocity}
	\mathbf{v}_{\mathrm{PSO}, d}^{(k)}=\big(v_{\mathrm{PSO}, d}^{(k)}[1],v_{\mathrm{PSO}, d}^{(k)}[2]\big)^T, \forall d,
\end{equation}
where $v_{\mathrm{PSO}, d}^{(k)}[1]$ and $v_{\mathrm{PSO}, d}^{(k)}[2]$ are the $x$- and {$z$-components} of the particle velocity vector, respectively.  In iteration $k \geq 2$, the  velocity of each particle is affected by its previous velocity, given by $\mathbf{v}_{\mathrm{PSO}, d}^{(k-1)}$, its previous local experience, given by its previous local best-known position, denoted by $\mathbf{p}_{\mathrm{best}, d}^{(k-1)}\in \mathbb{R}^{2}$, and its global experience, given by the best-known position, denoted by $\mathbf{p}_{\mathrm{best}}^{(k-1)}\in \mathbb{R}^{2}$, as shown below \cite{pso_parameter}:
\ifonecolumn
\begin{equation}\label{eq:pso_velocity_update}
	\mathbf{v}_{\mathrm{PSO}, d}^{(k)}=  \stackrel{\text{Inertial term}}{\overbrace{w(k-1) \mathbf{v}_{\mathrm{PSO}, d}^{(k-1)}}} +\stackrel{\text{Cognitive term}}{\overbrace{ c_1 \mathcal{R}_1 (\mathbf{p}_{\mathrm{best}, d}^{(k-1)}-\mathbf{p}_d^{(k-1)} )}} +\stackrel{\text{Social term}}{\overbrace{ c_2  \mathcal{R}_2 (\mathbf{p}_{\mathrm{best}}^{(k-1)}-\mathbf{p}_d^{(k-1)})}}, \forall k \geq2,
\end{equation} 
\else
\begin{align}\label{eq:pso_velocity_update}
	\mathbf{v}_{\mathrm{PSO}, d}^{(k)}&=  \stackrel{\text{Inertial term}}{\overbrace{w(k-1) \mathbf{v}_{\mathrm{PSO}, d}^{(k-1)}}} +\stackrel{\text{Cognitive term}}{\overbrace{ c_1 \mathcal{R}_1 (\mathbf{p}_{\mathrm{best}, d}^{(k-1)}-\mathbf{p}_d^{(k-1)} )}} + \notag \\ 
	&	\stackrel{\text{Social term}}{\overbrace{ c_2  \mathcal{R}_2 (\mathbf{p}_{\mathrm{best}}^{(k-1)}-\mathbf{p}_d^{(k-1)})}}, \forall k \geq2,
\end{align} 
\fi
where $c_1$ and $c_2$  denote the cognitive and social learning {factors}, respectively, $\mathcal{R}_1$ and $\mathcal{R}_2$ are random variables that are uniformly distributed in $[0,1]$, and $w(k-1)\in [0,1]$ is the inertial weight in iteration $k-1$. Different  methods to adjust the inertial weight were proposed in the literature  \cite{decay}. In this work, we employ a linearly decaying inertial weight. Moreover, {the} initial velocity vectors, denoted by $\mathbf{v}_{\mathrm{PSO}, d}^{(1)}, \forall d$, are uniformly distributed in $[0,v_{\rm PSO}^{\rm max}]\times[0,v_{\rm PSO}^{\rm max}]$, where $v_{\rm PSO}^{\rm max}$ is the  maximum particle velocity. Note that {the} updates in (\ref{eq:pso_velocity_update}) might {cause} particles {to move} outside of the feasible search space. To prevent {this from happening} and {to} speed up the convergence of the \ac{pso} algorithm, we define boundaries for the search space based on constraints $\mathrm{C1}$ and $\mathrm{C4}$ by {defining} a reflecting wall  \cite{boundary} as follows: 
\ifonecolumn
\begin{equation}\label{eq:reflecting_wall}
	\begin{dcases}
		v_{\mathrm{PSO}, d}^{(k)} [1] &=	- v_{\mathrm{PSO}, d}^{(k)}[1] \hspace{5mm}\text{ if } x_{2,d}^{(k)} + v_{\mathrm{PSO}, d}^{(k)} [1]> x_t, \\
		v_{\mathrm{PSO}, d}^{(k)} [2] &=	- v_{\mathrm{PSO}, d}^{(k)} [2] \hspace{5mm}\text{ if } z_{2,d}^{(k)} +v_{\mathrm{PSO}, d}^{(k)} [2]> z_{\mathrm{max}} \text{ or } z_{2,d}^{(k)}+v_{\mathrm{PSO}, d}^{(k)} [2] < z_{\mathrm{min}},
	\end{dcases}, \forall d, \forall k \geq 2.
\end{equation}
\else
\begin{equation}\label{eq:reflecting_wall}	\begin{adjustbox}{width=\columnwidth} $
		\begin{dcases}
			v_{\mathrm{PSO}, d}^{(k)} [1] &=	- v_{\mathrm{PSO}, d}^{(k)}[1] \hspace{5mm}\text{ if } x_{2,d}^{(k)} > x_t, \\
			v_{\mathrm{PSO}, d}^{(k)} [2] &=	- v_{\mathrm{PSO}, d}^{(k)} [2] \hspace{5mm}\text{ if } z_{2,d}^{(k)} > z_{\mathrm{max}} \text{ or } z_{2,d}^{(k)} < z_{\mathrm{min}},
		\end{dcases}, \forall d, \forall k \geq 2. $
	\end{adjustbox}
\end{equation}
\fi Next, in iteration $k$, the particle position is updated as follows: 
\begin{equation} \label{eq:pso_update}
	\mathbf{p}_d^{(k+1)}=\mathbf{p}_d^{(k)}+\mathbf{v}_{\mathrm{PSO}, d}^{(k)}, \forall d, \forall k \geq 1,
\end{equation}
where  initial particles $\mathbf{p}_d^{(1)},\forall d,$ are uniformly distributed in $[x_t-O,x_t]\times[z_{\rm min},z_{\rm max}]$ according to $\mathrm{C1}$ and $\mathrm{C4}$, where $O$ is set large enough, {such that the two drones are placed sufficiently behind $x_t$, see constraint $\mathrm{C4}$}. 
The {remaining constraints} of sub-problem $(\mathrm{P.1.a})$ as well as the objective function are used to construct a  fitness function  necessary to evaluate a given particle of the population. { While the \ac{pso} algorithm was originally  proposed {to solve} unconstrained optimization {problems} \cite{pso_origin}, several variations have been proposed to accommodate {constraints}, e.g.,  \cite{pso_coevolution,pso_parameter,pso_ga}.} In this work, we {adopt a} non-parameterized \ac{pso} algorithm, which was proposed in \cite{pso_ga}. In particular, {our} fitness function $\mathcal{F}$ is given by: 
\ifonecolumn
\begin{equation}\label{eq:fitness}
	\mathcal{F}(\mathbf{p}_d^{(k)})=\begin{dcases}
		C_N(\mathbf{q}_1,\mathbf{p}_d^{(k)},\mathbf{v}_y),\hspace{5mm}&\text{ if } \mathbf{p}_d^{(k)} \in S,\\
		\min\limits_{k,d}\left(C_N(\mathbf{q}_1,\mathbf{p}_d^{(k)},\mathbf{v}_y)\right)	+ \sum\limits_{l \in \mathcal{L}}g_l(\mathbf{p}_d^{(k)}), \hspace{5mm}&\text{ if } \mathbf{p}_d^{(k)} \notin S,\\
	\end{dcases} \hspace{5mm}\forall d,
\end{equation}
\else
\begin{equation}\label{eq:fitness}	\begin{adjustbox}{width=\columnwidth} $
	\mathcal{F}(\mathbf{p}_d^{(k)})=\begin{dcases}
		C_N(\mathbf{q}_1,\mathbf{p}_d^{(k)},\mathbf{v}_y), &\text{ if } \mathbf{p}_d^{(k)} \in S,\\ 
		\min\limits_{k,d}\left(C_N(\mathbf{q}_1,\mathbf{p}_d^{(k)},\mathbf{v}_y)\right)	- \sum\limits_{l \in \mathcal{L}}g_l(\mathbf{p}_d^{(k)}), &\text{ if } \mathbf{p}_d^{(k)} \notin S,\\
	\end{dcases} \forall d,$
\end{adjustbox}
\end{equation}
\fi
where $S$ is the feasible set for sub-problem $\mathrm{(P.1.a)}$, {which is defined by constraints $\mathrm{C3}$, $\mathrm{C5-C9}$, $\mathrm{C11}$, $\mathrm{C14}$, and  $\mathrm{C15}$, and functions $g_l, l \in \mathcal{L}=\{3,5,6,7,8,9,11,14\}$, which are used as metrics to quantify constraint violations as follows:}
\ifonecolumn
\begin{align}
	&g_3(\mathbf{p}_d^{(k)})= \left[  r_2(\mathbf{p}_d^{(k)})-r_1(\mathbf{q}_1)\right]^+, \forall d,\label{eq:g3}\\
	&g_5(\mathbf{p}_d^{(k)})= \left[ b_{\mathrm{min}}-b(\mathbf{q}_1,\mathbf{p}_d^{(k)})\right]^+, \forall d,\\
	&g_6(\mathbf{p}_d^{(k)})= \left[ \gamma_{\mathrm{SNR}}^{\mathrm{min}}-\sum\limits_{n=1}^{N} \gamma_{\mathrm{SNR},n}(\mathbf{q}_1,\mathbf{p}_d^{(k)},\mathbf{v}_y)\right]^+, \forall d, \\
	&g_7(\mathbf{p}_d^{(k)})= \left[ \gamma_{\mathrm{Rg}}^{\mathrm{min}}-\gamma_{\mathrm{Rg}}(\mathbf{p}_d^{(k)})\right]^+, \forall d,   \\
	&g_8(\mathbf{p}_d^{(k)})= \left[  h_{\mathrm{amb}}^{\mathrm{min}} -  h_{\mathrm{amb}}(\mathbf{q}_1,\mathbf{p}_d^{(k)})\right]^+, \forall d,   \\
	&g_9(\mathbf{p}_d^{(k)})= \left[   \Delta h_{90\%}^{\rm worst}(\mathbf{q}_1,\mathbf{p}_d^{(k)})-\Delta h^{\mathrm{max}}\right]^+, \forall d,   \\
	&g_{11}(\mathbf{p}_d^{(k)})= \left[ R_{\mathrm{min},2}(\mathbf{p}_d^{(k)}) -\sum\limits_{n=1}^{N} R_{2,n}(\mathbf{p}_d^{(k)},\mathbf{P}_{\mathrm{com},2},\mathbf{v}_y)\right]^+, \forall d,\label{eq:g11}\\
	&g_{14}(\mathbf{p}_d^{(k)})= \left[ \left(\theta_{\rm min} - \theta_2(\mathbf{p}_d^{(k)})\right)+\left(\theta_2(\mathbf{p}_d^{(k)})-\theta_{\rm max}\right)\right]^+, \forall d.\label{eq:g14}	
\end{align} 
\else
\begin{align}
	&g_3(\mathbf{p}_d^{(k)})= \left[  r_2(\mathbf{p}_d^{(k)})-r_1(\mathbf{q}_1)\right]^+, \forall d,\label{eq:g3}\\
	&g_5(\mathbf{p}_d^{(k)})= \left[ b_{\mathrm{min}}-b(\mathbf{q}_1,\mathbf{p}_d^{(k)})\right]^+, \forall d,\\
	&\begin{adjustbox}{width=0.85\columnwidth} $ g_6(\mathbf{p}_d^{(k)})= \sum\limits_{n=1}^{N}\left[ \gamma_{\mathrm{SNR}}^{\mathrm{min}}- \gamma_{\mathrm{SNR},n}(\mathbf{q}_1,\mathbf{p}_d^{(k)},\mathbf{v}_y)\right]^+, \forall d,$
	\end{adjustbox} \\
	&g_7(\mathbf{p}_d^{(k)})= \left[ \gamma_{\mathrm{Rg}}^{\mathrm{min}}-\gamma_{\mathrm{Rg}}(\mathbf{p}_d^{(k)})\right]^+, \forall d,   \\
	&g_8(\mathbf{p}_d^{(k)})= \left[  h_{\mathrm{amb}}^{\mathrm{min}} -  h_{\mathrm{amb}}(\mathbf{q}_1,\mathbf{p}_d^{(k)})\right]^+, \forall d,   \\
	&g_9(\mathbf{p}_d^{(k)})= \left[   \Delta h_{90\%}^{\rm worst}(\mathbf{q}_1,\mathbf{p}_d^{(k)})-\Delta h^{\mathrm{max}}\right]^+, \forall d,   \\
	&\begin{adjustbox}{width=0.85\columnwidth} $g_{11}(\mathbf{p}_d^{(k)})= \sum\limits_{n=1}^{N}\left[ R_{\mathrm{min},2}(\mathbf{p}_d^{(k)}) - R_{2,n}(\mathbf{p}_d^{(k)},\mathbf{P}_{\mathrm{com},2},\mathbf{v}_y)\right]^+, \forall d,\label{eq:g11}$
	\end{adjustbox}\\
	&\begin{adjustbox}{width=0.85\columnwidth} $g_{14}(\mathbf{p}_d^{(k)})= \left[ \theta_{\rm min} - \theta_2(\mathbf{p}_d^{(k)})\right]^++\left[\theta_2(\mathbf{p}_d^{(k)})-\theta_{\rm max}\right]^+, \forall d.\label{eq:g14}	$
	\end{adjustbox}
\end{align} 
\fi
\begin{algorithm}
	\caption{Particle Swarm Optimization Algorithm }\label{alg:pso}
	\begin{algorithmic}[1] 
		\label{algorithm1} \State For given $\{\mathbf{q}_1,\mathbf{v}_y,\mathbf{P}_{\mathrm{com},1},\mathbf{P}_{\mathrm{com},2}\}$, set initial iteration number $k=1$, population of $D$ particles $(\mathbf{p}_1^{(1)}, ..., \mathbf{p}_D^{(1)})$ with initial random velocities $(\mathbf{v}_{\mathrm{PSO}, 1}^{(1)}, ..., \mathbf{v}_{\mathrm{PSO}, D}^{(1)})$, initial best local position for each particle $(\mathbf{p}_{\mathrm{best}, 1}^{(1)}=\mathbf{p}_1^{(1)}, ..., \mathbf{p}_{\mathrm{best}, D}^{(1)}=\mathbf{p}_D^{(1)})$, initial global best swarm position $\mathbf{p}_{\mathrm{best}}^{(1)}=\arg\max\limits_{\mathbf{p}_1^{(1)}, ..., \mathbf{p}_D^{(1)}}(\mathcal{F}(\mathbf{p}_1^{(1)}),...,\mathcal{F}(\mathbf{p}_D^{(1)}))$, and initial inertial weight $w(1)=1$. 
		\State \textbf{repeat}
		\State \hspace{\algorithmicindent}$ $Adjust velocities $\mathbf{v}_{\mathrm{PSO}, d}^{(k)}, \forall d,$ based on (\ref{eq:reflecting_wall}). 
		\State \hspace{\algorithmicindent}$ $Update positions  $\mathbf{p}_d^{(k+1)} = \mathbf{p}_d^{(k)}+\mathbf{v}_{\mathrm{PSO}, d}^{(k)}, \forall d$ 
		\State\hspace{\algorithmicindent}$ $Update population velocities $ \mathbf{v}_{\mathrm{PSO}, d}^{(k+1)}=  w(k) \mathbf{v}_{\mathrm{PSO}, d}^{(k)} +$ \Statex\hspace{\algorithmicindent}$ $$c_1 \mathcal{R}_1 (\mathbf{p}_{\mathrm{best}, d}^{(k)}-\mathbf{p}_d^{(k)} ) + c_2 \mathcal{R}_2 (\mathbf{p}_{\mathrm{best}}^{(k)}-\mathbf{p}_d^{(k)}), \forall d$
		\State\hspace{\algorithmicindent}$ $Update  inertial weight  $ w(k+1)=1-\frac{k}{M_1}$
		
		\State\hspace{\algorithmicindent}$ $Update  local best known position $\mathbf{p}_{\mathrm{best}, d}^{(k+1)}$ for each \State\hspace{\algorithmicindent}$ $particle and the swarm’s best known position\Statex\hspace{\algorithmicindent}$ $$\mathbf{p}_{\mathrm{best}}^{(k+1)}$ based on fitness function $\mathcal{F}$ {in (\ref{eq:fitness})}
		\State\hspace{\algorithmicindent}$ $Set iteration number $k=k+1$
		\State \textbf{until} convergence
		\State \textbf{return} solution $ \mathbf{q}_2=\mathbf{p}_{\rm best}^{(k-1)}$
	\end{algorithmic}
\end{algorithm}
The proposed procedure to solve sub-problem $\mathrm{(P.1.a)}$ is  summarized in \textbf{Algorithm} \ref{alg:pso}. According to our simulations,    \textbf{Algorithm} \ref{alg:pso} converges to {a} close-to-optimal solution, which {can be} assessed {using an} upper bound {for} $\mathrm{(P.1.a)}$;   $C_N(\mathbf{q}_1,\mathbf{q}_2,\mathbf{v}_y) \leq  \big( S_{\rm far}(\mathbf{q}_1) - S_{\rm near}(\mathbf{q}_1)\big) \sum\limits_{n=1}^{N-1}v_y[n]\delta_t$. {Next, we analyze the worst-case  computational time complexity of \textbf{Algorithm} 1. To this end, let} $M_1$ be the maximum number of iterations for \textbf{Algorithm} \ref{alg:pso}. Then, to evaluate the fitness function for each particle, $2N+6$  computations are required, see (\ref{eq:g3})-(\ref{eq:g14}). Therefore, the initialization of \textbf{Algorithm} \ref{alg:pso} requires $1+3D+D(2N+6)$ computations.  {Consequently}, the overall worst-case time complexity of \textbf{Algorithm} 1 is given by $\mathcal{O}\big(1+3D+D(2N+6)+M_1 D(4+2N+6)\big)$, which is equivalent to $\mathcal{O}(M_1 D N)$. 
\subsection{Master \ac{uav} Position Optimization}
\sloppy Next, we focus on  optimizing the master \ac{uav} position in the across-track plane, $\mathbf{q}_1$, for given  $\{\mathbf{v}_y, \mathbf{P}_{\mathrm{ com,1}}, \mathbf{P}_{\mathrm{ com,2}},\mathbf{q}_2\}$. {In this case}, problem $\mathrm{(P.1)}$ reduces to sub-problem $\mathrm{(P.1.b)}$ given by: 
\begin{alignat*}{2} 
	&(\mathrm{P.1.b}):\max_{\mathbf{q}_1} \hspace{3mm}  C_N(\mathbf{q}_1,\mathbf{q}_2,\mathbf{v}_y)   & \qquad&  \\
	\text{s.t.} \hspace{3mm} &\mathrm{C2}, \mathrm{C3},  \mathrm{C5}, \mathrm{C6}, \mathrm{C8}, \mathrm{C9},                &      & \\&  \mathrm{C1}:  \; z_{\mathrm{min}} \leq z_1 \leq z_{\mathrm{ max}},    &      &   \\
	& \mathrm{C11}: R_{1,n}(\mathbf{q}_1,\mathbf{P}_{\mathrm{com},1},\mathbf{v}_y) \geq R_{\mathrm{min},1} (\mathbf{q}_1), \forall n.       & &  
\end{alignat*}
Sub-problem $(\mathrm{P.1.b})$ is non-convex due to its objective function and constraints $\mathrm{C5}$ and $\mathrm{C11}$.  {Thus}, finding the global optimal solution to sub-problem  $(\mathrm{P.1.b})$ is challenging. Yet, we provide the optimal solution to sub-problem $(\mathrm{P.1.b})$ based on monotonic optimization theory. To this end, we start by transforming  $(\mathrm{P.1.b})$ {into} the canonical form of a monotonic optimization problem \cite{2}. 
\begin{proposition}\label{prop:equivalence}
{Sub-problem $(\mathrm{P.1.b})$ is equivalent to the following  monotonic optimization problem}: 
\begin{alignat*}{2} 
	&(\overline{\mathrm{P.1.b}}):\max_{z_1,t} \hspace{3mm}  C_N(\mathbf{q}_1,\mathbf{q}_2,\mathbf{v}_y)   & \qquad&  \\
	\text{s.t.} \hspace{3mm} &\mathrm{C6},                &      & \\&  \mathrm{\overline{C1a}}:  \;  z_1-Z_{\mathrm{ max}} \leq 0,    &      & \\
	&  \mathrm{\overline{C1b}}:  \; z_1- Z_{\mathrm{min}}\geq 0,    &      & \\
	&\overline{\mathrm{ C11a}}: 	{a_1(z_1)}+t\leq {a_1(z_{\rm max})},  &      & \\ 
	&\overline{\mathrm{ C11b}}:    	{a_2(z_1)}+t \geq 	{a_1(z_{\rm max})},  &      & \\
	&\overline{\mathrm{ C11c}}:    0 \leq t \leq {a_1(z_{\rm max})}-{a_1(0)},  &      & 
\end{alignat*}
where $t \in \mathbb{R}$ is an auxiliary optimization variable,   $Z_{\mathrm{min}}$ is given by:
\ifonecolumn
\begin{equation}\label{eq:Zmin}
		Z_{\mathrm{min}}=
		\begin{dcases}
			\cos(\theta_1)\max\Big\{\frac{z_{\mathrm{min}}}{\cos(\theta_1)},r_2(\mathbf{q}_2),\frac{b_{\bot}(\mathbf{q}_2)h_{\mathrm{min}}}{\lambda \sin(\theta_1)}\Big\},& \text{if } b_{\mathrm{min}}\leq b_{\bot}(\mathbf{q}_2),\vspace{5 mm}\\
			\!\begin{aligned} &\cos(\theta_1)\max\Big\{\frac{z_{\mathrm{min}}}{\cos(\theta_1)},r_2(\mathbf{q}_2),\frac{b_{\bot}(\mathbf{q}_2)h_{\mathrm{min}}}{\lambda \sin(\theta_1)},r_2(\mathbf{q}_2)\cos\big(\theta_1-\theta_2(\mathbf{q}_2)\big) +  \\ & \sqrt{b^2_{\mathrm{min}}-b^2_{\bot}(\mathbf{q}_2)}\Big\},\end{aligned}&\text{otherwise, }
		\end{dcases} 
\end{equation}
\else
\begin{equation}\label{eq:Zmin}
	\begin{adjustbox}{width=\columnwidth} $
	\begin{dcases}
		\cos(\theta_1)\max\Big\{\frac{z_{\mathrm{min}}}{\cos(\theta_1)},r_2(\mathbf{q}_2),\frac{b_{\bot}(\mathbf{q}_2)h_{\mathrm{min}}}{\lambda \sin(\theta_1)}\Big\},& \text{if } b_{\mathrm{min}}\leq b_{\bot}(\mathbf{q}_2),\vspace{5 mm}\\
	\!\begin{aligned} &\cos(\theta_1)\max\Big\{\frac{z_{\mathrm{min}}}{\cos(\theta_1)},r_2(\mathbf{q}_2),\frac{b_{\bot}(\mathbf{q}_2)h_{\mathrm{min}}}{\lambda \sin(\theta_1)}, \\ &r_2(\mathbf{q}_2)\cos\big(\theta_1-\theta_2(\mathbf{q}_2)\big) +  \sqrt{b^2_{\mathrm{min}}-b^2_{\bot}(\mathbf{q}_2)}\Big\},\end{aligned}&\text{otherwise, }
	\end{dcases} $
\end{adjustbox}
\end{equation}
\fi
$Z_{\mathrm{max}}$ is given by: 
\ifonecolumn
\begin{equation} \label{eq:Zmax}
Z_{\mathrm{max}}=\begin{dcases}
	\cos(\theta_1)\max\Big\{\frac{z_{\mathrm{max}}}{\cos(\theta_1)},\frac{2\pi b_{\bot}(\mathbf{q}_2)\Delta h^{\mathrm{max}} }{\lambda \sin(\theta_1)\Delta\phi_{90\%}(\gamma_{\mathrm{SNR}}^{\mathrm{min}}\gamma_{\mathrm{Rg}}^{\mathrm{min}}\gamma_{\mathrm{other}})}\Big\},	&\text{if } b_{\mathrm{min}}\leq b_{\bot}(\mathbf{q}_2),\\
		\!\begin{aligned}
			&\cos(\theta_1)\max\Big\{\frac{z_{\mathrm{max}}}{\cos(\theta_1)},r_2(\mathbf{q}_2)\cos\big(\theta_1-\theta_2(\mathbf{q}_2)\big) -\sqrt{b^2_{\mathrm{min}}-b^2_{\bot}(\mathbf{q}_2)},\\ &\frac{2\pi  b_{\bot}(\mathbf{q}_2)\Delta h^{\mathrm{max}}}{\lambda \sin(\theta_1)\Delta\phi_{90\%}(\gamma_{\mathrm{SNR}}^{\mathrm{min}}\gamma_{\mathrm{Rg}}^{\mathrm{min}}\gamma_{\mathrm{other}})}\Big\},\end{aligned}&\text{otherwise. }
	\end{dcases}
\end{equation}
\else
\begin{equation} \label{eq:Zmax}
		\begin{adjustbox}{width=\columnwidth} $
\begin{dcases}	\cos(\theta_1)\max\Big\{\frac{z_{\mathrm{max}}}{\cos(\theta_1)},\frac{2\pi b_{\bot}(\mathbf{q}_2)\Delta h^{\mathrm{max}} }{\lambda \sin(\theta_1)\Delta\phi_{90\%}(\gamma_{\mathrm{SNR}}^{\mathrm{min}}\gamma_{\mathrm{Rg}}^{\mathrm{min}})}\Big\},	&\text{if } b_{\mathrm{min}}\leq b_{\bot}(\mathbf{q}_2),\\
		\!\begin{aligned}
			&\cos(\theta_1)\max\Big\{\frac{z_{\mathrm{max}}}{\cos(\theta_1)},r_2(\mathbf{q}_2)\cos\big(\theta_1-\theta_2(\mathbf{q}_2)\big) -\\ &\sqrt{b^2_{\mathrm{min}}-b^2_{\bot}(\mathbf{q}_2)},\frac{2\pi  b_{\bot}(\mathbf{q}_2)\Delta h^{\mathrm{max}}}{\lambda \sin(\theta_1)\Delta\phi_{90\%}(\gamma_{\mathrm{SNR}}^{\mathrm{min}}\gamma_{\mathrm{Rg}}^{\mathrm{min}}\gamma_{\mathrm{other}})}\Big\},\end{aligned}&\text{otherwise. }
	\end{dcases}$
\end{adjustbox}
\end{equation}
\fi
{Furthermore, }function $a_1$ is given by: 
\ifonecolumn
\begin{align}a_1(z_1)&=\Big(2^{A_1z_1  + A_2}-1\Big)\Big( z_1^2 (\tan^2(\theta_1)+1)+(g_x-x_t)^2+\notag\\g_z^2+ &2z_1\tan(\theta_1) [g_x-x_t]^+\Big)+\max\limits_{n}\Big(P_{\mathrm{com},1}[n] {\beta_{c,1}} \Big) +\notag \\\label{eq:g1} &\max\limits_{1 \leq n \leq N}\Big\{\big(2^{A_1z_1  + A_2}-1\big)(y[n]-g_y)^2 \Big\},
\end{align}
\else
\begin{align}
	a_1(z_1)&=\Big(2^{A_1z_1  + A_2}-1\Big)\Big( z_1^2 (\tan^2(\theta_1)+1)+(g_x-x_t)^2+\notag\\g_z^2+ &2z_1\tan(\theta_1) [g_x-x_t]^+\Big)+\max\limits_{n}\Big(P_{\mathrm{com},1}[n] {\beta_{c,1}} \Big) +\notag \\\label{eq:g1} &\max\limits_{1 \leq n \leq N}\Big\{\big(2^{A_1z_1  + A_2}-1\big)(y[n]-g_y)^2 \Big\},
\end{align}
\fi
where $ A_1=  \frac{2 n_B  B_{\mathrm{Rg}}\mathrm{PRF}}{c B_{c,1}}\left( \frac{1}{\cos \left(\theta_1+ \frac{\Theta_{\mathrm{3dB}}}{2} \right)}  -\frac{1}{\cos \left(\theta_1- \frac{\Theta_{\mathrm{3dB}}}{2} \right)} \right)$, $A_2= \frac{n_B  B_{\mathrm{Rg}}\mathrm{PRF}\tau_p}{B_{c,1}}$, and function $a_2$ is given by: 
\begin{align}
	{a_2(z_1)}=\Big(2^{A_1z_1  + A_2}-1\Big)\Big( 2z_1g_z+2z_1\tan(\theta_1) [x_t-g_x]^+\Big) 
\end{align}
.\end{proposition}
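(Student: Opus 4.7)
The plan is to establish the equivalence in two stages: first, collapse $(\mathrm{P.1.b})$ into an optimization over the single scalar $z_1$ with simple box bounds; second, reformulate the data-rate constraint $\mathrm{C11}$ in the canonical monotonic form by introducing the auxiliary variable $t$. The starting observation is that $\mathrm{C2}$ determines $x_1 = x_t - z_1\tan\theta_1$, so that $\mathbf{q}_1$ is fully parameterized by $z_1$ and $r_1(\mathbf{q}_1) = z_1/\cos\theta_1$; consequently, every constraint of $(\mathrm{P.1.b})$ becomes a function of $z_1$ alone (with $\mathbf{q}_2$ held fixed).

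For the first stage, I would consolidate $\mathrm{C1},\mathrm{C3},\mathrm{C5},\mathrm{C8},\mathrm{C9}$ into the two bounds $\mathrm{\overline{C1a}}$ and $\mathrm{\overline{C1b}}$ by solving each for $z_1$. Constraint $\mathrm{C3}$ gives $z_1 \geq r_2(\mathbf{q}_2)\cos\theta_1$; inverting $h_{\mathrm{amb}} = \lambda r_1\sin\theta_1/b_\perp$ in $\mathrm{C8}$ and $\mathrm{C9}$ yields, respectively, a lower bound and an upper bound on $z_1$ (the latter through the numerically computed $\Delta\phi_{90\%}$ evaluated at the worst-case coherence). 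The nontrivial step is $\mathrm{C5}$: using the orthogonal decomposition $b^2 = b_\perp^2(\mathbf{q}_2) + b_\parallel^2$ and noting that $b_\perp$ is invariant under sliding $\mathbf{q}_1$ along the fixed master LOS, the constraint is automatic when $b_\perp \geq b_{\min}$ and otherwise produces the extra requirement $|b_\parallel| \geq \sqrt{b_{\min}^2 - b_\perp^2}$, which splits into one lower-bound and one upper-bound contribution on $z_1$. Taking the tightest of all upper and lower bounds reproduces (\ref{eq:Zmin}) and (\ref{eq:Zmax}) exactly, and both directions of the equivalence for this group of constraints follow.

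For the second stage, inverting the logarithm in $\mathrm{C11}$ rewrites it as $(2^{A_1 z_1 + A_2} - 1)\,d_{1,n}^2 \leq P_{\mathrm{com},1}[n]\beta_{c,1}$ for every $n$, with $A_1, A_2$ read off from $R_{\min,1}(\mathbf{q}_1)/B_{c,1}$ after fixing $\theta_1$. Expanding $d_{1,n}^2$ and substituting $x_1 = x_t - z_1\tan\theta_1$ produces cross terms $-2z_1 g_z$ and $\pm 2z_1\tan\theta_1(g_x - x_t)$ whose signs depend on the relative GS position and thus destroy monotonicity in $z_1$. The construction uses the identity $\alpha = [\alpha]^+ - [-\alpha]^+$ to sort every contribution into a non-decreasing part $a_1(z_1)$ or a non-decreasing part $a_2(z_1)$, and the $\max_n$ scalarization absorbs the per-slot worst case into two closed-form functions as in (\ref{eq:g1}). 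Once both sides are monotone in $z_1$, introducing $t \in [0, a_1(z_{\max}) - a_1(0)]$ splits the single inequality $a_1(z_1) \leq a_2(z_1)$ into the two comparable inequalities $\overline{\mathrm{C11a}}$ and $\overline{\mathrm{C11b}}$ measured against the common envelope $a_1(z_{\max})$, which is the canonical monotonic-optimization form; $\overline{\mathrm{C11c}}$ simply records the range of $t$ consistent with them.

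The main obstacle will be the careful sign bookkeeping in the second stage: verifying that every linear-in-$z_1$ cross term is routed into the correct component so that both $a_1$ and $a_2$ are genuinely monotone, and that the $\max_n$-based aggregation over time slots remains tight rather than conservative (so no solutions are lost or introduced in passing to the reformulation). A secondary obstacle is the piecewise structure of $Z_{\min}$ and $Z_{\max}$ in the two geometric regimes $b_\perp \geq b_{\min}$ versus $b_\perp < b_{\min}$, each of which requires a dedicated argument based on projecting the baseline onto the master LOS and identifying which roots of $|b_\parallel| = \sqrt{b_{\min}^2 - b_\perp^2}$ are compatible with constraint $\mathrm{C4}$.
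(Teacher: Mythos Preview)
Your proposal is correct and follows essentially the same two-stage route as the paper's proof: use $\mathrm{C2}$ to reduce to a single variable $z_1$ with $r_1=z_1/\cos\theta_1$, collapse the geometric constraints into the box $[Z_{\min},Z_{\max}]$ via the perpendicular/parallel baseline decomposition, and then split $\mathrm{C11}$ into monotone pieces with the auxiliary $t$. Your write-up is in fact more detailed than the paper's appendix (which only sketches the argument and, in its prose, even omits $\mathrm{C9}$ from the list of constraints absorbed into the box bounds, though the formula for $Z_{\max}$ uses it); the one small slip is your mention of $\mathrm{C4}$ when disambiguating the two roots of $|b_\parallel|=\sqrt{b_{\min}^2-b_\perp^2}$, since $\mathrm{C4}$ constrains the slave position and does not appear in $(\mathrm{P.1.b})$---in the paper both roots are simply kept, one contributing to $Z_{\min}$ and the other to $Z_{\max}$.
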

\begin{proof}
{Please refer to Appendix \ref{app:monotonic}.}
\end{proof}
{According to \textbf{Proposition} 1, sub-problem $(\mathrm{P.1.b})$ can be written in the canonical form of a monotonic optimization problem given by $(\overline{\mathrm{P.1.b}})$,}
\begin{figure*}[] 
	\centering
	\begin{tabular}{cccc}
		\includegraphics[width=0.25\linewidth]{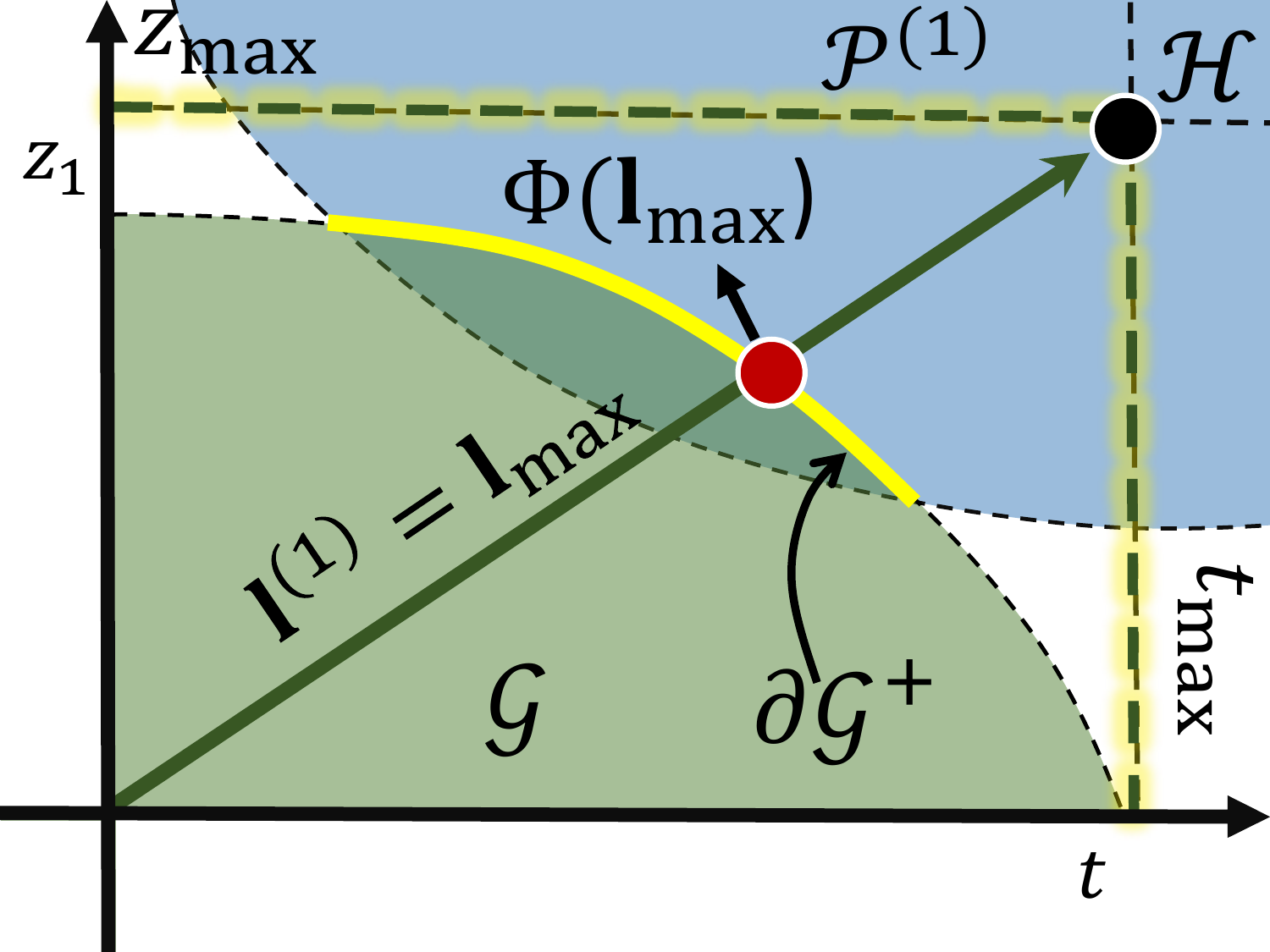}  &
		\includegraphics[width=0.25\linewidth]{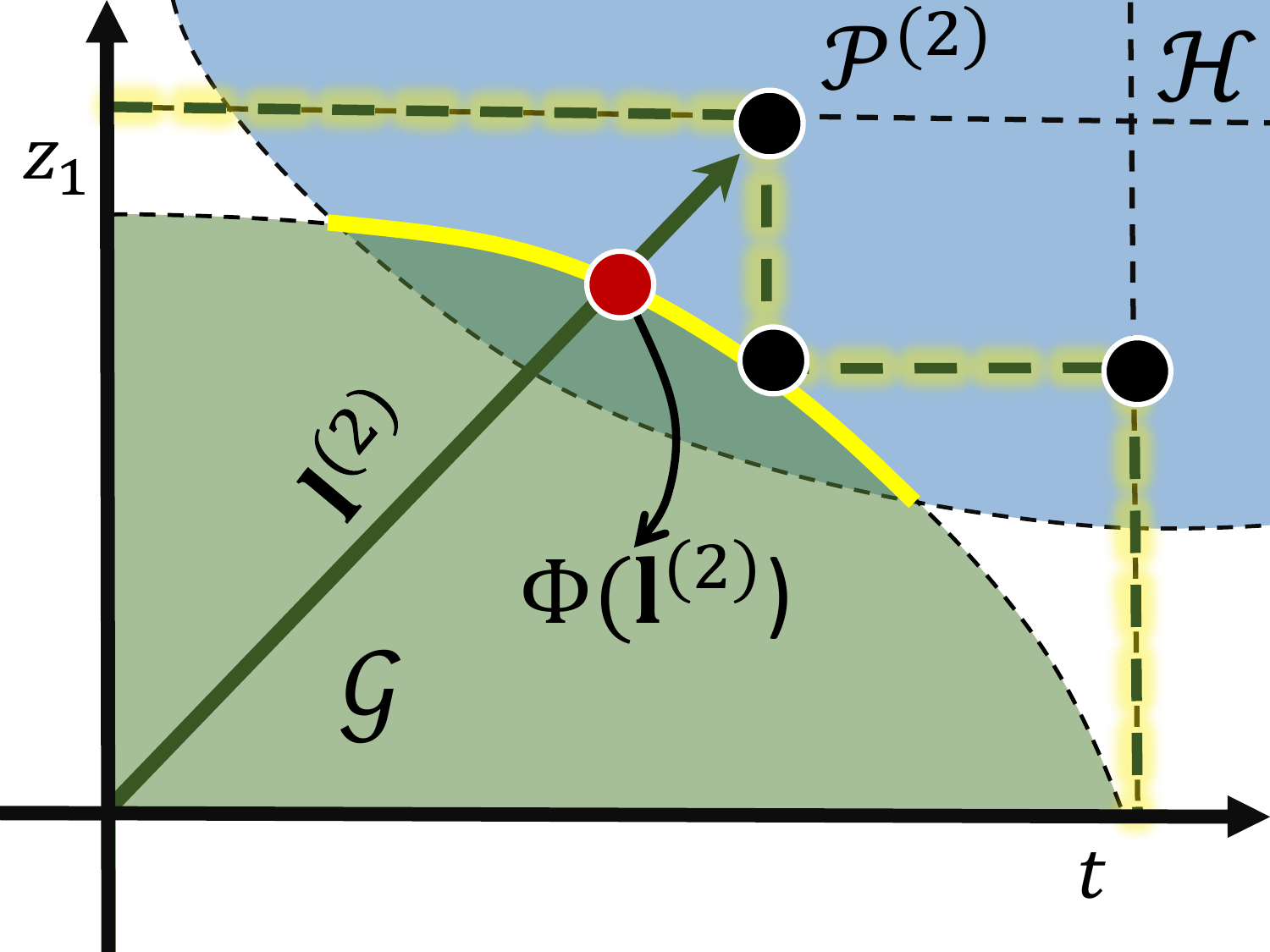}&
		\includegraphics[width=0.25\linewidth]{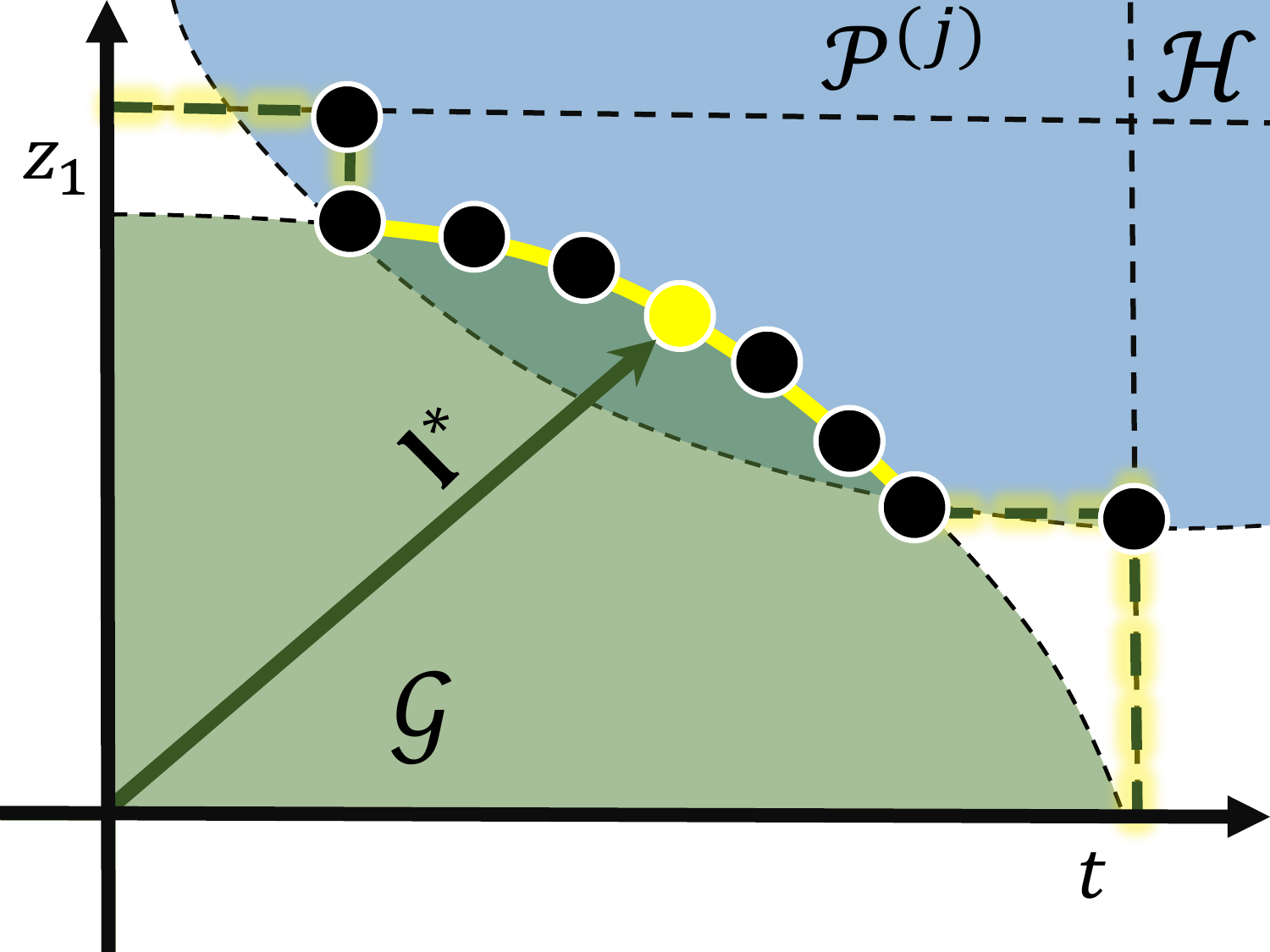} \\
		{\small(a)}& {\small(b)}& {\small(c)}
	\end{tabular}
	\caption{Illustration of polyblock outer approximation algorithm. (a) Feasible set for a monotonic maximization problem where the normal and conormal sets are denoted by $\mathcal{G}$ and $\mathcal{H}$, respectively. Vertex $\mathbf{l}_{\rm max}$ is initially selected as {the} best vertex and its projection {onto} the normal set, denoted by $\Phi(\mathbf{l}_{\rm max})$, is calculated. (b)  {A new} polyblock is constructed and {the} next best vertex $\mathbf{l}^{(2)}=\argmax_{\mathbf{l} \in \mathcal{T}^{(2)}}\{ C_N(\mathbf{l},\mathbf{q}_2, \mathbf{v}_y)\}$ is selected for projection. (c)  {Illustration} of a tighter polyblock obtained after several iterations and {the}  $\epsilon_1$-optimal solution $\mathbf{l}^*$ is located.}
	\label{fig:MO}
\end{figure*}  
where the feasible set is the intersection of normal set $\mathcal{G}$, spanned by constraints ($\overline{\mathrm{C1a}}, \mathrm{C6}, \overline{\mathrm{ C11a}},\overline{\mathrm{ C11c}}$), and conormal set $\mathcal{H}$, spanned by constraints $\rm (\overline{C1b}, \overline{C11b})$. The optimal solution of sub-problem ${(\overline{\mathrm{P.1.b}}})$ lies at the upper boundary of normal set $\mathcal{G}$, denoted by $\partial^+\mathcal{G}$ \cite{2}, see Figure \ref{fig:MO}a. This region is not known a priori but the sequential  polyblock approximation algorithm can be used to approach this set from above. 
\ifonecolumn
 \begin{algorithm}
 	\caption{Polyblock Outer Approximation Algorithm } 	\label{alg:polyblock}
 	\begin{algorithmic}[1] 
 		\State Initialization: Set polyblock $\mathcal{P}^{(1)}$ with vertex $\mathbf{l}^{(1)}=\mathbf{l}_{\rm max} = (z_{\rm max},t_{\rm max})^T$, vertex set $\mathcal{T}^{(1)}=\{ \mathbf{l}_{\rm max}\}$, small positive number $\epsilon_1\geq0$, objective function value $ \rm CBV^{(1)}=- \infty$, current solution $\mathbf{s}^{(1)}=[\cdot]$, and iteration number $j=1$.
 			\State Remove from $\mathcal{T}^{(1)}$ vertices $\{\mathbf{l} \in \mathcal{T}^{(1)}| \mathbf{l} \notin \mathcal{H}\}$
 		\State \textbf{while} $\Big(\mathcal{T}^{(j)} \neq \emptyset$ and $\Big|C_N(\mathbf{l}^{(j)},\mathbf{q}_2,\mathbf{v}_y)-\mathrm{CBV}^{(j)}\Big|> \epsilon_1\Big)$
 		\State\hspace{\algorithmicindent} Select $ \mathbf{l}^{(j)}=\argmax\limits_{ \mathbf{l} \in \mathcal{T}^{(j)}}\Big\{C_N(\mathbf{l},\mathbf{q}_2,\mathbf{v}_y) \Big\}$
 		\State\hspace{\algorithmicindent}  Compute the projection $\Phi(\mathbf{l}^{(j)})$ {onto} $\partial \mathcal{G}^+$ using \textbf{Algorithm} \ref{algorithm3}
 		\State \hspace{\algorithmicindent} \textbf{if}  $\Phi(\mathbf{l}^{(j)}) \in \mathcal{H}$ and $C_N(\Phi(\mathbf{l}^{(j)}),\mathbf{q}_2,\mathbf{v}_y)> \mathrm{CBV}^{(j)}$  \State\hspace{\algorithmicindent} \textbf{then}
 		\State\hspace{\algorithmicindent}\hspace{\algorithmicindent} Let the best solution $\textbf{s}^{(j+1)}=\Phi(\mathbf{l}^{(j)})$ and $\mathrm{CBV}^{(j+1)}=C_N(\Phi(\mathbf{l}^{(j)}),\mathbf{q}_2,\mathbf{v}_y)$ 
 		\State\hspace{\algorithmicindent} \textbf{else} 
 		\State\hspace{\algorithmicindent}\hspace{\algorithmicindent} Set $\textbf{s}^{(j+1)}=\textbf{s}^{(j)}$ and $\mathrm{CBV}^{(j+1)}=\mathrm{CBV}^{(j)}$
 		\State\hspace{\algorithmicindent} \textbf{endif}
 		\State\hspace{\algorithmicindent} Set $\mathcal{T}^{(j+1)}=\{\mathcal{T}^{(j)}\setminus \mathbf{l}^{(j)}\} \cup \Big\{\Big(\Phi(\mathbf{l}^{(j)})[1],l^{(j)}[2]\Big)^T\Big\} \cup$  $\Big\{\Big(l^{(j)}[1],\Phi(\mathbf{l}^{(j)})[2]\Big)^T\Big\}$
 		\State \hspace{\algorithmicindent} Remove from $\mathcal{T}^{(j+1)} $ vertices $\{ \mathbf{l} \in\mathcal{T}^{(j+1)}|  \mathbf{l} \notin \mathcal{H}\}$ 
 		\State \hspace{\algorithmicindent} Set $j=j+1$
 		\State \textbf{end while}
 		\State \textbf{return} solution $\mathbf{q}_1=\Big(\underbrace{x_t-s^{(j)}[1] \tan(\theta_1)}_{x_1},\underbrace{s^{(j)}[1]}_{z_1}\Big)^T$ 
 	\end{algorithmic}
 \end{algorithm}
 \else 
  \begin{algorithm}
 	\caption{Polyblock Outer Approximation Algorithm } 	\label{alg:polyblock}
 	\begin{algorithmic}[1] 
 		\State Initialization: Set polyblock $\mathcal{P}^{(1)}$ with vertex $\mathbf{l}^{(1)}=\mathbf{l}_{\rm max} = (z_{\rm max},t_{\rm max})^T$, vertex set $\mathcal{T}^{(1)}=\{ \mathbf{l}_{\rm max}\}$, small positive number $\epsilon_1\geq0$, objective function value $ \rm CBV^{(1)}=- \infty$, current solution $\mathbf{s}^{(1)}=[\cdot]$, and iteration number $j=1$.
 		\State Remove from $\mathcal{T}^{(1)}$ vertices $\{\mathbf{l} \in \mathcal{T}^{(1)}| \mathbf{l} \notin \mathcal{H}\}$
 		\State \textbf{while} $\Big(\mathcal{T}^{(j)} \neq \emptyset$ and $\Big|C_N(\mathbf{l}^{(j)},\mathbf{q}_2,\mathbf{v}_y)-\mathrm{CBV}^{(j)}\Big|> \epsilon_1\Big)$
 		\State\hspace{\algorithmicindent} Select $ \mathbf{l}^{(j)}=\argmax\limits_{ \mathbf{l} \in \mathcal{T}^{(j)}}\Big\{C_N(\mathbf{l},\mathbf{q}_2,\mathbf{v}_y) \Big\}$
 		\State\hspace{\algorithmicindent}  Compute the projection $\Phi(\mathbf{l}^{(j)})$ {onto} $\partial \mathcal{G}^+$ \State\hspace{\algorithmicindent} using \textbf{Algorithm} \ref{algorithm3}
 		\State \hspace{\algorithmicindent} \textbf{if}  $\Phi(\mathbf{l}^{(j)}) \in \mathcal{H}$ and $C_N(\Phi(\mathbf{l}^{(j)}),\mathbf{q}_2,\mathbf{v}_y)> \mathrm{CBV}^{(j)}$  \Statex\hspace{\algorithmicindent} \textbf{then}
 		\State\hspace{\algorithmicindent}\hspace{\algorithmicindent} Let the best solution $\textbf{s}^{(j+1)}=\Phi(\mathbf{l}^{(j)})$ and \Statex\hspace{\algorithmicindent}\hspace{\algorithmicindent} $\mathrm{CBV}^{(j+1)}=C_N(\Phi(\mathbf{l}^{(j)}),\mathbf{q}_2,\mathbf{v}_y)$ 
 		\State\hspace{\algorithmicindent} \textbf{else} 
 		\State\hspace{\algorithmicindent}\hspace{\algorithmicindent} Set $\textbf{s}^{(j+1)}=\textbf{s}^{(j)}$ and $\mathrm{CBV}^{(j+1)}=\mathrm{CBV}^{(j)}$
 		\State\hspace{\algorithmicindent} \textbf{endif}
 		\State\hspace{\algorithmicindent} Set $\mathcal{T}^{(j+1)}=\{\mathcal{T}^{(j)}\setminus \mathbf{l}^{(j)}\} \cup \Big\{\Big(\Phi(\mathbf{l}^{(j)})[1],l^{(j)}[2]\Big)^T\Big\} \cup$ \Statex\hspace{\algorithmicindent} $\Big\{\Big(l^{(j)}[1],\Phi(\mathbf{l}^{(j)})[2]\Big)^T\Big\}$
 		\State \hspace{\algorithmicindent} Remove from $\mathcal{T}^{(j+1)} $ vertices $\{ \mathbf{l} \in\mathcal{T}^{(j+1)}|  \mathbf{l} \notin \mathcal{H}\}$ 
 		\State \hspace{\algorithmicindent} Set $j=j+1$
 		\State \textbf{end while}
 		\State \textbf{return} solution $\mathbf{q}_1=\Big(\underbrace{x_t-s^{(j)}[1] \tan(\theta_1)}_{x_1},\underbrace{s^{(j)}[1]}_{z_1}\Big)^T$ 
 	\end{algorithmic}
 \end{algorithm}
 \fi
 \begin{algorithm}
 	\caption{Bisection Search Algorithm to Compute the Upper Boundary Point } 	\label{algorithm3}
 	\begin{algorithmic}[1] 
 		\State Initialization: Given $\mathbf{l}^{(j)}$, set $ \lambda_{\rm min}=0$, $ \lambda_{\rm max}=1$, and error tolerance $0< {\epsilon_2} \ll 1$.
 		\State \textbf{repeat}
 		\State\hspace{\algorithmicindent}$\lambda= \frac{(\lambda_{\rm max}+\lambda_{\rm min})}{2}$ \Comment{ Bisection middle point}
 		\State\hspace{\algorithmicindent}\textbf{if} $\lambda \mathbf{l}^{(j)} $ is feasible for ($\overline{\mathrm{C1a}}, \mathrm{C6}, \overline{\mathrm{ C11a}},\overline{\mathrm{ C11c}}$)  \textbf{then}$ $ \State\hspace{\algorithmicindent}\hspace{\algorithmicindent}Set $ \lambda_{\rm min}=\lambda$ \Comment{ Search in [$\lambda,\lambda_{\rm max}$]}
 		\State\hspace{\algorithmicindent}\textbf{else}\Comment{If $\lambda \mathbf{l}^{(j)} \notin \mathcal{G}$}
 		\State\hspace{\algorithmicindent}\hspace{\algorithmicindent}Set $\lambda_{\rm max}=\lambda$\Comment{Search in [$\lambda_{\rm min},\lambda$]}\
 		\State\hspace{\algorithmicindent}\textbf{end if}
 		\State \textbf{until} $\lambda_{\rm max}-\lambda_{\rm min}\leq {\epsilon_2}$\Comment{ Convergence if $\lambda_{\rm max}\approx\lambda_{\rm min}$}
 		\State \textbf{return}  $\Phi(\mathbf{l}^{(j)})=\lambda\mathbf{l}^{(j)}$
 	\end{algorithmic}
 \end{algorithm}
 In \textbf{Algorithm} \ref{alg:polyblock}, we provide all steps of the proposed polyblock outer approximation algorithm that generates an $\epsilon_1$-approximate  optimal solution to sub-problem $(\overline{\mathrm{P.1.b}})$. In addition, Figure \ref{fig:MO}  depicts the operation of the polyblock algorithm. In fact,  optimization variables $z_1$ and $t$ of sub-problem $(\overline{\mathrm{P.1.b}})$   are collected in vertex vector $\mathbf{l}=(l[1],l[2])^T=\big(z_1,t\big)^T \in \mathbb{R}^{2}$. Hereinafter, for ease of notation, $C_N(\mathbf{l},\mathbf{q}_2,\mathbf{v}_y)$  refers to $ C_N((x_t-l[1]\tan(\theta_1),l[1])^T,\mathbf{q}_2,\mathbf{v}_y)$. As shown in Figure \ref{fig:MO}a, first, we  initialize a polyblock $\mathcal{P}^{(1)}$ with corresponding vertex set $\mathcal{T}^{(1)}$ containing vertex $\mathbf{l}_{\rm max} =(z_{\rm max},t_{\rm max})^T \in \mathbb{R}^{2}$, where $t_{\rm max}={a_1(z_{\rm max}) -a_1(0)}$. Then, in iteration $j$, the vertex that maximizes the objective function of sub-problem $(\overline{\mathrm{P.1.b}})$ is selected to create the next polyblock, i.e., $\mathbf{l}^{(j)}=\argmax\limits_{\mathbf{l} \in \mathcal{T}^{(j)}}\{ C_N(\mathbf{l},\mathbf{q}_2,\mathbf{v}_y)\}$.   The projection of vertex $\mathbf{l}^{(j)}$ onto the normal set $\mathcal{G}$, denoted by $\Phi(\mathbf{l}^{(j)})$, is calculated and used to construct a tighter polyblock $\mathcal{P}^{(j)}$, see Figure  \ref{fig:MO}b. The new vertex set is  calculated as  $\mathcal{T}^{(j)}=\{\mathcal{T}^{(j)}\setminus \mathbf{l}^{(j)}\} \cup \left\{\Big(\Phi( \mathbf{l}^{(j)})[1],l^{(j)}[2]\Big)^T\right\}\cup \left\{\Big(l^{(j)}[1],\Phi( \mathbf{l}^{(j)})[2]\Big)^T\right\}$. Once a feasible projection (i.e., $\Phi(\mathbf{l}^{(j)}) \in \mathcal{G}\cap\mathcal{H}$) that improves the objective function of sub-problem $(\overline{\mathrm{P.1.b}})$ is located, the corresponding vector $\mathbf{s}^{(j+1)}=\Phi(\mathbf{l}^{(j)})$ and its objective function  $\mathrm{CBV}^{(j+1)}=C_N(\Phi(\mathbf{l}^{(j)}),\mathbf{q}_2,\mathbf{v}_y)$ are recorded.  To speed up convergence and save memory, vertices that are not in set $\mathcal{H}$ are removed in each iteration \cite{1}.
 The same procedure is repeated multiple times, such that a sequence of polyblocks converges from above to the feasible set, i.e., $\mathcal{P}^{(1)}\supset\mathcal{P}^{(2)}\supset...\supset\mathcal{P}^{(j)}\supset \mathcal{G}\cap\mathcal{H}$, see Figure \ref{fig:MO}c. { The termination} criteria of the proposed algorithm  are (i)  $|C_N(\mathbf{l}^{(j)},\mathbf{q}_2,\mathbf{v}_y)-\mathrm{CBV}^{(j)}|\leq \epsilon_1$, for convergence,  where $\epsilon_1$ is the error tolerance, and (ii) $\mathcal{T}^{(j)} = \emptyset$ for a non-feasible problem.\par In general, \textbf{Algorithm} \ref{alg:polyblock} converges to an $\epsilon_1$-optimal solution with an exponential time complexity in the dimension of the vertex \cite{1}. {Although the polyblock outer approximation algorithm is, in general, computationally expensive, especially when the number of variables is large, for the problem at hand, the vertex dimension is only two, therefore, \textbf{Algorithm} \ref{alg:polyblock} obtains the optimal solution with low computational complexity.} In particular, the time complexity of \textbf{Algorithm} \ref{alg:polyblock} stems mainly from the projection performed by \textbf{Algorithm} \ref{algorithm3}. In \textbf{Algorithm} \ref{algorithm3}, a bisection search in the interval $[0,1]$ is {conducted} with precision ${\epsilon_2}$, thus, the corresponding time complexity is $\mathcal{O}({\log_2}(1/{\epsilon_2}))$. Thus, the complexity of \textbf{Algorithm} \ref{alg:polyblock} is $\mathcal{O}(M_2 {\log_2}(1/{\epsilon_2}))$, where $M_2$ is the maximum number of iterations required for convergence.
\subsection{Resource Allocation Optimization}
In this subsection, we aim to optimize $\{\mathbf{v}_y,\mathbf{P}_{\mathrm{com,1}}, \mathbf{P}_{\mathrm{com,2}}\}$ given \ac{uav} formation $\{\mathbf{q}_1,\mathbf{q}_2\}$. Then, problem $\mathrm{(P.1)}$ reduces to sub-problem  $\mathrm{(P.1.c)}$ given by: 
\begin{alignat*}{2} 
	&(\mathrm{P.1.c}):\max_{\mathbf{P}_{\mathrm{com,1}}, \mathbf{P}_{\mathrm{com,2}}, \mathbf{v}_y} \hspace{3mm}  C_N(\mathbf{q}_1,\mathbf{q}_2,\mathbf{v}_y)   & \qquad&  \\
	\text{s.t.} \hspace{3mm}  
	&  \mathrm{C6},\mathrm{C10},\mathrm{C11}, \mathrm{C12}, \mathrm{C13}.   &      &     
\end{alignat*}
Sub-problem $\mathrm{(P.1.c)}$ is difficult to solve due to the non-convex propulsion power model used in constraint  $\mathrm{C12}$. Yet, we provide a low-complexity sub-optimal solution to $\mathrm{(P.1.c)}$ based on { successive convex approximation (SCA)}.  To start with, we introduce a slack vector $\mathbf{u}=(u[1],...,u[N])^T\in \mathbb{R}^{N}$ and replace the  propulsion power {equivalently} by: 
\ifonecolumn
\begin{align}
	\widetilde{P_{\mathrm{prop},n}}(\mathbf{v}_y,\mathbf{u})=  P_0 \left(1+\frac{3v_y^2[n]}{U^2_{\mathrm{tip}}}\right)+u[n]+\frac{1}{2}d_0 \rho s Av_y^3[n], \forall n. \label{eq:appx_prop}
\end{align} 
\else 
\begin{align}\begin{adjustbox}{width=\columnwidth} $
	\widetilde{P_{\mathrm{prop},n}}(\mathbf{v}_y,\mathbf{u})=  P_0 \left(1+\frac{3v_y^2[n]}{U^2_{\mathrm{tip}}}\right)+u[n]+ \frac{1}{2}d_0 \rho s Av_y^3[n],   \forall n. \label{eq:appx_prop}$ \end{adjustbox}
\end{align} 
\fi
{Constraint $\mathrm{C12}$ is denoted by $\widetilde{\mathrm{C12}}$ after replacing $P_{\mathrm{prop},n}$ with  $\widetilde{P_{\mathrm{prop},n}}$ and the {elements of} slack vector $\mathbf{u}$ must satisfy the following additional constraint:}
\begin{equation}
	\mathrm{C12a}:	1+\frac{v^4_y[n]}{4v_0^4}-\Big( \frac{u^2[n]}{P^2_I}+\frac{v^2_y[n]}{2v^2_0} \Big)^2\leq 0, \forall n.
\end{equation}
However, constraint $\mathrm{C12a}$ is still non-convex due to the term $ f(\mathbf{v}_y,\mathbf{u}) =\Big( \frac{u^2[n]}{P^2_I}+\frac{v^2_y[n]}{2v^2_0} \Big)^2$. Therefore, in iteration $i$ of the \ac{sca} algorithm, we approximate function $f$ based on {its} Taylor {series expansion} around point $\{ \mathbf{v}_y^{(i)},\mathbf{u}^{(i)}\} $ as follows: 
\ifonecolumn
\begin{equation}
	\widetilde{f}(\mathbf{v}_y,\mathbf{u})= f(\mathbf{v}_y^{(i)},\mathbf{u}^{(i)})+ \nabla_{\mathbf{v}_y} f(\mathbf{v}_y^{(i)},\mathbf{u}^{(i)})^T (\mathbf{v}_y-\mathbf{v}_y^{(i)})+\nabla_{\mathbf{u}} f(\mathbf{v}_y^{(i)},\mathbf{u}^{(i)})^T (\mathbf{u}-\mathbf{u}^{(i)}),
\end{equation}
\else 
\begin{align}
	\widetilde{f}(\mathbf{v}_y,\mathbf{u})&= f(\mathbf{v}_y^{(i)},\mathbf{u}^{(i)})+ \nabla_{\mathbf{v}_y} f(\mathbf{v}_y^{(i)},\mathbf{u}^{(i)})^T (\mathbf{v}_y-\mathbf{v}_y^{(i)})+\notag \\&\nabla_{\mathbf{u}} f(\mathbf{v}_y^{(i)},\mathbf{u}^{(i)})^T (\mathbf{u}-\mathbf{u}^{(i)}),
\end{align}
\fi
where $\nabla_{\mathbf{v}_y} f(\mathbf{v}_y^{(i)},\mathbf{u}^{(i)})=\frac{\partial f}{\partial  \mathbf{v}_y}(\mathbf{v}_y^{(i)},\mathbf{u}^{(i)})$ and $\nabla_{\mathbf{u}} f(\mathbf{v}_y^{(i)},\mathbf{u}^{(i)})= \frac{\partial f}{\partial  \mathbf{u}}(\mathbf{v}_y^{(i)},\mathbf{u}^{(i)})$. Function $\widetilde{f}(\mathbf{v}_y,\mathbf{u})$ is {inserted} in constraint  $\mathrm{C12a}$ leading to a new convex constraint  denoted by $\widetilde{\mathrm{C12a}}$. The resulting convex optimization problem is given by: 
\begin{alignat*}{2} 
	(\widetilde{\mathrm{P.1.c}})&:\max_{\mathbf{P}_{\mathrm{com,1}}, \mathbf{P}_{\mathrm{com,2}}, \mathbf{v}_y,\mathbf{u}} \hspace{3mm}  C_N(\mathbf{q}_1,\mathbf{q}_2,\mathbf{v}_y)   & \qquad&  \\
	&  	\text{s.t. }  \mathrm{C6},\mathrm{C10},\mathrm{C11},\widetilde{\mathrm{C12}},\widetilde{\mathrm{C12a}},\mathrm{C13}.
\end{alignat*}
\ifonecolumn
\begin{algorithm}[]
	\caption{Successive Convex Approximation Algorithm}\label{alg:sca}
	\begin{algorithmic}[1] 
		\label{algorithm1} \State For fixed $\{\mathbf{q}_1,\mathbf{q}_2\}$, set initial point $\{\mathbf{v}_y^{(1)},\mathbf{u}^{(1)}\}$, iteration index $i=1$, and error tolerance $0< {\epsilon_3} \ll 1$.
		\State \textbf{repeat}  
		\State\hspace{\algorithmicindent}$ $Determine coverage $ C_N(\mathbf{q}_1,\mathbf{q}_2,\mathbf{v}_y), \mathbf{u}, \mathbf{v}_y,  \mathbf{P}_{\mathrm{ com,1}}, $ and $ \mathbf{P}_{\mathrm{ com,2}} $ by solving $\mathrm{ \widetilde{(P.1.c)}}$ around point 	\Statex\hspace{\algorithmicindent}$\{\mathbf{u}^{(i)},\mathbf{v}_y^{(i)}\}$ using CVXPY       
		\State\hspace{\algorithmicindent}$ $Set $ i=i+1,\hspace{1mm} \mathbf{u}^{(i)}= \mathbf{u},\hspace{1mm} \mathbf{v}_y^{(i)}=\mathbf{v}_y$ \Comment{Update solution and iteration index}
		\State \textbf{until} $ \big |\frac{C_N(\mathbf{q}_1,\mathbf{q}_2,\mathbf{v}_y^{(i)})-C_N(\mathbf{q}_1,\mathbf{q}_2,\mathbf{v}_y^{(i-1)})}{C_N(\mathbf{q}_1,\mathbf{q}_2,\mathbf{v}_y^{(i)})}\big|<  {\epsilon_3} $
		\State \textbf{return} solution \{$\mathbf{v}_y, \mathbf{P}_{\mathrm{ com,1}}, \mathbf{P}_{\mathrm{ com,2}}$\}
	\end{algorithmic}
\end{algorithm}
\else 
\begin{algorithm}[]
	\caption{Successive Convex Approximation for $\mathrm{ (P.1.c)}$ }\label{alg:sca}
	\begin{algorithmic}[1] 
		\label{algorithm1} \State For fixed $\{\mathbf{q}_1,\mathbf{q}_2\}$, set initial point $\{\mathbf{v}_y^{(1)},\mathbf{u}^{(1)}\}$, iteration index $i=1$, and error tolerance $0< {\epsilon_3}  \ll 1$.
		\State \textbf{repeat}  
		\State\hspace{\algorithmicindent}$ $Determine coverage $ C_N(\mathbf{q}_1,\mathbf{q}_2,\mathbf{v}_y), \mathbf{u}, \mathbf{v}_y,  \mathbf{P}_{\mathrm{ com,1}}, $ \Statex\hspace{\algorithmicindent}and $ \mathbf{P}_{\mathrm{ com,2}} $ by solving $\mathrm{ \widetilde{(P.1.c)}}$ around point $\{\mathbf{u}^{(i)},\mathbf{v}_y^{(i)}\}$ \Statex\hspace{\algorithmicindent}using CVXPY       
		\State\hspace{\algorithmicindent}$ $Set $ i=i+1,\hspace{1mm} \mathbf{u}^{(i)}= \mathbf{u},\hspace{1mm} \mathbf{v}_y^{(i)}=\mathbf{v}_y$ 
		\State \textbf{until} $\big |\frac{C(\mathbf{q}_1,\mathbf{q}_2,\mathbf{v}_y^{(i)})-C(\mathbf{q}_1,\mathbf{q}_2,\mathbf{v}_y^{(i-1)})}{C(\mathbf{q}_1,\mathbf{q}_2,\mathbf{v}_y^{(i)})}\big|<  {\epsilon_3} $
		\State \textbf{return} \{$\mathbf{v}_y, \mathbf{P}_{\mathrm{ com,1}}, \mathbf{P}_{\mathrm{ com,2}}$\}
	\end{algorithmic}
\end{algorithm}
\fi
The proposed procedure to solve sub-problem $\mathrm{(P.1.c)}$ is summarized in \textbf{Algorithm} \ref{alg:sca}, where the convex
approximation $\mathrm{ (\widetilde{P.1.c})}$ is solved using the Python convex optimization library CVXPY \cite{cvxpy}. It can be shown that \textbf{Algorithm} \ref{alg:sca}
converges to a local optimum of sub-problem $\mathrm{(P.1.c)}$ in polynomial computational time complexity \cite{sca_complexity}. In particular, \textbf{Algorithm} \ref{alg:sca} {involves} $3N-1$ optimization variables and, therefore, has a computational time complexity of $\mathcal{O}(M_3(3N-1)^{3.5})$, where $M_3$ is the number of iterations needed for convergence \cite{ipm_complexity}.
\subsection{Solution {of} Problem $\rm (P.1)$}
To solve problem $\rm (P.1)$, we use \ac{ao} by solving sub-problems $\rm (P.1.a)$,  $\rm (P.1.b)$, and  $\rm (P.1.c)$, alternately, see Figure \ref{fig:block_diagram}. To guarantee the convergence of the proposed {\ac{ao} \textbf{Algorithm} \ref{alg:ao}}, \textbf{Algorithm} \ref{alg:pso} {has to} be adjusted in each iteration $m$ to ensure {a} non-decreasing objective function. This can be achieved by adding one more particle to the initial \ac{pso} population of size $D$. The added particle represents the position of the slave \ac{uav} in the across-track {plane} position in the previous \ac{ao} iteration, i.e., $\mathbf{q}_2^{(m-1)}$, and the resulting \ac{pso} population is now of size $D+1$. {\newline In general, for} non-convex optimization problems, characterized by multiple local optima, \ac{ao} methods are vulnerable to {getting trapped in} low-quality solutions \cite{notes_ao}, which is the case for  problem  $\mathrm{(P.1)}$ due to {the} coupled optimization variables. In particular,  the objective function  is increasing \ac{wrt} the \ac{uav} velocity $\mathbf{v}_y$. {However}, increasing $\mathbf{v}_y$ results in constraining the \ac{uav} formation $\{\mathbf{q}_1,\mathbf{q}_2\}$ due to the \ac{snr} decorrelation {(see constraint $\mathrm{C6}$)}, yielding multiple local optima depending on the magnitude of the \ac{ao}  variable updates in each iteration. To avoid premature convergence to a low-quality solution, we propose to limit the {\ac{uav}} velocity update in iteration $m$ of the \ac{ao} algorithm by using a step size, denoted by ${\psi}$, as follows: 
\ifonecolumn
\begin{equation}\label{eq:step_size}
	\mathbf{v}^{(m)}_y=\begin{dcases}
		\mathbf{v}_y^{(m-1)}+{\psi}(\mathbf{v}_y-\mathbf{v}_y^{(m-1)}),\hspace{2mm}&\text{ if  feasible for $\mathrm{(P.1.c)}$, } \\
		\mathbf{v}_y, \hspace{2mm}&\text{ otherwise, $\forall m.$ }
\end{dcases}	
\end{equation}
\else 
	\begin{equation}\label{eq:step_size}
	\mathbf{v}^{(m)}_y=\begin{dcases}
		\mathbf{v}_y^{(m-1)}+{\psi}(\mathbf{v}_y-\mathbf{v}_y^{(m-1)}), &\text{if  feasible for $\mathrm{(P.1.c)}$, }\\
		\mathbf{v}_y, &\text{ otherwise, $\forall m.$ }
	\end{dcases}	
\end{equation}
\fi 
where $\mathbf{v}^{(m)}_y$ is the \ac{uav} velocity update in the $m^{\rm th}$ iteration of the \ac{ao} algorithm, and $\{\mathbf{v}_y,\mathbf{P}_{\rm com,1},\mathbf{P}_{\rm com,2}\}$ is the solution {of} $\mathrm{(P.1.c)}$ generated by \textbf{Algorithm} \ref{alg:sca}. Note that small ${\psi}$ values result in smaller updates for the velocity vector, whereas ${\psi}=1$ results in the classical \ac{ao} updates \cite{notes_ao}.\par
In \textbf{Algorithm} \ref{alg:ao}, we summarize all the steps of the solution of problem $\mathrm{(P.1)}$, see Figure \ref{fig:block_diagram}.  {Moreover, the convergence of \textbf{Algorithm} \ref{alg:ao} is proven in \textbf{Proposition} \ref{prop:convergence}.}
{
\begin{proposition}\label{prop:convergence}
	Algorithm \ref{alg:ao} is guaranteed to converge to a sub-optimal solution of problem $\mathrm{(P.1)}$. 
\end{proposition}
\begin{proof}
	Please refer to Appendix \ref{app:convergence}.
\end{proof}}
\ifonecolumn
\begin{algorithm}[h]
	\caption{$\psi$-Step Alternating Optimization Algorithm}\label{alg:ao}
	\begin{algorithmic}[1] 
		\State Set initial formation $\{\mathbf{q}_1^{(1)},\mathbf{q}_2^{(1)}\}$ and initial  resources $\{\mathbf{v}^{(1)}_y,\mathbf{P}^{(1)}_{\mathrm{com,1}},\mathbf{P}^{(1)}_{\mathrm{com,2}}\}$, iteration index $m=1$, and error tolerance $0< {\epsilon_4}  \ll 1$.
		\State \textbf{repeat}
		\State\hspace{\algorithmicindent}$ $ Set $ m=m+1$
		\State\hspace{\algorithmicindent}$ $ Determine coverage $ C_N(\mathbf{q}_1^{(m-1)},\mathbf{q}_2,\mathbf{v}_y^{(m-1)})$ and { set $\mathbf{q}_2^{(m)}=\mathbf{q}_2$}  by solving $\rm (P.1.a)$ for fixed \Statex\hspace{\algorithmicindent} $\{\mathbf{v}_y^{(m-1)},\mathbf{P}^{(m-1)}_{\mathrm{com,1}},\mathbf{P}^{(m-1)}_{\mathrm{com,2}},\mathbf{q}_1^{(m-1)}\}$  using \textbf{Algorithm} \ref{alg:pso}   
		\State\hspace{\algorithmicindent}$ $ Determine coverage $ C_N(\mathbf{q}_1,\mathbf{q}_2^{(m)},\mathbf{v}_y^{(m-1)})$ and { set $\mathbf{q}_1^{(m)}=\mathbf{q}_1$} by solving $\rm (P.1.b)$ for fixed \Statex\hspace{\algorithmicindent} $\{\mathbf{v}_y^{(m-1)},\mathbf{P}^{(m-1)}_{\mathrm{com,1}},\mathbf{P}^{(m-1)}_{\mathrm{com,2}},\mathbf{q}_2^{(m)}\}$  using \textbf{Algorithm} \ref{alg:polyblock}
		\State\hspace{\algorithmicindent}$ $ Determine coverage $ C_N(\mathbf{q}_1^{(m)},\mathbf{q}_2^{(m)},\mathbf{v}_y)$ and $\{\mathbf{v}_y, \mathbf{P}_{\mathrm{com,1}},\mathbf{P}_{\mathrm{com,1}}\}$ by solving $\rm (P.1.c)$ for fixed  \Statex\hspace{\algorithmicindent} formation $\{\mathbf{q}_1^{(m)},\mathbf{q}_2^{(m)}\}$  using \textbf{Algorithm} \ref{alg:sca}
		\State\hspace{\algorithmicindent} \textbf{if}  $\{\mathbf{v}_y^{(m-1)}+{\psi}(\mathbf{v}_y-\mathbf{v}_y^{(m-1)}),\mathbf{P}_{\rm com,1},\mathbf{P}_{\rm com,2}\}$  is  feasible for $\mathrm{(P.1.c)}$ \textbf{then}	\State\hspace{\algorithmicindent}\hspace{\algorithmicindent}$ $ Set $\mathbf{v}_y^{(m)}=\mathbf{v}_y^{(m-1)}+{\psi}(\mathbf{v}_y-\mathbf{v}_y^{(m-1)}),$ $\mathbf{P}^{(m)}_{\mathrm{com,1}} =\mathbf{P}_{\mathrm{com,1}},$ and $\mathbf{P}^{(m)}_{\mathrm{com,2}} =\mathbf{P}_{\mathrm{com,2}}$
		\State\hspace{\algorithmicindent} \textbf{else}
		\State\hspace{\algorithmicindent}\hspace{\algorithmicindent}$ $ Set $\mathbf{v}_y^{(m)}=\mathbf{v}_y,$ $\mathbf{P}^{(m)}_{\mathrm{com,1}} =\mathbf{P}_{\mathrm{com,1}},$ and $\mathbf{P}^{(m)}_{\mathrm{com,2}} =\mathbf{P}_{\mathrm{com,2}}$
		\State \textbf{until} $\big |\frac{ C_N(\mathbf{q}_1^{(m)},\mathbf{q}_2^{(m)},\mathbf{v}_y^{(m)})-C_N(\mathbf{q}_1^{(m-1)},\mathbf{q}_2^{(m-1)},\mathbf{v}_y^{(m-1)})}{C_N(\mathbf{q}_1^{(m)},\mathbf{q}_2^{(m)},\mathbf{v}_y^{(m)})}\big|\leq {\epsilon_4} $
		\State \textbf{return} solution $\{\mathbf{q}_1^{(m)}, \mathbf{q}_2^{(m)},\mathbf{P}^{(m)}_{\mathrm{com,1}},\mathbf{P}^{(m)}_{\mathrm{com},2}, \mathbf{v}_y^{(m)}\}$
	\end{algorithmic}
\end{algorithm} 
\else 
\begin{algorithm}[]
	\caption{${\psi}$-Step Alternating Optimization Algorithm}\label{alg:ao}
	\begin{algorithmic}[1] 
		\State Set initial formation $\{\mathbf{q}_1^{(1)},\mathbf{q}_2^{(1)}\}$ and initial  resources $\{\mathbf{v}^{(1)}_y,\mathbf{P}^{(1)}_{\mathrm{com,1}},\mathbf{P}^{(1)}_{\mathrm{com,2}}\}$, iteration index $m=1$, and error tolerance $0< {\epsilon_4}  \ll 1$.
		\State \textbf{repeat}
		\State\hspace{\algorithmicindent}$ $Set $ m=m+1$
		\State\hspace{\algorithmicindent}$ $Determine $C_N(\mathbf{q}_1^{(m-1)},\mathbf{q}_2,\mathbf{v}_y^{(m)})$ and set $\mathbf{q}_2^{(m)}=\mathbf{q}_2$ \Statex\hspace{\algorithmicindent}$ $by solving $\rm (P.1.a)$ for fixed $\{\mathbf{v}_y^{(m-1)},\mathbf{P}^{(m-1)}_{\mathrm{com,1}},$
		\Statex\hspace{\algorithmicindent}$\mathbf{P}^{(m-1)}_{\mathrm{com,2}},\mathbf{q}_1^{(m-1)}\}$ using \textbf{Algorithm} \ref{alg:pso}   
		\State\hspace{\algorithmicindent}$ $Determine $C_N(\mathbf{q}_1,\mathbf{q}_2^{(m)},\mathbf{v}_y^{(m-1)})$ and set $\mathbf{q}_1^{(m)}=\mathbf{q}_1$ \Statex\hspace{\algorithmicindent}$ $by solving $\rm (P.1.b)$ for fixed  $\{\mathbf{v}_y^{(m-1)},\mathbf{P}^{(m-1)}_{\mathrm{com,1}},$ \Statex\hspace{\algorithmicindent}$ $$\mathbf{P}^{(m-1)}_{\mathrm{com,2}},\mathbf{q}_2^{(m)}\}$  using \textbf{Algorithm} \ref{alg:polyblock}
		\State\hspace{\algorithmicindent}$ $Determine $ C_N(\mathbf{q}_1^{(m)},\mathbf{q}_2^{(m)},\mathbf{v}_y)$ and \Statex\hspace{\algorithmicindent}$ $$\{\mathbf{v}_y, \mathbf{P}_{\mathrm{com,1}},\mathbf{P}_{\mathrm{com,1}}\}$ by solving $\rm (P.1.c)$ for fixed \Statex\hspace{\algorithmicindent}$ $formation $\{\mathbf{q}_1^{(m)},\mathbf{q}_2^{(m)}\}$  using \textbf{Algorithm} \ref{alg:sca}
		\State\hspace{\algorithmicindent}\textbf{if}  $\{\mathbf{v}_y^{(m-1)}+{\psi}(\mathbf{v}_y-\mathbf{v}_y^{(m-1)}),\mathbf{P}_{\rm com,1},\mathbf{P}_{\rm com,2}\}$  is  \Statex\hspace{\algorithmicindent}$ $feasible for $\mathrm{(P.1.c)}$ \textbf{then}	\State\hspace{\algorithmicindent}\hspace{\algorithmicindent}$ $Set $\mathbf{v}_y^{(m)}=\mathbf{v}_y^{(m-1)}+{\psi}(\mathbf{v}_y-\mathbf{v}_y^{(m-1)}),$ $\mathbf{P}^{(m)}_{\mathrm{com,1}}=$ \Statex\hspace{\algorithmicindent}\hspace{\algorithmicindent}$ $$\mathbf{P}_{\mathrm{com,1}},$ and $\mathbf{P}^{(m)}_{\mathrm{com,2}} =\mathbf{P}_{\mathrm{com,2}}$
		\State\hspace{\algorithmicindent}\textbf{else}
		\State\hspace{\algorithmicindent}\hspace{\algorithmicindent}$ $Set $\mathbf{v}_y^{(m)}=\mathbf{v}_y,$ $\mathbf{P}^{(m)}_{\mathrm{com,1}} =\mathbf{P}_{\mathrm{com,1}},$ and \Statex\hspace{\algorithmicindent}\hspace{\algorithmicindent}$ $$\mathbf{P}^{(m)}_{\mathrm{com,2}} =\mathbf{P}_{\mathrm{com,2}}$
		\State \textbf{until} $\big |\frac{ C_N(\mathbf{q}_1^{(m)},\mathbf{q}_2^{(m)},\mathbf{v}_y^{(m)})-C_N(\mathbf{q}_1^{(m-1)},\mathbf{q}_2^{(m-1)},\mathbf{v}_y^{(m-1)})}{C_N(\mathbf{q}_1^{(m)},\mathbf{q}_2^{(m)},\mathbf{v}_y^{(m)})}\big|\leq {\epsilon_4} $
		\State \textbf{return} solution $\{\mathbf{q}_1^{(m)}, \mathbf{q}_2^{(m)},\mathbf{P}^{(m)}_{\mathrm{com,1}},\mathbf{P}^{(m)}_{\mathrm{com},2} \}$
	\end{algorithmic}
\end{algorithm} 
\fi
{Leveraging the} computational complexity analysis of the different sub-problems, \textbf{Algorithm} \ref{alg:ao} has $\mathcal{O}\Big(M\big(M_1 D N + M_2 {\log_2}(1/{\epsilon_2} ) + M_3 (3N-1)^{3.5} \big)\Big)$ worst-case computational time complexity, where $M$ is the number of iterations required for \textbf{Algorithm} \ref{alg:ao} to converge. {\textbf{Algorithm} \ref{alg:ao} has a polynomial computational complexity, and thus, is considered to be a low-complexity algorithm. In practice, convergence is achieved in just a few iterations.} Consequently, due to the low complexity of \textbf{Algorithm} 5, the { $\epsilon_5$-optimal step size ${\psi}$ can be determined by performing an exhaustive search in $[0,1]$ with precision $\epsilon_5$.} Note that \ac{pso} \textbf{Algorithm} \ref{alg:pso}, being a stochastic approach, {can achieve the same objective function {value} for sub-problem $\mathrm{(P.1.a)}$ {for} different solutions. {Therefore,} \textbf{Algorithm} \ref{alg:ao} might lead to slightly different results for the overall problem for $\mathrm{(P.1)}$ given the same input, which is a known property of \ac{ao} {algorithms} when involving a stochastic algorithm\cite{notes_ao}.}
\ifonecolumn
\begin{table}
	\caption{System parameters \cite{propulsion,victor,amine2,snr_equation,coherence1}.}
	\label{tab:system_parameters}
	\begin{adjustbox}{width=\textwidth}
		\begin{tabular}{?cc?cc?}
			\Xhline{3\arrayrulewidth}
			\multicolumn{1}{?c|}{Parameter}                     & Value                    & \multicolumn{1}{c|}{Parameter} & Value \\ \Xhline{3\arrayrulewidth}
			\multicolumn{2}{?c?}{\cellcolor[HTML]{EFEFEF}\ac{uav} parameters} & \multicolumn{2}{c?}{\cellcolor[HTML]{EFEFEF}Radar parameters}    \\  \hline
			\multicolumn{1}{?c|}{Total number of time slots, $N$ }                             & 80     & \multicolumn{1}{c|}{  Reference target $x$-coordinate, $x_t$       }         & 20 m          \\ \hline
			\multicolumn{1}{?c|}{   Maximum and minimum \ac{uav}  altitudes, $z_{\mathrm{max}}$ and   $z_{\mathrm{min}}$    }     & 100 and 1 m     & \multicolumn{1}{c|}{ Time slot and pulse duration, $\delta_t$ and $\tau_p$  }         & 1 and $10^{-6}$ s    \\ \hline
			\multicolumn{1}{?c|}{Maximum and minimum \ac{uav}  velocities, $v_{\mathrm{max}}$ and $v_{\mathrm{min}}$     }     & 10   and   0.1  m/s       & \multicolumn{1}{c|}{Master look angle and radar beamwidth, $\theta_1$ and $\Theta_{3\mathrm{dB}}$           }         & 45° and 30°       \\ \hline
			\multicolumn{1}{?c|}{ UAV battery capacity, $E_{\mathrm{max}}$        }     & 122.2 Wh             & \multicolumn{1}{c|}{ Minimum and maximum slave look angle, $\theta_{\rm min}$ and $\theta_{\rm max}$     }         & 15° and 75°		     \\ \hline
			\multicolumn{1}{?c|}{  Minimum baseline distance, $b_{\mathrm{min}}$      }     & 2 m            & \multicolumn{1}{c|}{Radar transmit and receive gains, $G_t$ and $G_r$      }         & 5 dBi  \\ \hline
			\multicolumn{1}{?c|}{   UAV weight in Newton, $W_u$     }     & 60 N             & \multicolumn{1}{c|}{  Radar transmit power  and noise figure, $P_{t,1}=P_{t,2}$ and  $F$    }         &  10 dBm   and 7 dB       \\ \hline
			\multicolumn{1}{?c|}{  Air density, $\rho$      }     & 1.225 kg/m$^3$           & \multicolumn{1}{c|}{    System, azimuth, and atmospheric losses, $L_{\mathrm{sys}}$, $L_{\mathrm{azm}}$, and $L_{\mathrm{atm}}$         }         & 2, 2, and 0 dB     \\ \hline
			\multicolumn{1}{?c|}{   Fuselage drag ratio, $d_0$    }     & 0.6       & \multicolumn{1}{c|}{      Pulse repetition frequency, $\mathrm{PRF}$       }         & 100 Hz  \\ \hline
			\multicolumn{1}{?c|}{  Profile drag coefficient, $\delta_u$     }     & 0.0012       & \multicolumn{1}{c|}{   Radar wavelength, $\lambda$       }         &  0.12 m    \\ \hline
			\multicolumn{1}{?c|}{   Rotor radius, $R$   }     &0.4 m       & \multicolumn{1}{c|}{   Radar bandwidth and center frequency, $B_{\mathrm{Rg}}$ and $f_0$       }         &   3 and 2.5 GHz    \\ \hline
			\multicolumn{1}{?c|}{   Rotor disc area, $A_e$    }     & 0.503 m$^2$       & \multicolumn{1}{c|}{      Number of independent looks, $n_L$          }         &    4 \\ \hline
			\multicolumn{1}{?c|}{  Rotor solidity ratio, $s$     }     & 0.05        & \multicolumn{1}{c|}{   Radar system temperature, $T_{\mathrm{sys}}$       }         &400 K      \\ \hline
			\multicolumn{1}{?c|}{   Blade angular velocity, $\Omega$    }     & 300 rad/s       & \multicolumn{1}{c|}{   Normalized backscatter coefficient, $\sigma_0$       }         & -10 dB     \\ \hline
			\multicolumn{1}{?c|}{    Tip speed of the rotor blade, $U_{\mathrm{tip}}$   }     &120 m/s       & \multicolumn{1}{c|}{    Minimum \ac{snr} and baseline decorrelation,$\gamma_{\mathrm{SNR}}^{\mathrm{min}}$ and $\gamma_{\mathrm{Rg}}^{\mathrm{min}}$    }         &  0.8   \\ \hline
			\multicolumn{1}{?c|}{  Incremental correction factor to induced power, $k_u$     }     & 0.1       & \multicolumn{1}{c|}{ Other decorrelation sources, $\gamma_{\rm other}$      }         &  0.9   \\\hline \multicolumn{2}{?c?}{\cellcolor[HTML]{EFEFEF}Communication parameters} & \multicolumn{1}{c|}{   Minimum \ac{hoa} and maximum height error, $h_{\mathrm{amb}}^{\mathrm{min}}$ and $\Delta_h^{\rm max}$     }         & 1 and 0.11 m     \\ \hline
			\multicolumn{1}{?c|}{\ac{gs} 3D location, $\mathbf{g}=(g_x,g_y,g_z)^T$     }     &  $(-100,-270,5)^T$ m            & \multicolumn{2}{c?}{\cellcolor[HTML]{EFEFEF}Algorithm parameters}  \\ \hline
			\multicolumn{1}{?c|}{ Master communication bandwidth, $B_{c,1}$     }     &1 GHz         & \multicolumn{1}{c|}{    \ac{pso} population and maximum number of iterations, $D$ and $M_1$      }         &    $2\times10^3$ and $1000$      \\ \hline
			\multicolumn{1}{?c|}{  Slave communication bandwidth, $B_{c,2}$   }     &  1 GHz      & \multicolumn{1}{c|}{   \ac{pso} cognitive and social constants, $c_1$ and $c_2$     }         &   0.1 and 0.2         \\ \hline
			\multicolumn{1}{?c|}{  Communication reference channel gains, ${\beta_{c,1}}={\beta_{c,2}}$  }     &18.751 dB   & \multicolumn{1}{c|}{     \ac{pso} maximum particle velocity, $v_{\rm PSO}^{\rm max}$       }         &  20 m/s      \\ \hline
			\multicolumn{1}{?c|}{  Number of bits per complex sample, $n_B$  }     & 4  & \multicolumn{1}{c|}{ Constant $O$ and number of realizations $\mathcal{N}$ }         &  500  and $10^3$          \\ \hline
			\multicolumn{1}{?c|}{    Maximum communication transmit power, $P_{\mathrm{com}}^{\mathrm{max}}$       }     &   40 dBm         & \multicolumn{1}{c|}{      Error tolerance, ${\epsilon_1=\epsilon_2=\epsilon_3=\epsilon_4} $ and $\epsilon_5$    }         & $10^{-4}$ and  $10^{-2}$  \\ \Xhline{3\arrayrulewidth}
		\end{tabular}
	\end{adjustbox}
\end{table}
\else 

\begin{table}[]
	\centering
	\caption{System parameters \cite{propulsion,victor,amine2,snr_equation,coherence1}.}
	\label{tab:system_parameters}
	\begin{adjustbox}{width=\columnwidth}
		\begin{tabular}{?cc?}
			\Xhline{3\arrayrulewidth}
			\multicolumn{1}{?c|}{Parameter}                     & Value                    \\ \Xhline{3\arrayrulewidth}
			\multicolumn{2}{?c?}{\cellcolor[HTML]{EFEFEF}\ac{uav} parameters} \\ \Xhline{3\arrayrulewidth}
			\multicolumn{1}{?c|}{Total number of time slots, $N$ } &  80                        \\ \hline
			\multicolumn{1}{?c|}{Maximum and minimum \ac{uav} altitudes, $z_{\mathrm{max}}$ and $z_{\mathrm{min}}$}&  100 and 1 m  \\ \hline
			\multicolumn{1}{?c|}{Maximum and minimum \ac{uav}  velocities, $v_{\mathrm{max}}$ and $v_{\mathrm{min}}$}&  10 and 0.1 m/s \\ \hline
			\multicolumn{1}{?c|}{UAV battery capacity, $E_{\mathrm{max}}$ }& 122.2 Wh  \\ \hline
			\multicolumn{1}{?c|}{Minimum baseline distance, $b_{\mathrm{min}}$}& 2 m     \\ \hline
			\multicolumn{1}{?c|}{UAV weight in Newton, $W_u$ }&  60 N  \\ \hline
			\multicolumn{1}{?c|}{Air density, $\rho$}& 1.225 kg/m$^3$ \\ \hline
			\multicolumn{1}{?c|}{Fuselage drag ratio, $d_0$}& 0.6 \\ \hline
			\multicolumn{1}{?c|}{Profile drag coefficient, $\delta_u$}&0.0012 \\ \hline
			\multicolumn{1}{?c|}{Rotor radius and disc area, $R$ and $A_e$}&  0.4 m and 0.503 m$^2$ \\ \hline
			\multicolumn{1}{?c|}{Rotor solidity ratio, $s$}&  0.05 \\\hline 		
			\multicolumn{1}{?c|}{Blade angular velocity, $\Omega$}&  300 rad/s \\\hline 	
			\multicolumn{1}{?c|}{Tip speed of the rotor blade, $U_{\mathrm{tip}}$}& 120 m/s \\  \hline 	
			\multicolumn{1}{?c|}{Incremental correction factor to induced power, $k_u$}& 0.1 \\\hline
			\Xhline{3\arrayrulewidth}
			\multicolumn{2}{?c?}{\cellcolor[HTML]{EFEFEF}Communication parameters} \\ \Xhline{3\arrayrulewidth}
			\multicolumn{1}{?c|}{\ac{gs} 3D location, $\mathbf{g}=(g_x,g_y,g_z)^T$} &    $(-100,-270,5)^T$ m                       \\ \hline
			\multicolumn{1}{?c|}{Master and slave communication bandwidths, $B_{c,1}$ and $B_{c,2}$ }& 1 GHz  \\ \hline
			\multicolumn{1}{?c|}{Communication reference channel gains, ${\beta_{c,1}}={\beta_{c,2}}$}& 18.751 dB \\ \hline
			\multicolumn{1}{?c|}{Maximum communication transmit power, $P_{\mathrm{com}}^{\mathrm{max}}$   }& 40 dBm \\ \hline
			\multicolumn{1}{?c|}{Number of bits per complex sample, $n_B$   }& 4   \\  \Xhline{3\arrayrulewidth}
			\multicolumn{2}{?c?}{\cellcolor[HTML]{EFEFEF}Radar parameters} \\ \Xhline{3\arrayrulewidth}
			\multicolumn{1}{?c|}{Reference target $x$-coordinate, $x_t$}&  20 m                        \\ \hline
			\multicolumn{1}{?c|}{Time slot and pulse duration, $\delta_t$ and $\tau_p$ }&  1 and $10^{-6}$ s                           \\ \hline
			\multicolumn{1}{?c|}{Master look angle and radar beamwidth, $\theta_1$ and $\Theta_{3\mathrm{dB}}$   }& 45° and 30°  \\ \hline
			\multicolumn{1}{?c|}{Minimum and maximum slave look angle, $\theta_{\mathrm{min}}$ and $\theta_{\mathrm{max}}$}&15° and 75° \\ \hline
			\multicolumn{1}{?c|}{Radar transmit and receive gains, $G_t$ and $G_r$ }&5 dBi \\ \hline		
			\multicolumn{1}{?c|}{Radar transmit power, $P_{t,1}=P_{t,2}$}&10 dBm\\ \hline
			\multicolumn{1}{?c|}{System, azimuth, and atmospheric losses, $L_{\mathrm{sys}}$, $L_{\mathrm{azm}}$, and $L_{\mathrm{atm}}$}&  2, 2, and 0 dB \\ \hline
			\multicolumn{1}{?c|}{Pulse repetition frequency, $\mathrm{PRF}$}& 100 Hz \\ \hline
			\multicolumn{1}{?c|}{Radar wavelength, $\lambda$}& 0.12 m\\ \hline
			\multicolumn{1}{?c|}{Radar bandwidth and center frequency, $B_{\mathrm{Rg}}$ and $f_0$}&  3 and 2.5 GHz \\ \hline
			\multicolumn{1}{?c|}{Noise figure, $F$}&7 dB \\ \hline
			\multicolumn{1}{?c|}{Number of independent looks, $n_L$}& 4  \\ \hline
			\multicolumn{1}{?c|}{Radar system temperature, $T_{\mathrm{sys}}$}& 400 K\\ \hline
			\multicolumn{1}{?c|}{Normalized backscatter coefficient, $\sigma_0$ }&-10 dB  \\ \hline
			\multicolumn{1}{?c|}{Minimum \ac{snr} and baseline decorrelation,$\gamma_{\mathrm{SNR}}^{\mathrm{min}}$ and $\gamma_{\mathrm{Rg}}^{\mathrm{min}}$}&     0.8\\ \hline
			\multicolumn{1}{?c|}{Other decorrelation sources, $\gamma_{\mathrm{other}}$ }&   0.9 \\ \hline
			\multicolumn{1}{?c|}{Minimum \acp{hoa} and maximum height error, $h_{\mathrm{amb}}^{\mathrm{min}}$ and  $\Delta h^{\mathrm{max}}$}&   1 m  and 11 cm  \\  \Xhline{3\arrayrulewidth}		
			\multicolumn{2}{?c?}{\cellcolor[HTML]{EFEFEF}Algorithms parameters} \\ \Xhline{3\arrayrulewidth}
			\multicolumn{1}{?c|}{\ac{pso} population and maximum number of iterations, $D$ and $M_1$  }&   $2\times10^3$  and 1000                       \\ \hline
			\multicolumn{1}{?c|}{\ac{pso} cognitive and social constants, $c_1$ and $c_2$ }&     0.1 and 0.2                      \\ \hline
			\multicolumn{1}{?c|}{\ac{pso} maximum particle velocity, $v_{\rm pso}^{\rm max}$ }&     20 m/s                     \\ \hline
			\multicolumn{1}{?c|}{Error tolerance, ${\epsilon_1=\epsilon_2=\epsilon_3=\epsilon_4}$}&  $10^{-4}$                         \\ \hline
			\multicolumn{1}{?c|}{Number of realizations, $\mathcal{N}$ }&  $10^3$ \\ \Xhline{3\arrayrulewidth}
		\end{tabular}
	\end{adjustbox}
\end{table}
\fi 
\section{Simulation Results and Discussion }
In this section, we present simulation results for the proposed \ac{uav} formation and resource allocation optimization algorithm. Unless specified otherwise, the system parameters specified in Table \ref{tab:system_parameters} {are employed}. As explained in the previous section, since the {algorithm proposed} to solve $\mathrm{(P.1)}$ is based on stochastic optimization, different solutions {may} be {obtained} for the same input. Therefore, simulation results are averaged over $10^3$ realizations. To assess the performance of the proposed {algorithm}, we {consider} the following benchmark schemes: 
\begin{itemize}
	\item \textbf{Benchmark scheme 1} ({\it {Classical} \ac{ao}}): For this scheme, we employ the classical \ac{ao} algorithm \cite{notes_ao} to solve problem $\mathrm{(P.1)}$, i.e., without adjusting step size ${\psi}$. {This} is equivalent to using \textbf{Algorithm} \ref{alg:ao} {with} ${\psi}=1$.
	\item \textbf{Benchmark scheme 2} ({\it {Fixed} steady speed}): As in \cite{conf2}, in this scheme, we  {employ}  a fixed and steady radar platform velocity, i.e.,  $\mathbf{v}_y$ is not optimized {but} is set to $\mathbf{v}_y[n]=4$ m/s$, \forall n$. We use \ac{ao} to optimize the communication powers and the \ac{uav} formation. {This} is equivalent to using \textbf{Algorithm} \ref{alg:ao} with ${\psi}=0$.
	\item \textbf{Benchmark scheme 3} ({\it {Fixed} slave look angle}): Similar to \cite{conf2}, we {adopt} a predefined  look angle for the slave platform, $\theta_2(\mathbf{q}_2)=\frac{\pi}{4}$. {Problem} ($\mathrm{P.1}$) is solved with a corresponding modified version of \textbf{Algorithm} \ref{alg:ao}. {Here, the value of $\psi$ leading to the best \ac{insar} coverage is found via exhaustive search in $[0,1]$  with precision $\epsilon_5$.}\par
\end{itemize}

\begin{figure}
	\centering
	\ifonecolumn
	\includegraphics[width=4.5 in]{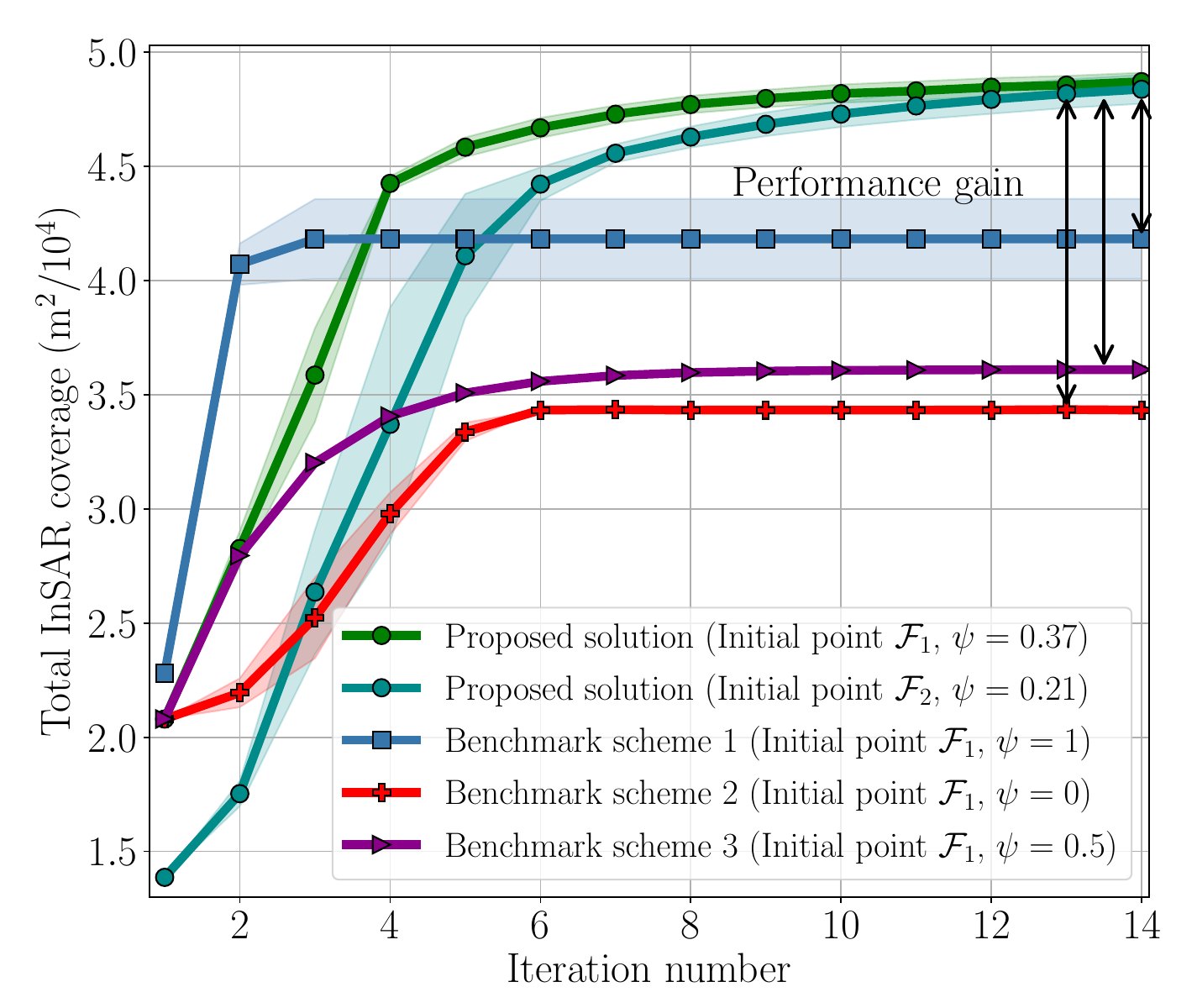}
	\else 
	\includegraphics[width=0.85\columnwidth]{figures//InSAR_coverage.pdf}
	\fi
	\caption{Interferometric sensing coverage achieved by the proposed solution and {benchmark} schemes versus iteration number for different initial points, denoted by  $\mathcal{F}_1$  and $\mathcal{F}_2$, and given by $\mathcal{F}_1=\Big\{\mathbf{q}^{(1)}_1=(-40,60)^T$ m, $\mathbf{q}^{(1)}_2=(-45,50)^T$ m, $P^{(1)}_{\rm com,1}[n]=P^{(1)}_{\rm com,2}[n]=37.78$ dBm, $v^{(1)}_y[n]=3.8$ m/s, $\forall n\Big\}$ and  $\mathcal{F}_2=\Big\{\mathbf{q}^{(1)}_1=(-20,40)^T$ m, $\mathbf{q}^{(1)}_2=(-45,50)^T$ m, $P^{(1)}_{\rm com,1}[n]=P^{(1)}_{\rm com,2}[n]=37.78$ dBm, $v^{(1)}_y[n]=3.8$ m/s, $\forall n\Big\}$. Solid lines represent mean values while the shaded area corresponds to one standard deviation, i.e., $1\sigma$. {For the proposed algorithm, the ${\psi}$ values are optimized based on an exhaustive search.}}
	\label{fig:convergence}
\end{figure}
\par In Figure \ref{fig:convergence}, we investigate the convergence of the proposed \ac{ao} algorithm {and} the different benchmark schemes by plotting the total \ac{insar} coverage as a function of the number of iterations. The figure shows that the classical \ac{ao} algorithm, i.e., benchmark scheme 1, outperforms benchmark schemes 2 and 3 due to {the joint optimization of} the \ac{uav} formation and resource allocation. In fact, benchmark scheme {2 exhibits} the lowest performance, underlining the importance of optimizing the instantaneous velocity of both \acp{uav}  and the slave look angle to achieve maximal sensing coverage, respectively. Furthermore, Figure \ref{fig:convergence} shows that{, after convergence,} the proposed solution outperforms all benchmark schemes achieving  performance {gains of  14.8 \%, 41.21 \%, and 35.74 \%,} compared to benchmark schemes 1, 2, and 3, respectively. In particular, the superiority of the proposed solution compared to benchmark scheme 1 is attributed to {the} tuning {of} the step size ${\psi}$ to avoid premature convergence to inferior sub-optimal solutions. This performance gain comes at the cost of {a lower} convergence speed, as the proposed solution requires more iterations to converge. {Nevertheless,} overall, all benchmark schemes as well as the proposed solution {converge after} a {relatively} small number of iterations. Lastly, for different initial points $\mathcal{F}_1$ and $\mathcal{F}_2$, the proposed \ac{ao}-based algorithm achieves similar performance {confirming} resilience to the initialization.\par 

\begin{figure}[h]
	\centering
	\ifonecolumn
	\includegraphics[width=4 in]{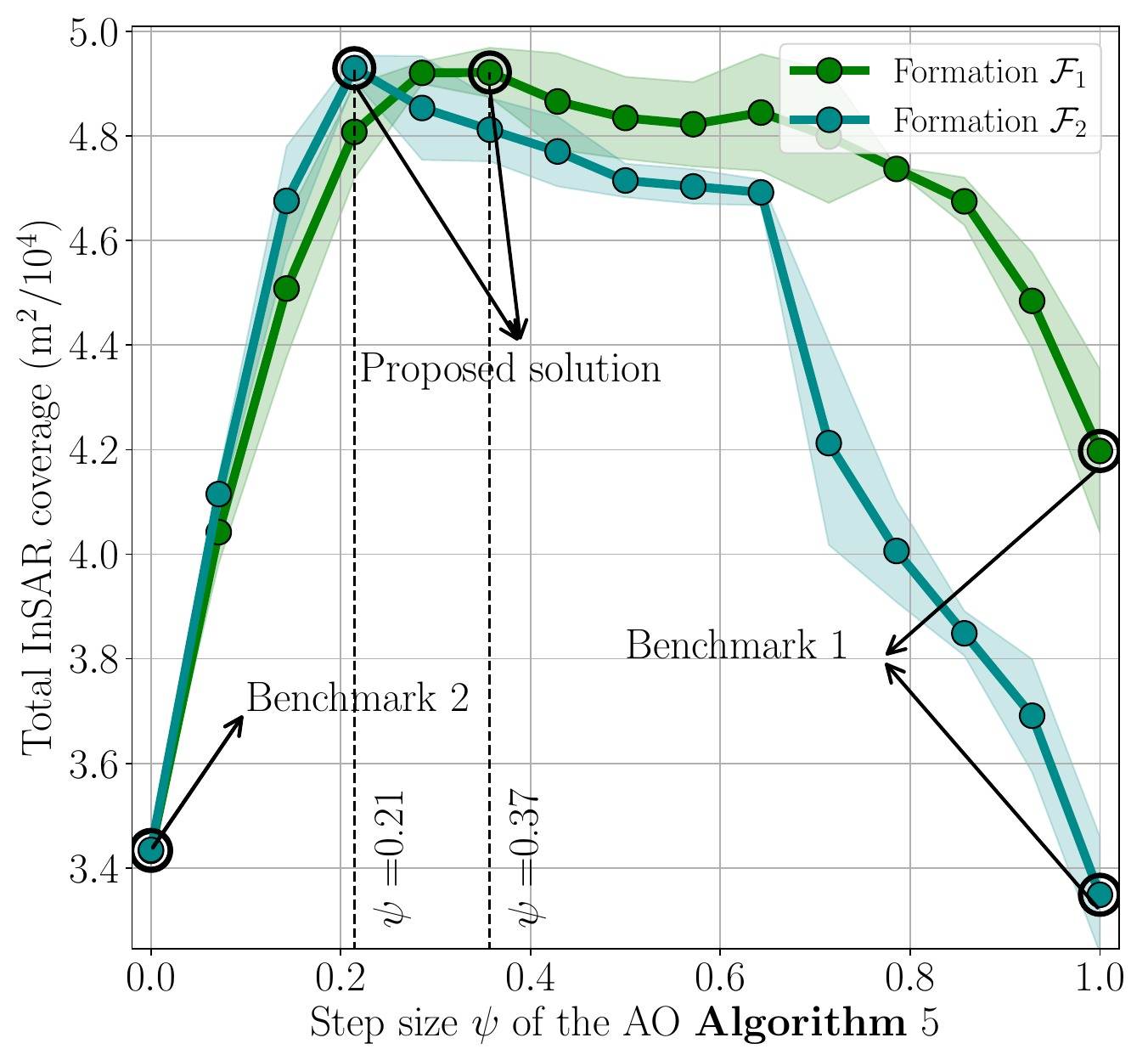}
	\else 
	\includegraphics[width=0.87\columnwidth]{figures/BCD_step_size2.pdf}
	\fi
	\caption{Interferometric sensing coverage versus  step size ${\psi}$ of \textbf{Algorithm} \ref{alg:ao} for different initial points, denoted by $\mathcal{F}_1$ and $\mathcal{F}_2$, see Figure \ref{fig:convergence}. Solid lines represent mean values while the shaded area corresponds to one standard deviation, i.e., $1\sigma$.}
	\label{fig:step_size}
\end{figure}
In Figure \ref{fig:step_size}, we {show} the total \ac{insar} coverage as a function of step size ${\psi}$ defined in (\ref{eq:step_size}). The figure confirms that the performance of the {conventional} \ac{ao} {algorithm}, i.e., {for} ${\psi}=1$, can be significantly improved by tuning ${\psi}$ in $[0,1]$. Moreover, the optimal step size depends on the system parameters and the \ac{ao} initialization process. For instance, the best step size for initial point $\mathcal{F}_1$ is  ${\psi}={0.37}$, while for $\mathcal{F}_2$, {it is}  ${\psi}={0.21}$.  Figure \ref{fig:step_size} {reveals} that  values of ${\psi}$  close to 1 and 0 {are not preferable}. On the one hand, a value of ${\psi}$ close to 1 results in {a fast increase of the} velocity in {the first few} iterations to improve coverage. {However, this} leads to low \acp{snr} for the master and slave \acp{uav}, i.e., a high \ac{snr} decorrelation, see (\ref{eq:snr_decorrelation}). Therefore, the \ac{uav} formation in the across-track cannot be {properly} optimized and both drones {have to be positioned} close to the target area  to keep short master and slave slant ranges that compensate for the impact of the high velocity on the \ac{snr}. On the other hand, a value of ${\psi}$ close to 0 results in {slow} velocity updates, which allows for optimizing the \ac{uav} formation to improve coverage by opting for large master and slave slant ranges. However, the velocity can {subsequently not be} increased due to the \ac{snr} decorrelation constraint $\mathrm{C6}$, resulting in poor performance.  The proposed {approach} overcomes this issue by adequately selecting ${\psi}$ so that the updates of {the different} optimization variables are well-balanced in each iteration  to achieve maximum sensing coverage.\par
\begin{figure}[h]
	\centering
	\ifonecolumn
	\includegraphics[width=4 in]{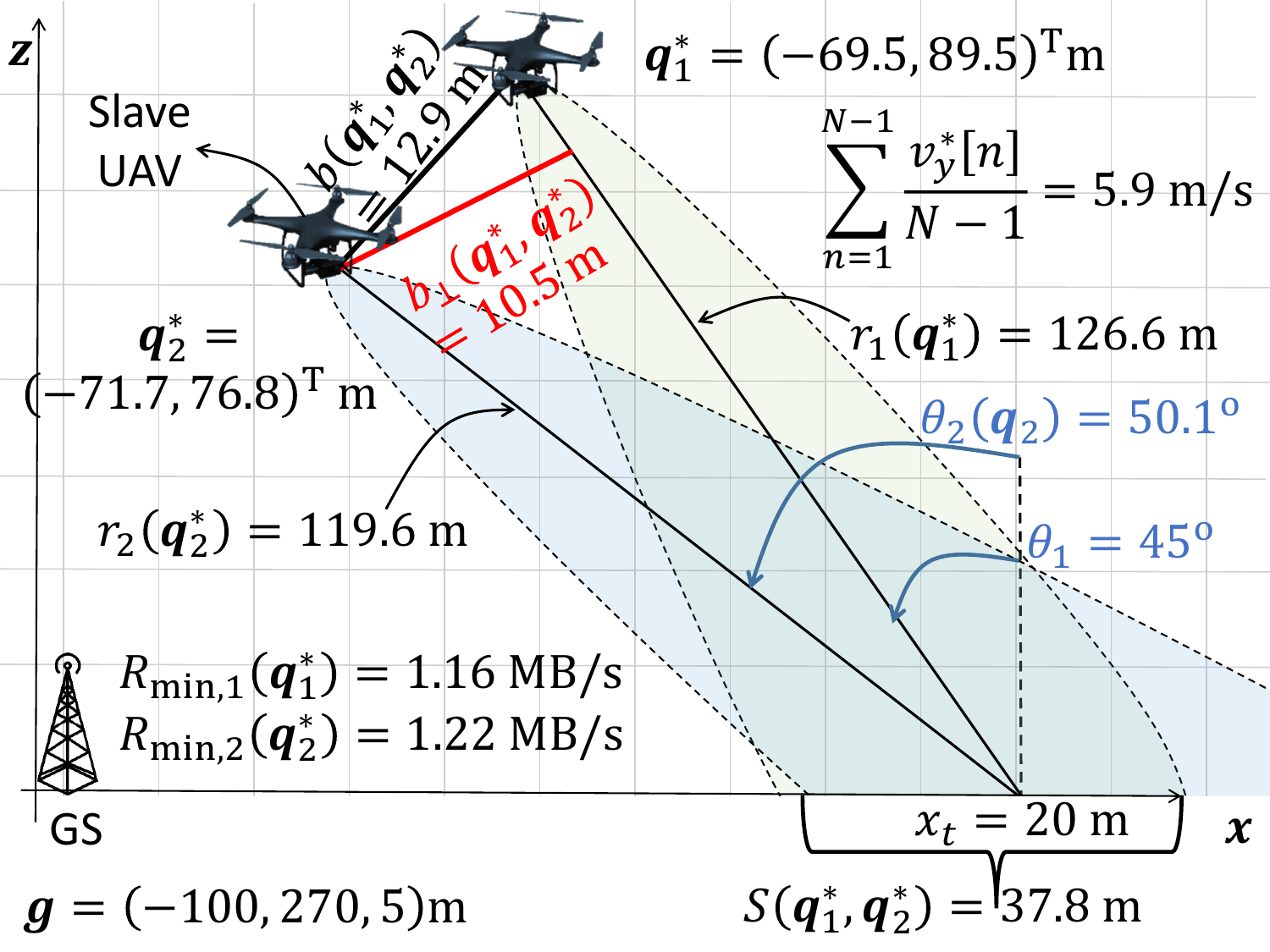}
	\else
	\includegraphics[width=0.80\columnwidth]{figures/Visual.pdf}
	\fi
	\caption{Illustration of optimal \ac{uav} formation in across-track plane with key parameter values (not to scale) specified.}
	\label{fig:visual}
\end{figure}
{In Figure \ref{fig:visual}, we illustrate a \ac{uav} formation in the across-track plane obtained with \textbf{Algorithm} \ref{alg:ao}. Note that \textbf{Algorithm} 5 may find different solutions leading to the same performance, therefore, the solution presented in Figure \ref{fig:visual} is not unique. For the shown example, the perpendicular baseline is 10.5 m, leading to a relative height error of 10.5 cm and a total \ac{insar} coverage of $C_N=$4.95$\times 10^4$ m$^2$.}\par
\begin{figure}[h]
	\centering
	\ifonecolumn
	\includegraphics[width=4 in]{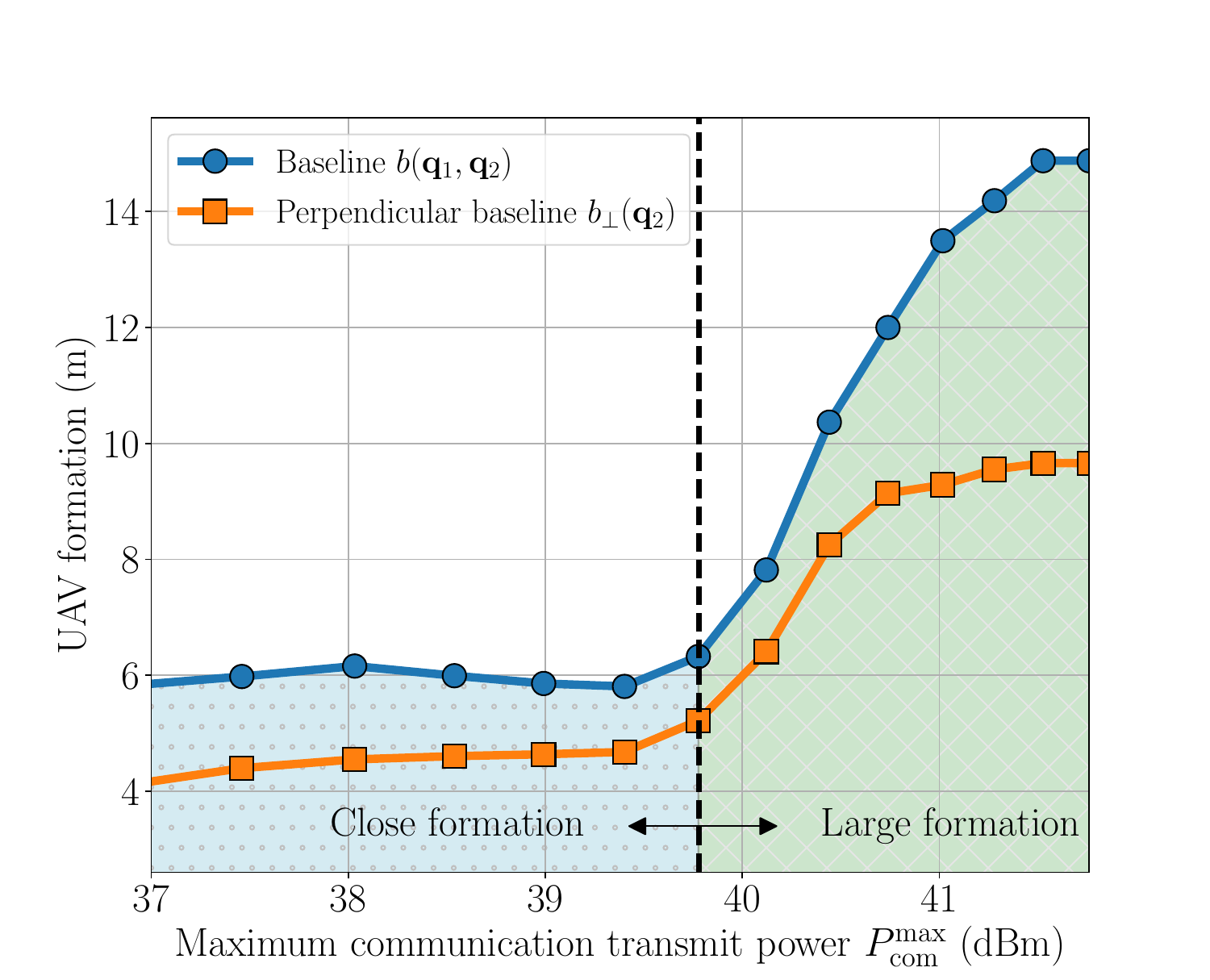}
	\else
	\includegraphics[width=0.85\columnwidth]{figures/Baseline_vs_Pcom_max.pdf}
	\fi
	\caption{Optimal \ac{uav} formation represented by the \ac{uav} baseline and perpendicular baseline versus the maximum communication transmit power for ${\beta_{c,1}}={\beta_{c,2}}=19.3$ dB. }
	\label{fig:baseline_vs_Pcom_max}
\end{figure}
In Figure \ref{fig:baseline_vs_Pcom_max}, we investigate the impact of the maximum communication transmit power $P_{\rm com}^{\rm max}$ on the optimal \ac{uav} bistatic formation {in terms of} the interferometric baseline $b$ and its perpendicular component $b_{\bot}$. The figure shows that the \ac{uav} formation employed for sensing is heavily {affected} by the communication transmit power used to offload the \ac{sar} data. In particular, for low maximum communication transmit {powers}, i.e., $P_{\rm com}^{\rm max}\leq {39.8}$ dBm  {(blue-colored region)}, the proposed optimization algorithm  {prefers} a close formation with short interferometric  {baselines as small as} {6} m. {In contrast}, high maximum communication transmission  {powers}, i.e., $P_{\rm com}^{\rm max}\geq {39.8}$ dBm {(}green-colored region{)}, result in a large \ac{uav} formation,  {with baselines as large as}  {14} m. This is because for low maximum communication transmit {powers}, {data }offloading is the performance bottleneck {and}, thus, \textbf{Algorithm} \ref{alg:ao} tends to {position} both \acp{uav} near the target area, resulting in less sensing data. This leads to a short master slant range $r_1$ and, therefore, a shorter perpendicular baseline is required to satisfy constraint $\mathrm{C8}$, see (\ref{eq:height_of_ambiguity}).   { On the other hand}, when the maximum communication transmit power is sufficiently high, the proposed solution optimizes  {the} \ac{uav} formation {by flying at higher altitudes} for better coverage, leading to large \ac{uav} {formations}. This results in higher sensing data rates for the slave and master \acp{uav}. Yet, all data from both drones can be successfully offloaded {because of the high} communication transmission power {available}.\par 
\begin{figure}[h]
	\centering
	\ifonecolumn
	\includegraphics[width=4 in]{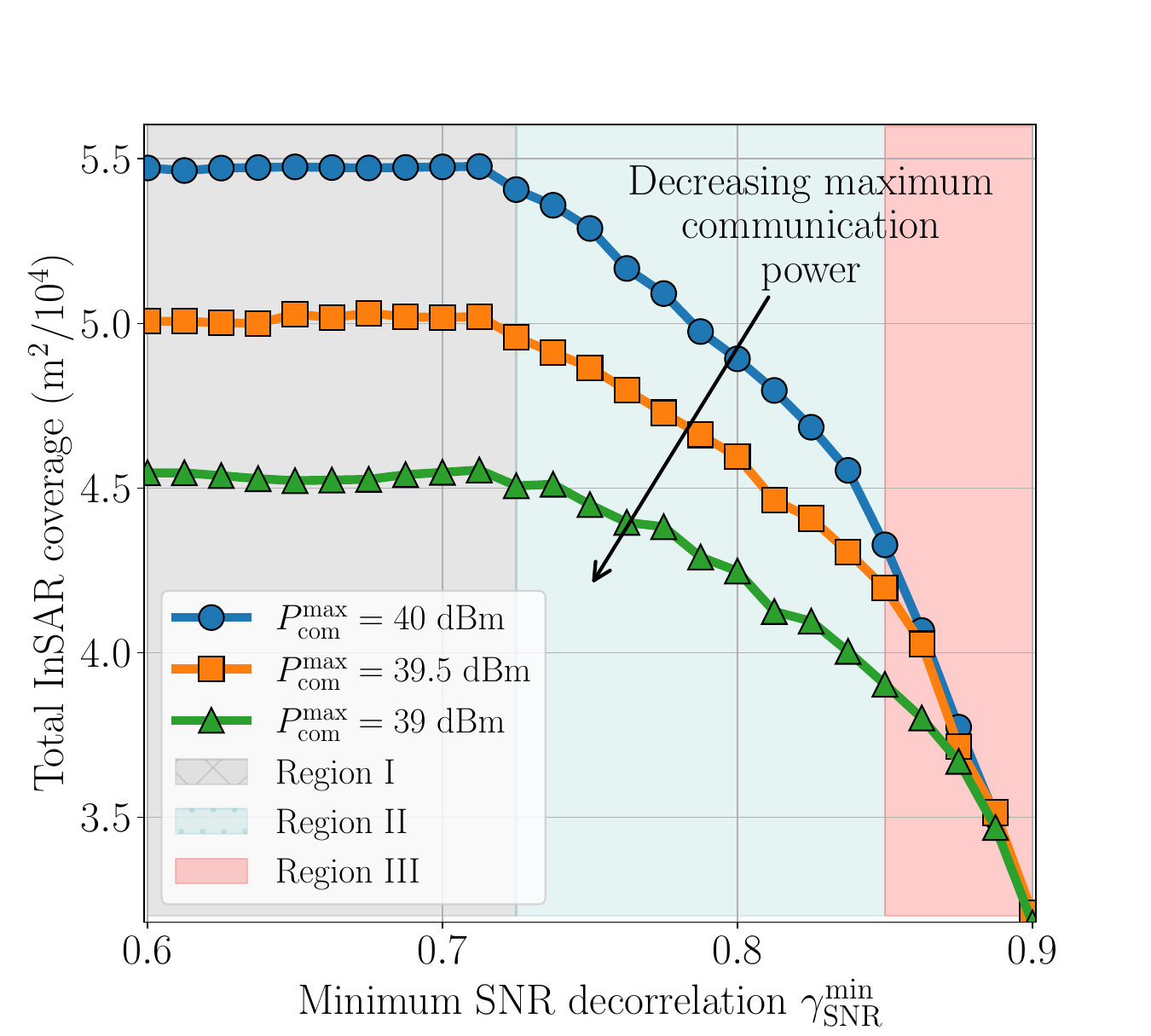}
	\else
	\includegraphics[width=0.9\columnwidth]{figures/Coverage_vs_SNR_decorrelation.pdf}
	\fi
	\caption{Interferometric sensing coverage versus the  {minimum required} \ac{snr} decorrelation $\gamma_{\rm SNR}^{\rm min}$ for different maximum communication transmit  {powers} $P_{\mathrm{com}}^{\mathrm{max}}$.}
	\label{fig:coverage_vs_snr_decorrelation}
\end{figure}
In Figure \ref{fig:coverage_vs_snr_decorrelation}, we  {show} the optimal \ac{insar} coverage  {obtained with \textbf{Algorithm} \ref{alg:ao}} as  {a}  function of the minimum  {required} \ac{snr} decorrelation for different maximum communication transmit  {powers}. The figure  {reveals} that the total sensing coverage decreases  {as the minimum required}  \ac{snr} decorrelation  {increases}, i.e.,  {as the} \ac{sar} coherence requirement  {becomes more stringent}. We distinguish between {three} regions in Figure \ref{fig:coverage_vs_snr_decorrelation}, namely Regions {I, II and III}. {In Region I, a high SNR decorrelation is allowed and  increasing the maximum communication transmit power results in better sensing coverage due to the extended  range  for communications. Therefore, in Region I, the system performance bottleneck is the air-to-ground communication. In Region II,  {stricter} sensing requirements in terms of \ac{sar} coherence  {are} imposed and the maximum communication power has a {more} moderate impact on the coverage compared to Region I. {Finally}, in Region III, a high coherence is required and constraint $\mathrm{C6}$ becomes active, i.e., sensing becomes the system performance bottleneck.  Consequently, the maximum communication transmit power has {little} impact on the coverage performance, which highlights the interplay between communication and sensing in the considered system.\par 
\begin{figure}[h]
	\centering
	\ifonecolumn
	\includegraphics[width=4 in]{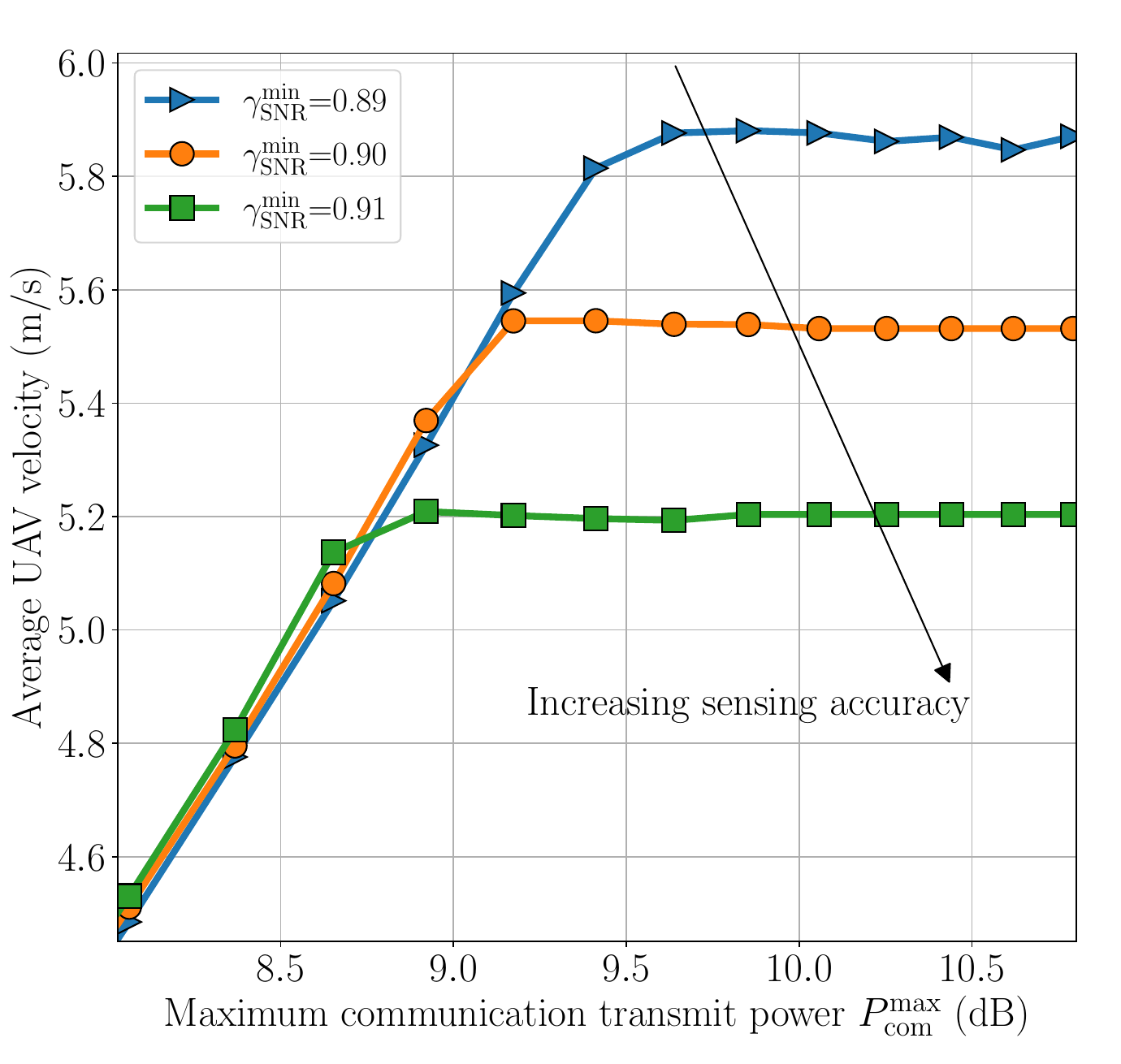}
	\else
	\includegraphics[width=0.85\columnwidth]{figures/Average_velocity_vs_Pcom_max.pdf}
	\fi
	\caption{Average \ac{uav} velocity as function of the maximum communication transmit power $P_{\rm com}^{\rm max}$ for different values of the minimum SNR decorrelation $\gamma_{\rm SNR}^{\rm min}$.}
	\label{fig:average_velocity_vs_maximum_communication_power}
\end{figure}
{In Figure \ref{fig:average_velocity_vs_maximum_communication_power}, we depict the average \ac{uav} velocity, given by ${\frac{1}{N-1}}\sum\limits_{n=1}^{N-1}v_y[n]$, as function of the maximum communication transmit power for different minimum SNR decorrelation thresholds. The figure shows that the average \ac{uav} velocity increases with the maximum communication transmit power due to the increased communication range for both \acp{uav}, which {enables} longer trajectories in azimuth direction. Moreover, the figure shows a strong impact of the sensing requirement on the drone speed. In fact, to achieve higher \ac{insar} coherence, {a lower} speed is preferred, as it allows for {improved} \ac{uav} \acp{snr}.  This further confirms the need for slowly moving drones for \ac{insar} imaging applications, especially when highly reliable sensing is demanded.\par}
\begin{figure}[h]
	\centering
	\ifonecolumn
	\includegraphics[width=3.5in]{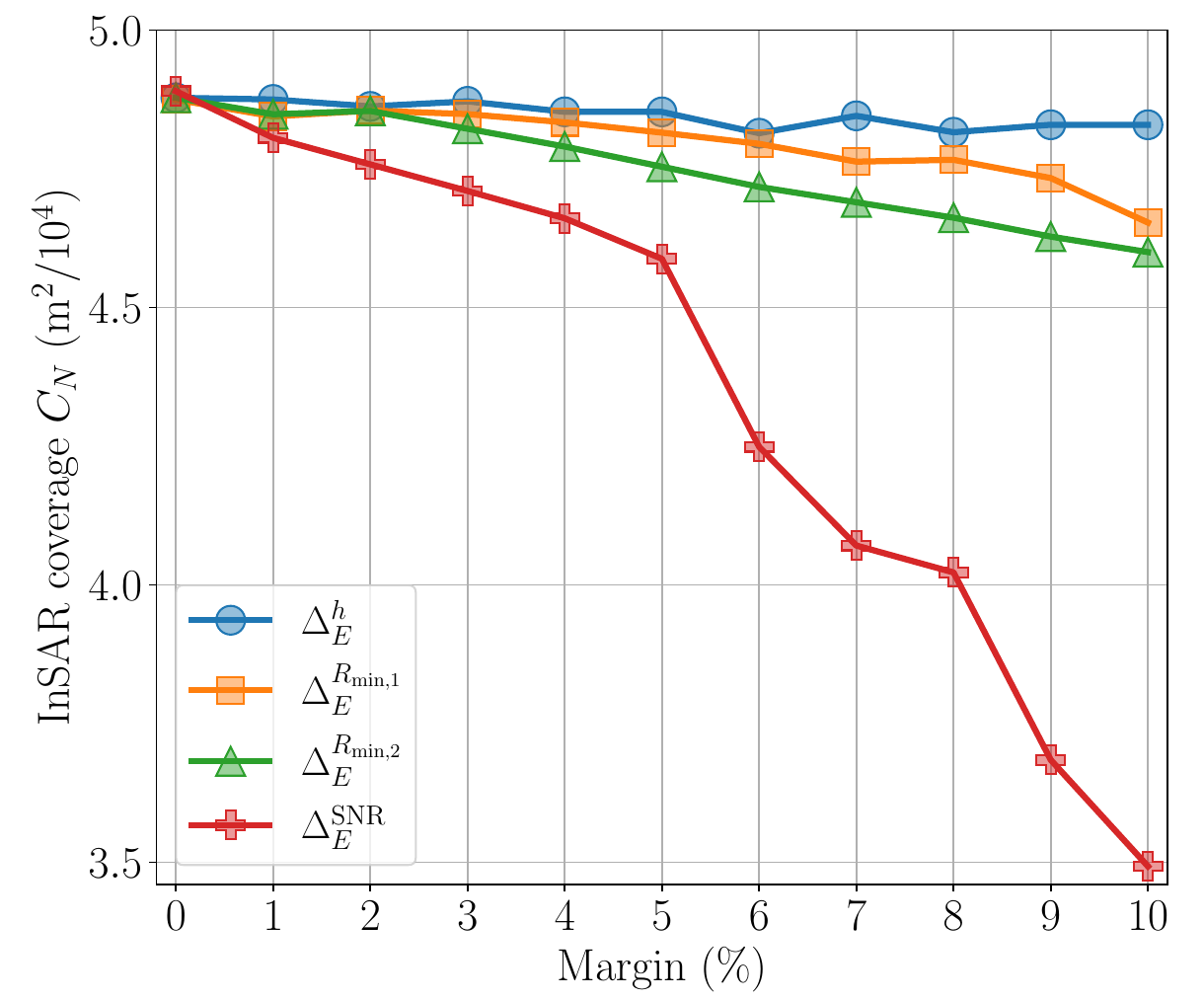}
	\else
	\includegraphics[width=0.8\columnwidth]{figures/sensitivity2.pdf}
	\fi
	\caption{Sensitivity analysis of \ac{insar} coverage when accounting for parameter uncertainties by making the constraints of problem $\mathrm{(P.1)}$ more stringent.}
	\label{fig:sensitivity2}
\end{figure}
{For the optimization of the considered \ac{uav}-\ac{insar} system, we had to assume specific values for the system parameters, see Table \ref{tab:system_parameters}. In practice, the actual values of these parameters may deviate from the assumed values. As a result, the solution obtained with \textbf{Algorithm} \ref{alg:ao} for the assumed values may violate the constraints of problem $\mathrm{(P.1)}$ for the actual parameter values. One approach to deal with this issue and make the proposed design robust to parameter uncertainties is to add safety margins to the constraints of problem  $\mathrm{(P.1)}$. For example, the minimum \ac{snr} decorrelation, $\gamma_{\mathrm{SNR}}^{\mathrm{min}}$, may be increased by a certain percentage value, $\Delta_E^{\mathrm{SNR}} \times 100 \%$, such that the new minimum required \ac{snr} becomes $(1+\Delta_E^{\mathrm{SNR}}) \gamma_{\rm SNR}^{\rm min}$. This ensures that after optimization based on the assumed parameter values, the $\mathrm{SNR}$ achieved for the actual parameter values does not fall below $\gamma_{\rm SNR}^{\rm min}$ as long as the deviation between assumed and actual values does not become exceedingly large. The price to be paid for this increased robustness is a decrease of the achievable coverage. In Figure \ref{fig:sensitivity2}, we investigate the impact of making constraints $\mathrm{C6}$, $\mathrm{C9}$, and $\mathrm{C11}$ more stringent to improve the robustness of the obtained solution. In particular, we increase $\gamma_{\rm SNR}^{\rm min}$, $R_{\mathrm{min},1}$, and $R_{\mathrm{min},2}$  to $(1+\Delta_E^{\mathrm{SNR}})\gamma_{\rm SNR}^{\rm min}$, $(1+\Delta_E^{R_{\mathrm{min},1}})R_{\mathrm{min},1}$, and $(1+\Delta_E^{R_{\mathrm{min},2}})R_{\mathrm{min},2}$, respectively. Additionally, we decrease $\Delta h^{\rm max}$ to $(1-\Delta_E^{h})\Delta h^{\rm max}$. Then, we show the \ac{insar} coverage as a function of $\Delta_E^{\mathrm{SNR}}$, $\Delta_E^{R_{\mathrm{min},1}}$,  $\Delta_E^{R_{\mathrm{min},2}}$, and $\Delta_E^{h}$. The figure shows that introducing a margin, $\Delta_E^{h}$, that accounts for potential deviations of the maximum height error does not  affect the performance. This is due to the fact that constraint $\mathrm{C9}$ is inactive for the system parameters considered. However, accounting for possible 10\% deviations of the minimum sensing data rate for the master and slave \acp{uav} comes with a 4.1\% and 5.4\% loss in total coverage, respectively. Additionally, the introduced \ac{snr} decorrelation margin, $\Delta_E^{\mathrm{SNR}}$, has a more significant effect on the performance, as a loss of 27.8\% in coverage is observed for a 10\% deviation. This illustrates the trade-off between the robustness of the obtained solution and the achievable coverage.
\section{Conclusion}
In this paper, we investigated joint formation and resource allocation optimization for bistatic \ac{uav}-based \ac{3d} {\ac{insar}} sensing. To assess the \ac{3d} sensing performance, {the} \ac{insar} coverage, interferometric coherence, \ac{hoa}, and relative height accuracy were introduced as relevant \ac{insar}-specific performance metrics. A non-convex non-smooth optimization problem was  formulated and solved for the maximization of the bistatic \ac{insar} ground coverage while enforcing energy, communication, and sensing performance constraints. {Our} simulation results confirmed the  superior performance of the proposed algorithm compared to {several}  benchmark schemes and emphasized the {relevance} of optimizing the velocity of both radar platforms and the look angle of the slave \ac{uav} for maximum \ac{insar} coverage.  We {revealed} the {importance of} low \ac{uav} speed for {high} \ac{insar} performance  and showed that large areas can be mapped and offloaded to ground in real time while ground object heights {can be} estimated with high precision. Lastly, we highlighted the significant impact of the \ac{uav}-\ac{gs} communication link on the \ac{uav} formation, {and} therefore, sensing performance.\newline {Potential topics for future work include the consideration of non-linear trajectories. Additionally, extending current results to the case of more than two \acp{uav} case is a promising research direction, for which, different communication schemes and sensing performance metrics can be used. Finally, considering a priori knowledge about the imaged scene in the optimization framework, such as the terrain type, moisture, or roughness, can also be explored.}}
\appendices
\ifarxiv
{\section{Proof of Proposition \ref{prop:equivalence}} \label{app:monotonic}
Constraint $\mathrm{C2}$ makes sub-problem $\mathrm{(P.1.b)}$ a one-dimensional optimization problem \ac{wrt} $z_1$, as it {specifies} a linear relation between {the} $x$- and $z$-coordinates of $U_1$. In fact, constraint $\mathrm{C2}$ imposes that $r_1(\mathbf{q}_1)=\frac{z_1}{\cos(\theta_1)}$, {and} therefore, constraints $\mathrm{C1}, \mathrm{C3}, \mathrm{C5}, $ and $\mathrm{C8}$ yield lower and upper bounds on $z_1$ that are collected  in variables $Z_{\rm max}$ and $Z_{\rm min}$, defined in (\ref{eq:Zmax}) and (\ref{eq:Zmin}), respectively. In particular, for constraint $\mathrm{C5}$, the lower and upper bounds on altitude $z_1$ are derived by  decomposing the baseline into its parallel and perpendicular components, i.e., $b(\mathbf{q}_1,\mathbf{q}_2)=\sqrt{b_{\bot}^2(\mathbf{q}_2)+b_{\parallel}^2(\mathbf{q}_1,\mathbf{q}_2)}$, where its parallel component is given by: 
\begin{equation} 
	b_{\parallel}(\mathbf{q}_1,\mathbf{q}_2)=
	b(\mathbf{q}_1,\mathbf{q}_2)  \sin\Big(\theta_1- \alpha(\mathbf{q}_1,\mathbf{q}_2)\Big).
\end{equation} 
 Note that, based on constraint $\mathrm{C2}$, it can be shown that the perpendicular baseline only depends on $\mathbf{q}_2$, see Appendix A in \cite{conf2}.  Hence, the baseline can be rewritten as {follows}:
\begin{equation}\label{eq:app:baseline}
		b^2(\mathbf{q}_1,\mathbf{q}_2)=\big(r_1(\mathbf{q}_1) -r_2(\mathbf{q}_2)\cos(\theta_1-\theta_2(\mathbf{q}_2)) \big)^2+ b_{\bot}^2(\mathbf{q}_2).
\end{equation}
Here, we only focus on the case where  $b_{\mathrm{min}}> b_{\bot}(\mathbf{q}_2)$, as constraint $\mathrm{C5}$ is always satisfied otherwise. Therefore, based on (\ref{eq:app:baseline}), the bounds on $z_1$ are given by: 
\begin{align}
  \frac{z_1}{\cos(\theta_1)}=r_1(\mathbf{q}_1)  \geq r_2(\mathbf{q}_2)\cos(\theta_1-\theta_2(\mathbf{q}_2)) + \sqrt{b^2_{\mathrm{min}}-b^2_{\bot}},\\ 
 \frac{z_1}{\cos(\theta_1)}=r_1(\mathbf{q}_1)  \leq r_2(\mathbf{q}_2)\cos(\theta_1-\theta_2(\mathbf{q}_2))- \sqrt{b^2_{\mathrm{min}}-b^2_{\bot}}.
\end{align}
Lastly, constraint $\overline{\mathrm{C11}}$ is derived by expanding constraint $\mathrm{C11}$ and rearranging its terms. Terms that are increasing \ac{wrt} $z_1$ are { assigned to} function ${a_1}$, whereas terms that are decreasing \ac{wrt} $z_1$ are grouped in function ${a_2}$.
\else 
\fi}
{\section{Proof of Proposition \ref{prop:convergence}} \label{app:convergence}
\textbf{Algorithm} \ref{alg:ao} is an \ac{ao}-based algorithm that maximizes the total \ac{insar} coverage, $C_N$, and consists of four major steps. First, in sub-problem $\mathrm{(P.1.a)}$, the location of the slave \ac{uav}, $\mathbf{q}_2$, is optimized based on \textbf{Algorithm} \ref{alg:pso}. Second, in sub-problem $\mathrm{(P.1.b)}$, the location of the  master \ac{uav}, $\mathbf{q}_1$, is optimized based on \textbf{Algorithm} \ref{alg:polyblock}. Third, in sub-problem $\mathrm{(P.1.c)}$, the remaining of optimization variables, $\{\mathbf{v}_y, \mathbf{P}_{\rm com,1}, \mathbf{P}_{\rm com,2}\}$, are optimized based on \textbf{Algorithm} \ref{alg:sca}. Lastly, the velocity vector is adjusted using a step size $\psi$, as proposed in (\ref{eq:step_size}).  To prove the convergence of \textbf{Algorithm} \ref{alg:ao}, we show that the objective function is (i) non-decreasing in each iteration and (ii) upper-bounded \cite{notes_ao}.
\newline First, \textbf{Algorithm} \ref{alg:pso} does not guarantee, in general, non-decreasing objective values, however, adding particle $\mathbf{q}_2^{(m-1)}$ to the population in iteration $m$ ensures non-decreasing \ac{insar} coverage, as $\mathbf{q}_2^{(m-1)}$ provides the worst-case objective value. Second, the polyblock outer approximation algorithm is also non-decreasing as the current best value, $\mathrm{CBV}$, is only updated when a better solution is found. Third, the \ac{sca} \textbf{Algorithm} \ref{alg:sca} is also non-decreasing since for any generated solution, $\mathbf{v}_y$, the initial feasible solution, $\mathbf{v}_y^{(m-1)}$, represents the worst-case objective value, i.e., $\sum\limits_{n=1}^{N-1}v_y^{(m-1)}[n]\leq \sum\limits_{n=1}^{N-1}v_y[n]$. Since the \ac{uav} formation is fixed for \textbf{Algorithm} \ref{alg:sca}, then, the coverage $C_N$ is directly proportional to the summation of the velocities across all time slots (see (\ref{eq:insar_coverage})).  Therefore,  we conclude that  $C_N(\mathbf{q}_1^{(m)},\mathbf{q}_2^{(m)},\mathbf{v}_y^{(m-1)})\leq C_N(\mathbf{q}_1^{(m)},\mathbf{q}_2^{(m)},\mathbf{v}_y), \forall m$. Lastly, the velocity vector, adjusted using the step size $\psi \in [0, 1]$ and denoted by $\mathbf{v}_y^{(m)}$, lies element-wise between $\mathbf{v}_y^{(m-1)}$ and $\mathbf{v}_y$, therefore, the inequality $\sum\limits_{n=1}^{N-1}v_y^{(m-1)}[n]\leq \sum\limits_{n=1}^{N-1}v_y^{(m)}[n]\leq \sum\limits_{n=1}^{N-1}v_y[n]$ holds, $\forall m$. This results in $C_N(\mathbf{q}_1^{(m)},\mathbf{q}_2^{(m)},\mathbf{v}_y^{(m-1)})\leq C_N(\mathbf{q}_1^{(m)},\mathbf{q}_2^{(m)},\mathbf{v}_y^{(m)}), \forall m$, and overall, a non-decreasing objective function for \textbf{Algorithm} \ref{alg:ao}. \newline
To prove (ii), we note that the objective function, defined by the common footprint width of the platforms, is naturally upper-bounded by the individual antenna footprint widths of both master and slave \acp{uav}. Thus, we can show that $ C_N(\mathbf{q}_1,\mathbf{q}_2,\mathbf{v}_y)\leq \big( S_{\rm far}(\mathbf{q}_{\rm max}) - S_{\rm near}(\mathbf{q}_{\rm max})\big) (N-1)v_{\rm max}\delta_t$ , where $\mathbf{q}_{\rm max}=(x_t-z_{\rm max}\tan(\theta_1),z_{\rm max})^T$, which concludes the proof.}
\vspace{-3mm}
\bibliographystyle{IEEEtran}
\bibliography{biblio}
\ifonecolumn
\else
\section*{Biography section}
\vspace{-12mm}
\begin{IEEEbiography}
	[{\includegraphics[width=1in,height=1.25in,clip,keepaspectratio]{figures/amine_photo.jpg}}]{Mohamed-Amine Lahmeri} (M’20) received
	the National Engineering Diploma degree in Signal and Systems from 
	Ecole Polytechnique de Tunisie, and the M.Sc.
	degree in electrical and computer engineering from
	the King Abdullah University of Science and Technology, Saudi Arabia. He is currently pursuing the Ph.D. degree with the Institute for Digital Communications (IDC), Friedrich-Alexander Universität Erlangen-Nürnberg, Germany. His current research interests include UAV-based sensing and communication, SAR, and mathematical optimization. 
\end{IEEEbiography}
\vspace{-7mm}
\begin{IEEEbiography}
	[{\includegraphics[width=1in,height=1.25in,clip,keepaspectratio]{figures/victor_photo.jpg}}]{Víctor Mustieles-Pérez} received the B.Sc. degree (Hons.) in telecommunications engineering from the University of Zaragoza, Zaragoza, Spain, in 2020, and the M.Sc. degrees (Hons.) in telecommunication engineering and signal theory and communications from the Universidad Politécnica de Madrid (UPM), Madrid, Spain, in 2022. He is currently pursuing the Ph.D. degree with the Friedrich-Alexander University Erlangen-Nuremberg, Erlangen, Germany.
	From 2020 to 2021, he was an Intern at the Microwaves and Radar Research Group (UPM), where he worked on radar systems for vital sign monitoring. Since 2022, he has been with the Microwaves and Radar Institute of the German Aerospace Center (DLR). His research interests include high-resolution SAR and interferometric SAR processing of data acquired using drone-borne radars, and the demonstration of novel SAR concepts using drones, especially in the context of future spaceborne distributed SAR missions.

\end{IEEEbiography}
\begin{IEEEbiography}
	[{\includegraphics[width=1in,height=1.25in,clip,keepaspectratio]{figures/martin_photo.jpg}}]{Martin Vossiek} received his Ph.D. degree from Ruhr-Universität Bochum, Germany, in 1996. In the same year, he joined Siemens Corporate Technology, Munich, Germany, where he led the Microwave Systems Group from 2000 to 2003. Since 2003, he has been a full professor at Clausthal University of Technology, Germany. Since 2011, he has served as the chair of the Institute of Microwaves and Photonics (LHFT) at Friedrich-Alexander-Universität Erlangen-Nürnberg (FAU),  Germany. Dr. Vossiek has authored or coauthored over 450 publications and holds more than 100 patents. He is a member of the German National Academy of Science and Engineering (acatech) and has been the spokesperson for the DFG review board 4.42 for electrical engineering and information technology since April 2024. Dr. Vossiek is actively involved with IEEE Microwave Theory and Technology (MTT) Technical Committees, including MTT-24, MTT-27, and MTT-29. He has received numerous best paper awards and distinctions, served on organizing and technical program committees for international conferences, and contributed as a reviewer for various technical journals. From 2013 to 2019, he was an associate editor for IEEE Transactions on Microwave Theory and Techniques, and since October 2022, he has been an associate editor-in-chief for IEEE Transactions on Radar Systems.
\end{IEEEbiography}
\vspace{-25mm}
\begin{IEEEbiography}
	[{\includegraphics[width=1in,height=1.25in,clip,keepaspectratio]{figures/gerhard_photo.jpg}}]{Gerhard Krieger} (M’04-SM’09-F’13) received the Dipl.-Ing. (M.S.) and Dr.-Ing. (Ph.D.) (Hons.) degrees in electrical and communication engineering from the Technical University of Munich, Germany, in 1992 and 1999, respectively. 
	From 1992 to 1999, he was with the Ludwig Maximilians University, Munich, conducting research on neuronal modeling and nonlinear information processing in vision systems. Since 1999, he has been with the Microwaves and Radar Institute of the German Aerospace Center (DLR), Germany. From 2001 to 2007, he led the New SAR Missions Group, pioneering bistatic and multistatic radar systems like TanDEM-X and developing advanced SAR imaging techniques. Since 2008, he has headed the Radar Concepts Department, with about 60 scientists focusing on new SAR techniques, missions, and applications. He served as Mission Engineer for TanDEM-X and made major contributions in the Tandem-L mission concept. Since 2019, he has held a professorship at Friedrich-Alexander-University Erlangen and authored over 100 peer-reviewed papers, 9 book chapters, 500 conference papers, and more than 30 patents.
	Prof. Krieger has been an Associate Editor for IEEE Transactions on Geoscience and Remote Sensing since 2012. He served as Technical Program Chair for the European Conference on Synthetic Aperture Radar in 2014 and 2024. He has received several awards, including two Best Paper Awards at the European Conference on Synthetic Aperture Radar, two IEEE GRSS Transactions Prize Paper Awards, and the W.R.G. Baker Prize Paper Award.
\end{IEEEbiography}
\vspace{-25mm}
\begin{IEEEbiography}
		[{\includegraphics[width=1in,height=1.25in,clip,keepaspectratio]{figures/robert_photo.jpg}}]{Robert Schober} (S'98, M'01, SM'08, F'10) received the Diplom (Univ.) and the Ph.D. degrees in electrical engineering from Friedrich-Alexander University of Erlangen-Nuremberg (FAU), Germany, in 1997 and 2000, respectively. From 2002 to 2011, he was a Professor and Canada Research Chair at the University of British Columbia (UBC), Vancouver, Canada. Since January 2012 he is an Alexander von Humboldt Professor and the Chair for Digital Communication at FAU. His research interests fall into the broad areas of Communication Theory, Wireless and Molecular Communications, and Statistical Signal Processing.
		Robert received several awards for his work including the 2002 Heinz Maier­ Leibnitz Award of the DFG, a 2006 UBC Killam Research Prize, a 2011 Alexander von Humboldt Professorship, and a Honorary Doctorate from Aristotle University of Thessaloniki, Greece, in 2024. Furthermore, he received numerous Best Paper Awards for his work including the 2022 ComSoc Stephen O. Rice Prize and the 2023 ComSoc Leonard G. Abraham Prize. Robert is a Fellow of the Canadian Academy of Engineering, and a Fellow of the Engineering Institute of Canada.
		He served as Editor-in-Chief of the IEEE Transactions on Communications, VP Publications of the IEEE Communication Society (ComSoc), ComSoc Member at Large, and ComSoc Treasurer. Currently, he serves as Senior Editor of the Proceedings of the IEEE and as ComSoc President.
\end{IEEEbiography}
\fi
\end{document}